\documentclass[12pt, draftclsnofoot, onecolumn]{IEEEtran}
%
\ifCLASSINFOpdf
\else
\fi
%
%
\usepackage{graphicx}
\usepackage[numbers,sort&compress]{natbib}
\usepackage{amsthm}

\usepackage{amsmath}
\usepackage{amssymb}
\usepackage{bbm}
\usepackage{graphicx}
\usepackage[space]{grffile}
\usepackage{subfig}
\usepackage[font=scriptsize]{caption}

\theoremstyle{remark}
\newtheorem{thm}{$\;\;\;$Theorem}
\newtheorem{lem}{$\;\;\;$Lemma}
\newtheorem{prop}{$\;\;\;$Proposition}
\newtheorem{cor}{$\;\;\;$Corollary}
\newtheorem{defn}{$\;\;\;$Definition}

\usepackage{xcolor}

\hyphenation{op-tical net-works semi-conduc-tor}

\begin{document}
%
\title{Reversing the Curse of Densification in mmWave Networks Through Spatial Multiplexing}
%
%
%

\author{Shuqiao~Jia and
        Behnaam~Aazhang~\IEEEmembership{}}

%
%

\markboth{draft}%
{Shell \MakeLowercase{\textit{et al.}}: Bare Demo of IEEEtran.cls for Journals}
%



\maketitle

\vspace*{-20pt}
\begin{abstract}
The gold standard of a wireless network is that the throughput increases linearly with the density of access points (APs). However, such a linear throughput gain is suspended in the 5G mmWave network mainly due to the short communication distances in mmWave bands and the dense deployments of mmWave APs. As being operated in the interference-limited regime, the aggregate interference resulted from the increasing mmWave APs will gradually become the network performance bottleneck, which leads to the saturation of the throughput.
In this paper, we propose to overcome the densification plateau of a mmWave network by employing spatial multiplexing at APs.
To study the effect of spatial multiplexing on mmWave networks, we first derive the coverage probability as a function of spatial multiplexing gain. 
The fixed-rate coding scheme is then used to provide the network throughput. 
We also introduce the concept of densification gain to capture the improvement in network throughput achieved through the AP densification.
Our results indicate that without the spatial multiplexing at APs, the throughput of mmWave networks will reach the plateau when the density of APs becomes sufficiently large.
By enabling the spatial multiplexing at APs, however, the mmWave network can continuously harvest the throughput gains as the number of APs grows.
By deriving the upper bound for the throughput of mmWave networks, we quantify the potential throughput improvement as the spatial multiplexing gain increases.
However, our numerical results show that such a potential throughput gain cannot be explored by the fixed-rate coding scheme. We then demonstrate the necessity for deploying the multi-rate coding scheme in mmWave networks, especially when the spatial multiplexing gain at mmWave APs is large.

\end{abstract}

\begin{IEEEkeywords}
mmWave network, relative density, spatial multiplexing gain, fixed-rate coding scheme, multi-rate coding scheme, throughput upper bound, densification gain
\end{IEEEkeywords}

%
\IEEEpeerreviewmaketitle

\section{Introduction}
Densifying the deployments of access points (APs) is one of the most significant drivers for increasing the throughput of wireless network \cite{gupta2000capacity,tse2005fundamentals}. As the radio resources can be reused across smaller spatial scales, the network throughput is expected to increase linearly with the AP density \cite{tse2005fundamentals}.
However, the sustainability of such a linear throughput gain with the AP densification has been challenged in the recent studies of 5G mmWave networks \cite{andrews2017modeling,bai2015coverage,jia2016impact,bai2014analysis,saha2018integrated,alammouri2018unified}, mainly due to the inherently high density of mmWave LAN infrastructure \cite{jia2016impact,bai2015coverage,alammouri2018sinr,rappaport2013broadband,rappaport2013millimeter,rappaport2015wideband}. 
Specifically, the mmWave network is most likely to be operated in the interference-limited regime and thus the throughput will reach the plateau when the aggregate interference from the APs becomes overwhelming \cite{bai2015coverage,andrews2017modeling}.
To suppress the interference, the hybrid precoding is performed at the mmWave device, where the hybrid precoding technique is the combination of analog beamforming and spatial multiplexing \cite{andrews2017modeling,bai2015coverage,alkhateeb2015limited,sun2014mimo}. However, it has been proved in \cite{alammouri2020escaping} that the densification plateau cannot be avoided with the finite analog beamforming gain. In this paper, we propose to overcome the throughput densification plateau by deploying the spatial multiplexing at mmWave APs. With that goal, we provide the comprehensive performance analysis for the downlink mmWave networks.

In the context of dense APs and strong interference, we use the stochastic geometry to facilitate the performance analysis for the mmWave network, where the APs and end-users are modeled as two independent Poisson Point processes (PPPs).
The use of stochastic geometry in modeling the communication system has been widely accepted due to its accuracy and tractability in analyzing the signal-to-interference-plus-noise ratio (SINR) distribution for the wireless network \cite{andrews2011tractable,haenggi2012stochastic}.
In \cite{andrews2017modeling, bai2015coverage, di2015stochastic, singh2015tractable}, the stochastic geometry has been extended to study the SINR distribution for mmWave networks by incorporating the distinguishing features of the mmWave system. 
For example, to model the high penetration loss of mmWave signal, the study in \cite{di2015stochastic, bai2015coverage, andrews2017modeling} introduces the mmWave link state, namely the line-of-sight (LOS) and the non-line-of-sight (NLOS) state. 
In \cite{alkhateeb2015limited, bai2015coverage, di2015stochastic, kulkarni2017performance,saha2018integrated}, the mmWave link is modeled with the sectorized beam patterns, which characterizes the analog beamforming gain of the hybrid precoding performed at the mmWave device.
In addition to the analog beamforming gain, we introduce the spatial multiplexing gain to capture the impact of multiple RF chains on the mmWave hybrid precoding \cite{jia2016impact}.

To take account of the blockage effects on the mmWave network, we use the relative density to quantify the number of APs throughout this paper. The relative density is defined as the average number of LOS APs that an end-user can observe, which was first introduced in \cite{bai2015coverage}.
In Section II, we propose to model the mmWave network by incorporating the hybrid precoding architecture in \cite{alkhateeb2015limited}, where both the analog beamforming gain and the spatial multiplexing gain are considered. 
In Section III, we derive the analytical expressions of the coverage probability, the throughput of the fixed-rate coding scheme, and the throughput upper bound for mmWave networks. 
The analysis in Section III is then used to investigate the effect of AP densification on mmWave networks in Section IV. Also, the results of Section III provide the basis for the performance evaluation of the mmWave network throughput in Section V.

Our results provide two key system design insights for the mmWave network. In Section IV, we analyze how the throughput of a mmWave network is affected by the AP density, where we assume that the fixed-rate coding scheme is employed. Moreover, we introduce a key performance indicator, termed densification gain, to capture the throughput scaling law as the AP density increases. 
Our analysis shows that without the spatial multiplexing at the APs, the throughput of a mmWave network will reach the plateau when the AP density grows very large. However, by employing spatial multiplexing, the mmWave network can continuously harvest the throughput gain from the AP densification.

In Section V, we numerically quantify the throughput of different coding schemes for the mmWave network.
The fixed-rate coding scheme with the optimal rate threshold is used to provide the tightest throughput lower bound, while the multi-rate coding scheme is developed for further throughput improvements. We also numerically compare these two schemes to the throughput upper bound as the spatial multiplexing gain increases.
Our results suggest that for the mmWave network, the performance of the fixed-rate coding scheme is less satisfactory with the larger spatial multiplexing gain, where the multi-rate coding scheme is required to explore the potential throughput gain.

\section{Modeling mmWave Network}

We consider the downlink mmWave network, where the end-users and APs are distributed in the area according to independent PPPs $\Phi_{\text{U}}$ and $\Phi_{\text{A}}$ with intensities $\lambda_{\text{U}}$ and $\lambda_{\text{A}}$, respectively. 
Each mmWave AP is assumed to be equipped with $k$ RF chains. The mmWave end-user is assumed to be equipped with one RF chain due to its small form factor. 

 \subsection{Hybrid Precoding}
 The mmWave hybrid precoding consists of the analog beamforming at the antenna arrays and spatial multiplexing in the baseband \cite{andrews2017modeling,bai2015coverage}. 
Assume that the mmWave AP is equipped with $k>1$ RF chains, then the AP can simultaneously transmit independent data streams to multiple end-users.
 In the following, we introduce the spatial multiplexing gain to denote the number of data streams transmitted by the mmWave AP. 
Note that the spatial multiplexing gain captures the impact of multiple RF chains on the hybrid precoding.
 \begin{defn}
 	The spatial multiplexing gain for a mmWave AP is defined as the number of independent data streams transmitted by the AP. 
 	\label{def: spatial multiplexing gain}
 \end{defn}
Note that the spatial multiplexing gain of a mmWave AP is limited by the number of RF chains equipped by the AP. Assume that all the RF chains are activated, then the spatial multiplexing gain at the mmWave AP equals to $k$.
 For the mmWave end-user, the spatial multiplexing gain is always one since the end-user has only one RF chain due to the small form factor. 
 It follows that an AP can simultaneously transmit to $k$ end-users.
 
 For the mmWave device, the analog beamforming can be characterized by the main lobe gain, the main lobe beamwidth and the side lobe gain \cite{andrews2017modeling}. We denote the main lobe gain of the mmWave AP as $G_{\text{A}}$, while the side lobe gain is denoted by $g_{\text{A}}$, $g_{\text{A}} \ll G_{\text{A}}$. Let $\theta_{\text{A}}$ denote the main lobe width of the mmWave AP, where we assume $\theta_{\text{A}} \leq \frac{2\pi}{k}$.
 For the mmWave end-user, we denote $\theta_{\text{U}}$ as the main lobe width, where the analog beamforming gain of the main lobe and the side lobe are denoted by $G_{\text{U}}$ and $g_{\text{U}}$, respectively. 
 
 Note that the hybrid precoding is the combination of the spatial multiplexing and analog beamforming \cite{bai2015coverage,andrews2017modeling}. 
In the implementation of hybrid precoding, the area centered at a mmWave AP is equally partitioned into $k$ spatially orthogonal sectors with $\frac{2\pi}{k}$ angular width, as shown in Fig.\ref{fig:spatial multiplexing gain}. 
We assume that all mmWave APs have full buffers, meaning that there is at least one user at each sector of the AP.
Due to the lack of scattering, the mmWave channel is almost deterministic, which means that the channels of nearby end-users are highly correlated \cite{bai2015coverage,saha2018integrated}.
To minimize the inter-beam interference in such a scenario, we assume that a mmWave AP assigns one analog beam to each sector, where the end-users within the same sector are scheduled via the orthogonal time-division and frequency-division techniques.
We remark that the end-users located in different sectors can reuse the same time-frequency resources at the AP, which has been referred to as the sectorization gain \cite{yang2000sectorization}. 

 We assume that the analog beam of an end-user and its intended AP are aligned. However, the analog beams of interfering APs are considered to be randomly oriented for the end-user. Denote the hybrid precoding gain of the mmWave AP as $\mathbb{G}_\text{A}$. For an unintended mmWave link, the hybrid precoding gain of the AP can then be written as $\mathbb{P}(\mathbb{G}_\text{A} = G_\text{A}) = \frac{k\theta_{\text{A}}}{2\pi}$ and $\mathbb{P}(\mathbb{G}_\text{A} = g_\text{A}) = 1 - \frac{k\theta_{\text{A}}}{2\pi}$, while the hybrid precoding gain of the end-user is denoted by $\mathbb{G}_\text{U}$ with $\mathbb{P}(\mathbb{G}_\text{U} = G_\text{U}) = \frac{\theta_{\text{U}}}{2\pi}$ and $\mathbb{P}(\mathbb{G}_\text{U} = g_\text{U}) = 1 - \frac{\theta_{\text{U}}}{2\pi}$.
 In this paper, we focus on the effect of the spatial multiplexing gain $k$ on the performance of mmWave network. Thus, the analog beamforming gains are assumed to be finite constants.
 
 \begin{figure}
 	\vspace*{-10pt}
 	\centering
 	\subfloat[{\scriptsize AP with $k=3$ RF chains i.e. spatial multiplexing gain $k = 3$.} \label{Figure: AP topology with k = 3}]{%
 		\includegraphics[width=0.48\linewidth]{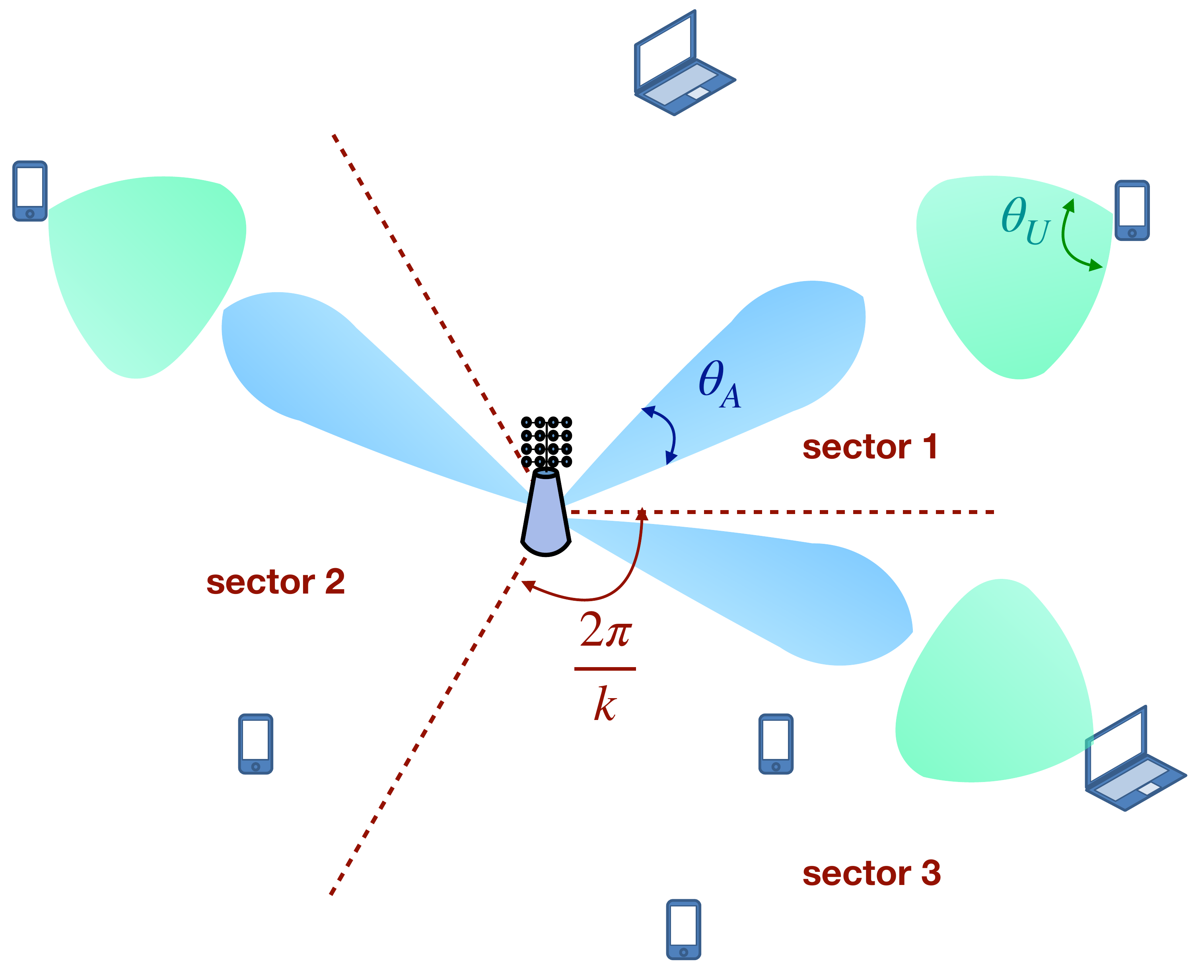}}
 	\hfill
 	\subfloat[{\scriptsize AP with $k=6$ RF chains i.e. spatial multiplexing gain $k = 6$.}\label{Figure: AP topology with k = 6}]{%
 		\includegraphics[width=0.48\linewidth]{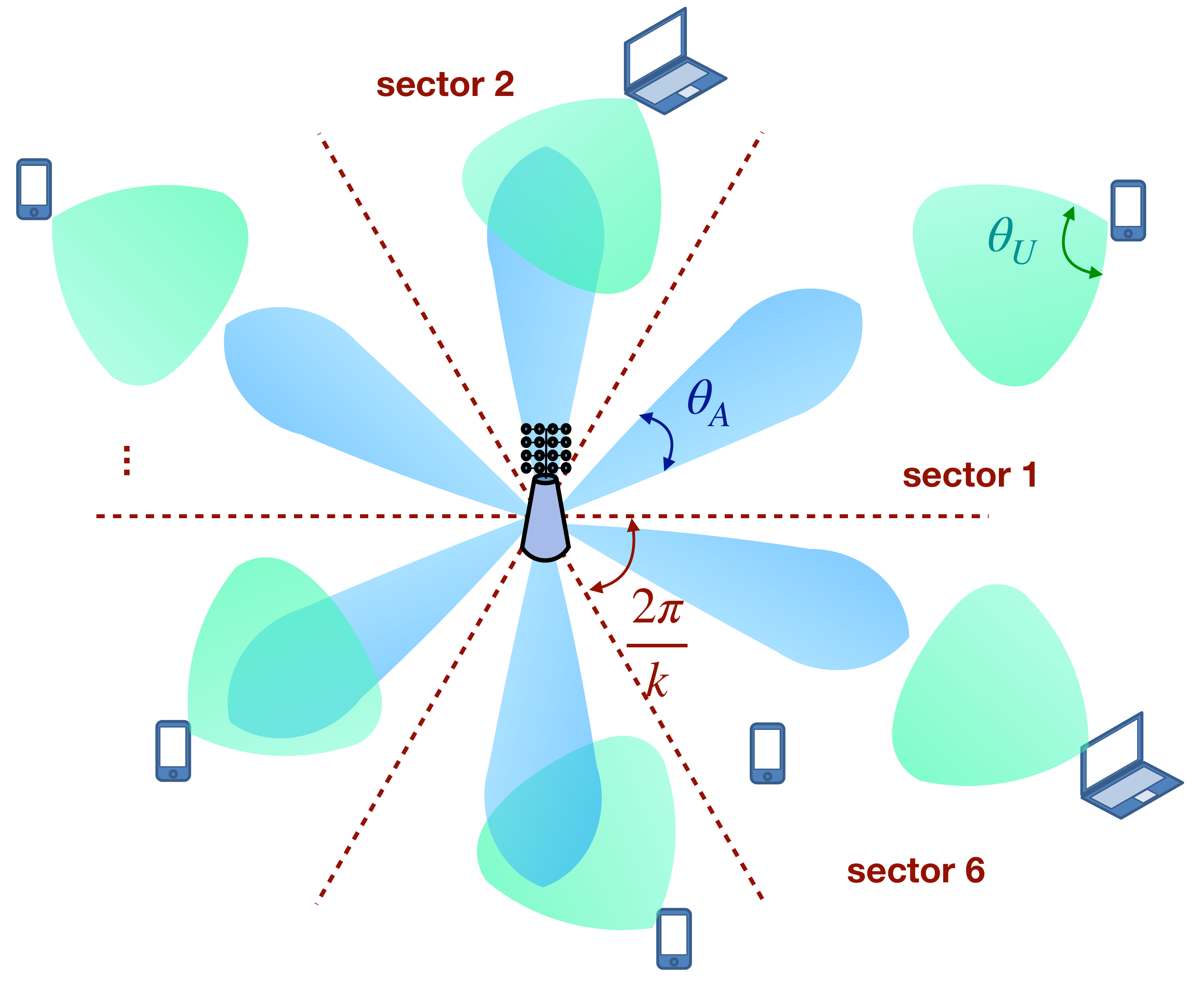}}
 	\caption{\textcolor{black}{Illustration of hybrid precoding at mmWave AP with different number of RF chains}}
 	\label{fig:spatial multiplexing gain} 
 	\vspace*{-20pt}
 \end{figure}
 
\subsection{Relative Density}
The mmWave system experiences a much higher penetration loss than the sub-6GHz system, which leads to dramatically different path loss for the line-of-sight (LOS) and non-line-of-sight (NLOS) mmWave links. 
Denote $\ell(r)$ as the path loss of a mmWave link with length $r$.
To capture the blockage effects in the mmWave network, the path loss model $\ell(r)$ has two states, namely a LOS state and a NLOS state \cite{singh2015tractable,andrews2017modeling}. If the mmWave link is LOS, then the path loss $\ell(r)$ is proportional to $r^{\alpha_{\text{L}}}$ \cite{singh2015tractable}. Such a model can be applied to a NLOS link by replacing the path loss exponent $\alpha_\text{L}$ with $\alpha_{\text{N}}$ \cite{singh2015tractable,andrews2017modeling,rappaport2013millimeter}. 

For the mmWave network with dense deployments of APs, it is shown in \cite{bai2015coverage,andrews2017modeling} that the path loss model can be further simplified to an equivalent LOS-ball of radius $R_\text{B}$ with the inverse
\begin{equation}
\ell(r)^{-1} = 
\left\lbrace 
\begin{split}
& r^{-\alpha_\text{L}}, \; \text{if}\;r\leq R_\text{B}\\
& 0,  \qquad \text{if}\;r> R_\text{B}
\end{split},
\right.
\label{LOS-ball} 
\end{equation}
with the path loss exponent $\alpha_\text{L}\leq 2$.
\textcolor{black}{Moreover, the mmWave link is deterministic if the density of APs is large, which means that the mmWave channel fading is negligible \cite{bai2015coverage}.
By ignoring the channel fading, the coverage of a mmWave AP is equivalent to the intersection of the Voronoi cell and the disk of radius $R_\text{B}$, as shown in Fig.\ref{fig:voronoi_diagram}.
For the mmWave end-user, we then introduce the relative density to characterize the number of mmWave APs within its LOS region.}
\begin{prop}
	In a mmWave network, the relative density $\psi$ is defined as the average number of  LOS APs that an end-user can observe. Given the intensity $\lambda_{\text{A}}$ of $\Phi_{{\text{A}}}$, the relative density can then be given as 
	\begin{eqnarray}
	\psi = \pi \lambda_{\text{A}} R_\text{B}^2,
	\label{eq: relative density}
	\end{eqnarray}
	where $R_\text{B}$ is the radius of LOS-ball in (\ref{LOS-ball}).
		\label{prop: relative density}
\end{prop}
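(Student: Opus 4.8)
The plan is to reduce the statement to the elementary fact that the number of points of a homogeneous PPP falling inside a fixed bounded region is Poisson-distributed with mean equal to the intensity times the area of that region. First I would place a typical end-user at the origin $o$. Because $\Phi_{\text{U}}$ and $\Phi_{\text{A}}$ are independent PPPs, conditioning on a user located at $o$ leaves the law of the AP process $\Phi_{\text{A}}$ unchanged (this is Slivnyak's theorem applied to independent processes), so the statistics of the APs seen by the typical user coincide with those of the unconditioned process $\Phi_{\text{A}}$.

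Next I would translate the phrase ``LOS APs that an end-user can observe'' into a precise geometric event using the LOS-ball model (\ref{LOS-ball}). By that model, the link from the origin to an AP located at distance $r$ satisfies $\ell(r)^{-1} = r^{-\alpha_\text{L}} > 0$ exactly when $r \leq R_\text{B}$, and $\ell(r)^{-1} = 0$ when $r > R_\text{B}$, so that only APs inside the LOS-ball contribute an observable signal. Hence the set of LOS APs seen by the user is precisely $\Phi_{\text{A}} \cap \mathcal{B}(o, R_\text{B})$, the restriction of the AP process to the disk of radius $R_\text{B}$ centered at the origin.

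The relative density $\psi$ is then, by definition, the expected cardinality of this restricted point set. Applying the mean-measure (Campbell) formula for a homogeneous PPP of intensity $\lambda_{\text{A}}$ yields $\mathbb{E}\!\left[\Phi_{\text{A}}(\mathcal{B}(o, R_\text{B}))\right] = \lambda_{\text{A}}\,\lvert \mathcal{B}(o, R_\text{B})\rvert = \lambda_{\text{A}}\,\pi R_\text{B}^2$, which is exactly (\ref{eq: relative density}).

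I expect no serious obstacle, as the result is essentially the definition of $\psi$ unwound against the LOS-ball simplification. The only point requiring care is the conditioning step: one must justify that evaluating the ``number of APs observed by a typical user'' through Palm calculus reduces to computing the mean measure of $\Phi_{\text{A}}$ on a fixed disk. This reduction is guaranteed by the independence of $\Phi_{\text{U}}$ and $\Phi_{\text{A}}$, so the Palm conditioning contributes nothing beyond fixing the center of the LOS-ball at the origin.
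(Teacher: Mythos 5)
Your proposal is correct and follows essentially the same route as the paper: the paper likewise reduces the claim to the typical-user viewpoint (invoking translation invariance of the PPP where you invoke Slivnyak's theorem and independence of $\Phi_{\text{U}}$ and $\Phi_{\text{A}}$, two standard justifications of the same reduction) and then applies Campbell's theorem to compute the mean number of points of $\Phi_{\text{A}}$ in the disk of radius $R_\text{B}$, giving $\pi\lambda_{\text{A}}R_\text{B}^2$. Your write-up is somewhat more explicit about translating ``LOS APs observed'' into the event that an AP lies in the LOS-ball, but the substance is identical.
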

 \vspace{-10pt}
\begin{proof}
	Note that $\Phi_{\text{A}}$ follows a PPP, which is translation-invariant \cite{haenggi2012stochastic}. Thus, the average number of LOS APs is identical for each mmWave end-user. The result then follows the Campbell's Theorem for sums on a PPP, as shown in \cite[Theorem 4.1]{haenggi2012stochastic}.
\end{proof}
It can be observed from (\ref{eq: relative density}) that with a fixed $R_\text{B}$, the relative density $\psi$ is proportional to the intensity of mmWave APs $\lambda_{\text{A}}$, where $R_\text{B}$ is determined by the propagation environment \cite{bai2015coverage}. In our analysis, $R_\text{B}$ is assumed to be a constant. It follows that the increase in the relative density $\psi$ in (\ref{eq: relative density}) is equivalent to the increase in the AP density $\lambda_{\text{A}}$. Thus, we use $\psi$ and $\lambda_{\text{A}}$ alternatively in the sequel.

\begin{figure}[!t]
	\vspace{-0.1cm}
	\centering
	\includegraphics[width=2.5in]{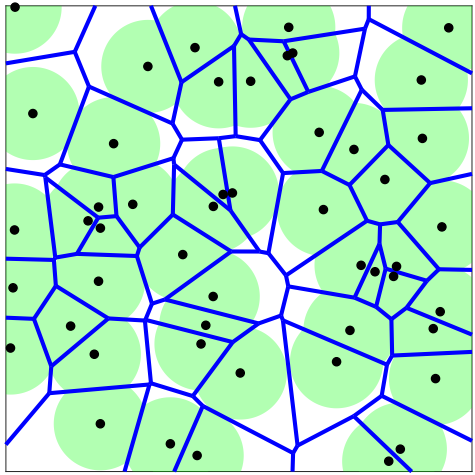}
	\caption{Coverage of mmWave APs. Black dots represent the mmWave APs. For the end-user in the green area, it will be served by the AP located in the same Voronoi cell. The end-user in the white area is out of coverage.}
	\label{fig:voronoi_diagram} 
	\vspace{-20pt}
\end{figure}

\subsection{SINR and Coverage Probability}
In an interference-limited network, the received SINR is considered as one of the most important performance indicators \cite{haenggi2012stochastic,andrews2011tractable}.
Given the SINR threshold $\tau$, the coverage probability of mmWave network with the relative density $\psi$ is then defined as the probability that a randomly chosen end-user can achieve a SINR larger than $\tau$ i.e. 
\begin{equation}
\mathcal{C}_\psi(\tau,k) \triangleq \mathbb{P}\left(\text{SINR}>\tau \left| \mathbb{P}(\mathbb{G}_\text{A} = G_\text{A}) = k\cdot\frac{\theta_{\text{A}}}{2\pi} \right. \right).
\label{def: P_cov}
\end{equation}
Obviously, the coverage probability $\mathcal{C}_\psi(\tau,k)$ increases as the SINR threshold $\tau$ decreases.
Note that the distribution of mmWave APs $\Phi_{\text{A}}$ follows a PPP, which is translation-invariant \cite{haenggi2012stochastic}. Thus the SINR distribution of each end-user is identical. By randomly selecting an end-user from $\Phi_{\text{U}}$, we then shift the origin of our coordinate system to the location of that end-user.

Assume that the end-user at the origin is associated with the mmWave AP located at $x_0$. 
To maximize the signal strength at the origin, the location $x_0$ is selected such that $x_0 = \text{arg}\max\limits_{x\in\Phi_{{\text{A}}}}\ell(|x|)^{-1}$. Such a strategy is widely adopted in the implementation of wireless network, which is known as the minimum path loss association rule \cite{andrews2011tractable,haenggi2012stochastic,bai2015coverage,di2015stochastic,andrews2017modeling,saha2018integrated}.
By following $x_0 = \text{arg}\max\limits_{x\in\Phi_{{\text{A}}}}\ell(|x|)^{-1}$, all the interfering APs are farther than distance $|x_0|$.
Given $x_0$, the received power at the origin resulted by the interfering APs can then be written as
\begin{eqnarray}
\mathcal{I}&=& \sum_{x \in \Phi_{\text{A}} \cap \mathcal{B}(|x_0|,R_\text{B})} \mathbb{G}_\text{A}\mathbb{G}_\text{U}|x|^{-\alpha_{\text{L}}},
\label{eq: interference of LOS}
\end{eqnarray}
where $\mathcal{B}(|x_0|,R_\text{B})$ denotes the ring centered at the origin and of the radius ranging from $|x_0|$ to $R_\text{B}$; $\mathbb{G}_\text{A}, \mathbb{G}_\text{U}$ represent the hybrid precoding gain of the mmWave AP and end-user, respectively. 
Conditioning on $x_0$, the coverage probability $\mathcal{C}_\psi(\tau,k)$ in (\ref{def: P_cov}) can then be written as
\begin{equation}
\mathcal{C}_\psi(\tau,k)
=  \mathbb{P}\left(G_{\text{A}}G_{\text{U}} |x_0|^{-\alpha_{\text{L}}} > {\tau \mathcal{I}} \right).
\label{eq: LOS coverage probability}
\end{equation}
Here, the thermal noise is ignored \cite{bai2015coverage}. Due to $G_\text{A} \gg g_\text{A}$, the signal from the main lobe of $x\in\Phi_{\text{A}}\backslash\{x_0\}$ is considered as the interference by the end-user at the origin, while the signal from side lobes is treated as the background noise.
Next, we derive the coverage probability in (\ref{eq: LOS coverage probability}) based on Alzer's Lemma in \cite{alzer1997some}.
\begin{thm}
	For the mmWave network of relative density $\psi$ and spatial multiplexing gain $k$, the coverage probability $\mathcal{C}_\psi(\tau,k)$ at the SINR threshold $\tau$ is given as
	\begin{eqnarray}
	\mathcal{C}_{\psi}(\tau,k) 
	&=&
	\psi e^{-\psi}{\sum_{j =1}^{\mu}(-1)^{j+1}{\mu\choose j}} 
	\int_{0}^{1} \exp 
	\left\lbrace
	\psi 
	\mathbb{E}
	\left[
	\Lambda_j
	\left(\tau,r,\frac{\mathbb{G}_\text{A} \mathbb{G}_\text{U}}{G_\text{A}G_\text{U}} \right)
	\right]
	\right\rbrace
	\text{d} r,
	\label{eq: simplified coverage probability}
	\end{eqnarray}
	where $\Lambda_j(\tau,r,\cdot)$ is the incomplete gamma function defined in (\ref{eq: taylor parameter}); $\mu$ is the shape parameter of small scale fading channel; $\mathbb{G}_\text{A}, \mathbb{G}_\text{U}$ denote the hybrid precoding gain of mmWave AP and mmWave end-user, respectively; the thermal noise is ignored.
	\label{thm: coverage probability for dense mmWave network}
\end{thm}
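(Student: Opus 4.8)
The plan is to evaluate (\ref{eq: LOS coverage probability}) by accounting for the Nakagami-$\mu$ small-scale fading on the desired signal power (the shape parameter $\mu$ appearing in the statement), writing the useful signal as $h_0 G_\text{A}G_\text{U}|x_0|^{-\alpha_\text{L}}$ with $h_0$ Gamma-distributed of shape $\mu$. Conditioning on the serving distance $|x_0|$, the coverage probability becomes the expectation of $\mathbb{P}\big(h_0 > \tau\mathcal{I}|x_0|^{\alpha_\text{L}}/(G_\text{A}G_\text{U})\big)$ over the fading, the interference field $\mathcal{I}$, and $|x_0|$. The three technical ingredients I would combine are Alzer's Lemma for the Gamma signal power, the probability generating functional of the PPP $\Phi_\text{A}$ for the interference, and the nearest-LOS-AP distance law induced by the minimum-path-loss association.

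First I would invoke Alzer's Lemma from \cite{alzer1997some}: since $h_0$ has integer shape $\mu$, its complementary CDF is tightly approximated by the alternating binomial sum $\mathbb{P}(h_0>x)\approx\sum_{j=1}^{\mu}(-1)^{j+1}\binom{\mu}{j}e^{-j\beta x}$ with the Alzer constant $\beta=\mu(\mu!)^{-1/\mu}$. Substituting $x=\tau\mathcal{I}|x_0|^{\alpha_\text{L}}/(G_\text{A}G_\text{U})$ immediately yields the prefactor $\sum_{j=1}^{\mu}(-1)^{j+1}\binom{\mu}{j}$ of (\ref{eq: simplified coverage probability}) and converts each summand into a Laplace transform $\mathbb{E}[e^{-s_j\mathcal{I}}]$ of the aggregate interference at $s_j=j\beta\tau|x_0|^{\alpha_\text{L}}/(G_\text{A}G_\text{U})$.

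Next I would evaluate this Laplace transform. Because $\mathcal{I}$ in (\ref{eq: interference of LOS}) is shot noise generated by $\Phi_\text{A}$ over the ring $\mathcal{B}(|x_0|,R_\text{B})$ carrying the independent beamforming marks $\mathbb{G}_\text{A},\mathbb{G}_\text{U}$, the PPP generating functional gives $\mathbb{E}[e^{-s_j\mathcal{I}}]=\exp\!\big(-2\pi\lambda_\text{A}\int_{|x_0|}^{R_\text{B}}\mathbb{E}_{\mathbb{G}}[1-e^{-s_j\mathbb{G}_\text{A}\mathbb{G}_\text{U}t^{-\alpha_\text{L}}}]\,t\,\mathrm{d}t\big)$, where $\mathbb{E}_{\mathbb{G}}$ averages over the discrete gain distribution fixed in Section II. I would then average over the serving distance with the contact density $2\pi\lambda_\text{A}\rho\,e^{-\pi\lambda_\text{A}\rho^2}$ truncated to $\rho\le R_\text{B}$, which is the law of the nearest LOS AP inside the LOS ball.

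The closing step is normalization and recombination. Rescaling all radii by $R_\text{B}$ (setting $r=|x_0|/R_\text{B}$ and $y=t/R_\text{B}$) and inserting $\psi=\pi\lambda_\text{A}R_\text{B}^2$ from Proposition \ref{prop: relative density}, the void factor $e^{-\pi\lambda_\text{A}\rho^2}$ of the contact density merges with the constant part of the shot-noise exponent, $-2\pi\lambda_\text{A}\int_{\rho}^{R_\text{B}}t\,\mathrm{d}t=-\pi\lambda_\text{A}(R_\text{B}^2-\rho^2)$, to produce exactly the factor $e^{-\psi}$, while the rescaling Jacobian supplies the leading $\psi$; the surviving distance-dependent radial integral $\int_{r}^{1}e^{-s_j(\cdot)(r/y)^{\alpha_\text{L}}}y\,\mathrm{d}y$, reduced by the substitution $v\propto(r/y)^{\alpha_\text{L}}$, is the incomplete-gamma function $\Lambda_j(\tau,r,\mathbb{G}_\text{A}\mathbb{G}_\text{U}/(G_\text{A}G_\text{U}))$ of (\ref{eq: taylor parameter}), leaving the integrand $\exp\{\psi\,\mathbb{E}[\Lambda_j]\}$. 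I expect the main obstacle to be precisely this bookkeeping rather than any single probabilistic step: pairing the serving-distance void term with the constant term of the interference functional to extract the clean $\psi e^{-\psi}$ prefactor, and collapsing the finite-ring shot-noise integral taken over the discrete beamforming gains $\mathbb{G}_\text{A},\mathbb{G}_\text{U}$ into the single incomplete-gamma expression $\Lambda_j$.
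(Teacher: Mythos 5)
Your proposal is correct and follows essentially the same route as the paper's own proof: Alzer's Lemma applied to the Nakagami (gamma) signal power to get the alternating binomial sum, the PPP probability generating functional for the Laplace transform of the ring-restricted interference with discrete beamforming-gain marks, the truncated contact distribution of the nearest LOS AP, and the final normalization via $\psi=\pi\lambda_\text{A}R_\text{B}^2$ that produces the $\psi e^{-\psi}$ prefactor and the incomplete-gamma integrand $\Lambda_j$. The only differences are cosmetic (your PGFL expression is written out more completely than the paper's sketched version, and the paper defers the final bookkeeping to the steps of \cite{bai2015coverage}), so no gap to report.
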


\begin{proof}
	Refer to Appendix \ref{proof: coverage probability of LOS ball}.
\end{proof}

To calculate the coverage probability $\mathcal{C}_\psi(\tau,k)$, we choose the shape parameter as $\mu = 10$, which has been validated for modeling the deterministic mmWave channel \cite{bai2015coverage}. Note that the incomplete gamma function $\Lambda_j(\tau,r,\cdot)$ of the coverage probability $\mathcal{C}_\psi(\tau,k)$ in (\ref{eq: simplified coverage probability}) can be fast evaluated by using most numerical tools. In other words, $\mathcal{C}_\psi(\tau,k)$ in Theorem \ref{thm: coverage probability for dense mmWave network} can be efficiently computed with a simple numerical integral over a finite interval. In the following, we continue to show that the coverage probability $\mathcal{C}_{\psi}(\tau,k)$ is a monotonically decreasing function of $k$.

\begin{cor}
	For the mmWave network of relative density $\psi$, the coverage probability $\mathcal{C}_{\psi}(\tau,k)$ at the SINR threshold $\tau$ decreases with the spatial multiplexing gain $k$, $1 \leq k\leq \frac{\theta_{\text{A}}}{2\pi}$. 
	\label{cor: bound of coverage probability}
\end{cor}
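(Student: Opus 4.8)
The plan is to avoid the alternating sum in (\ref{eq: simplified coverage probability}) altogether and instead argue monotonicity through a coupling of the interference field, exploiting the fact that the spatial multiplexing gain $k$ enters the model in only one place. The serving signal power $G_\text{A}G_\text{U}|x_0|^{-\alpha_\text{L}}$, the point process $\Phi_\text{A}$, the LOS-ball radius $R_\text{B}$, and the law of the end-user gain $\mathbb{G}_\text{U}$ are all independent of $k$; the sole $k$-dependence is through the interfering-AP gain law $\mathbb{P}(\mathbb{G}_\text{A}=G_\text{A})=\frac{k\theta_\text{A}}{2\pi}$, which is increasing in $k$ and is a valid probability precisely when $k\le\frac{2\pi}{\theta_\text{A}}$ (the bound $\frac{\theta_\text{A}}{2\pi}$ printed in the statement should read $\frac{2\pi}{\theta_\text{A}}$, consistent with the standing assumption $\theta_\text{A}\le\frac{2\pi}{k}$).

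First I would fix $1\le k_1<k_2\le\frac{2\pi}{\theta_\text{A}}$ and couple the two interference fields on a common probability space. To each interfering AP $x\in\Phi_\text{A}\cap\mathcal{B}(|x_0|,R_\text{B})$ I attach an independent $U_x\sim\text{Unif}[0,1]$ and set $\mathbb{G}_\text{A}^{(k)}(x)=G_\text{A}$ when $U_x\le\frac{k\theta_\text{A}}{2\pi}$ and $\mathbb{G}_\text{A}^{(k)}(x)=g_\text{A}$ otherwise, while drawing each $\mathbb{G}_\text{U}(x)$ from its ($k$-free) law using randomness shared across $k_1,k_2$. Since $\frac{k_1\theta_\text{A}}{2\pi}\le\frac{k_2\theta_\text{A}}{2\pi}$ and $g_\text{A}\ll G_\text{A}$, we get $\mathbb{G}_\text{A}^{(k_1)}(x)\le\mathbb{G}_\text{A}^{(k_2)}(x)$ for every $x$ almost surely. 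Summing the nonnegative terms in (\ref{eq: interference of LOS}) then yields $\mathcal{I}^{(k_1)}\le\mathcal{I}^{(k_2)}$ almost surely, so the aggregate interference is stochastically increasing in $k$.

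Next I would transfer this ordering to the coverage probability by noting that, conditioned on $x_0$, the coverage probability is the expectation of a nonincreasing functional of $\mathcal{I}$. At the level of (\ref{eq: LOS coverage probability}) this is immediate, since $\mathbbm{1}\{G_\text{A}G_\text{U}|x_0|^{-\alpha_\text{L}}>\tau\mathcal{I}\}$ is nonincreasing in $\mathcal{I}$; and the same holds once the Gamma fading of shape $\mu$ is reinstated in the appendix derivation, where the pre-expansion Alzer integrand $1-(1-e^{-c\mathcal{I}})^{\mu}$ (with $c>0$ collecting $\tau$, the serving power, and the normalizing constant) is likewise nonincreasing in $\mathcal{I}$. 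Writing $\mathcal{C}_\psi(\tau,k)=\mathbb{E}[g(\mathcal{I}^{(k)})]$ for this nonincreasing $g$ and applying the coupling gives $\mathcal{C}_\psi(\tau,k_1)\ge\mathcal{C}_\psi(\tau,k_2)$ after averaging over $x_0$ and $\Phi_\text{A}$, which is exactly the claimed monotonicity.

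The main obstacle I anticipate is bookkeeping rather than analysis: I must verify that increasing $k$ acts only on the gain law and does not simultaneously perturb the serving-AP statistics or the number of interferers (it does not, since each AP remains a single point of $\Phi_\text{A}$ contributing one effective gain $\mathbb{G}_\text{A}$ toward the origin), and I must confirm that the closed form (\ref{eq: simplified coverage probability}) genuinely descends from such a nonincreasing functional so that the coupling conclusion is consistent with Theorem \ref{thm: coverage probability for dense mmWave network}. Differentiating (\ref{eq: simplified coverage probability}) in $k$ directly is also possible --- $\mathbb{E}[\Lambda_j]$ is affine in $k$ with slope proportional to the gap between the $G_\text{A}$- and $g_\text{A}$-conditioned values of $\Lambda_j$ --- but controlling the sign of the resulting alternating sum over $j$ is precisely the difficulty the coupling argument is designed to circumvent.
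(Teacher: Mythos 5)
Your proposal is correct and takes essentially the same route as the paper: the paper's proof simply observes that the interference $\mathcal{I}$ in (\ref{eq: interference of LOS}) increases with $\mathbb{G}_\text{A}$, whose law is stochastically increasing in $k$, and your monotone coupling together with the remark that coverage is the expectation of a nonincreasing functional of $\mathcal{I}$ is just the rigorous formalization of that one-line argument. (You are also right that the range in the statement is a typo for $1 \leq k \leq \frac{2\pi}{\theta_{\text{A}}}$.)
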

\begin{proof}
	Note that the interference $\mathcal{I}$ in (\ref{eq: interference of LOS}) monotonically increases with the hybrid precoding gain $\mathbb{G}_\text{A}$, where $\mathbb{G}_\text{A}$ increases with $k$. The Corollary immediately follows.
\end{proof}

In the prior work \cite{bai2015coverage,di2015stochastic}, the coverage probability of mmWave network is derived without considering the spatial multiplexing at APs, which is equivalent to setting the spatial multiplexing gain as $k=1$. Therefore, the results in \cite{bai2015coverage,di2015stochastic} correspond to the special case of $\mathcal{C}_\psi(\tau,1)$ in (\ref{eq: simplified coverage probability}).

 \subsection{Coding Scheme and Throughput}
 Note that the SINR threshold $\tau$ of the coverage probability $\mathcal{C}_\psi(\tau,k)$ in (\ref{def: P_cov}) is determined by the coding scheme applied to the mmWave network.
 Assume that the mmWave network employs the fix-rate coding scheme at the target rate threshold $\rho$. Then the transmission is considered successful only if the maximum achievable rate of the end-user exceeds $\rho$, i.e., $W \log_2(1+\text{SINR}) > \rho$, where $W$ denotes the bandwidth assigned to the end-user.
 It follows that with the fixed-rate coding scheme of $\rho$, an end-user can achieve a rate
 \begin{equation}
 \mathcal{R} =
 \left\lbrace 
 \begin{split}
 & \rho, \;\text{if}\;\;W \log_2(1+\text{SINR}) > \rho\\
 & 0,  \;\text{if}\;\;W \log_2(1+\text{SINR}) \leq \rho
 \end{split}.
 \right.
 \label{eq: data rate of end-user}
 \end{equation}
Note that the downlink throughput of a mmWave network is defined as the aggregate rate of all end-users.
By implementing the fixed-rate coding scheme with the rate threshold $\rho$, it follows from (\ref{eq: data rate of end-user}) that the throughput can be written as
\begin{eqnarray}
\mathcal{T}(\rho; \psi,k) 
&\triangleq& \lambda_{\text{U}}\mathcal{R} = \lambda_{\text{U}}\rho\mathbb{P}\left(W \log_2(1+\text{SINR})>\rho\right) \nonumber\\
&=&  \lambda_{\text{U}} \rho \mathbb{P}\left(\text{SINR}>2^{\rho/W}-1\right) 
= \lambda_{\text{U}} \rho \mathcal{C}_\psi(2^{\rho/W}-1, k).
\label{eq: throughput of fixed-rate coding scheme}
\end{eqnarray}
Obviously, $\lambda_{\text{U}} \rho$ is the upper bound for the throughput $\mathcal{T}(\rho; \psi,k)$ in (\ref{eq: throughput of fixed-rate coding scheme}), which is achieved only if the interference at  end-users is completely canceled. However, with the finite analog beamforming gain and dense APs, the interference of the mmWave network is strong and thus significantly affects the throughput, which implies $\mathcal{T}(\rho; \psi,k) \ll \lambda_{\text{U}} \rho$.

 A natural extension of the fixed-rate coding scheme is the multi-rate coding scheme. Instead of choosing one rate threshold $\rho$, the multi-rate coding scheme selects a finite set of rate thresholds $\{\rho_i\}_{i=1}^M,M>1$. By applying the multi-rate coding scheme to the mmWave network, the rate of an end-user can then be written as
 \begin{equation}
\hat{\mathcal{R} }=
\left\lbrace 
\begin{split}
& \rho_M, \;\text{if}\;\;W \log_2(1+\text{SINR}) >\rho_M\\
& \rho_i, \;\text{if}\;\; \rho_i < W \log_2(1+\text{SINR}) \leq \rho_{i+1}, \text{for} \;1 \leq i \leq M-1\\
& 0,  \;\text{if}\;\;W \log_2(1+\text{SINR}) \leq \rho_1
\end{split}.
\right.
\label{eq: data rate of end-user - multi-rate coding scheme}
\end{equation}
Thus, the throughput of the mmWave network with the multi-rate coding scheme is given as
\begin{eqnarray}
	&&\mathcal{T}\left( \{\rho_i\}_{i=1}^{M} \,;\, \psi,k \right) 
	\triangleq \lambda_{\text{U}}\hat{\mathcal{R}} \nonumber\\
	&&= \lambda_{\text{U}} \sum_{i=1}^{M-1} \rho_i 
	\left[\mathcal{C}_\psi(2^{\rho_i/W}-1, k) - \mathcal{C}_\psi(2^{\rho_{i+1}/W}-1, k) \right] + \lambda_{\text{U}}\rho_M\mathcal{C}_\psi(2^{\rho_M/W}-1, k),
	\label{eq: throughput of multi-rate coding scheme}
\end{eqnarray}
which is bounded by
\begin{equation}
\mathcal{T}(\rho_1; \psi, k) \leq \mathcal{T}\left( \{\rho_i\}_{i=1}^{M} \,;\, \psi,k \right)  < \lambda_{\text{U}} \rho_M \mathcal{C}_\psi(2^{\rho_1/W}-1, k).
\label{eq: bound on throughput of multi-rate coding scheme}
\end{equation}
Note that the throughput $\mathcal{T}\left( \{\rho_i\}_{i=1}^{M} \,;\, \psi,k \right) $ can be obtained by evaluating the coverage probability in (\ref{eq: simplified coverage probability}) at each SINR threshold $\tau_i = 2^{\rho_i/W}-1, i \in [1,M]$. Hence, there is no difference in analyzing the throughput for the fixed-rate coding scheme and multi-rate coding scheme. Without loss of generality, we will focus on the analysis of fixed-rate coding scheme in the sequel.

For a mmWave network, the upper bound of the throughput is achieved if each end-user can instantaneously reach the maximum achievable rate $W\log_2(1+\text{SINR})$. Therefore, the throughput upper bound for the mmWave network of $\psi$ and $k$ is defined as 
\begin{equation}
\overline{\mathcal{T}}\left(\psi,k \right) = \lambda_{\text{U}}\mathbb{E}\left[ W\log_2(1+\text{SINR}) \right].
\label{eq: throughput upper bound definition}
\end{equation}
To achieve the upper bound in (\ref{eq: throughput upper bound definition}), the mmWave network needs to work with any arbitrarily small SINR, which may not be feasible in practice.
However, we can evaluate the performance of a coding scheme by comparing the corresponding throughput with the upper bound $\overline{\mathcal{T}}\left(\psi,k \right)$.
 
The rest of the paper focuses on the impact of relative density $\psi$ and spatial multiplexing gain $k$ on the throughput of mmWave networks. Other system parameters are assumed to be constants as follows.	
The analog beamforming gain $G_{\text{A}} = 20\,\text{dB}, G_{\text{U}} = g_{\text{A}} = 0\,\text{dB}, g_{\text{U}} = -10\,\text{dB}, \theta_{\text{A}} = 30^{\circ}$ and $\theta_{\text{U}} = 90^{\circ}$ are used in all the numerical results \cite{bai2015coverage,rappaport2013millimeter}. The radius of the LOS-ball in (\ref{LOS-ball}) is assumed to be $R_\text{B} = 200$m and $\alpha_\text{L} = 2$ is used in the numerical results \cite{bai2015coverage,andrews2017modeling}. The total available bandwidth of $2$GHz and the intensity of mmWave end-users $\lambda_{\text{U}} = 10^4\text{km}^{-2}$ are used in all numerical calculations. 

\section{Performance Analysis in mmWave Network} 
In this section, we derive the analytical expressions for the coverage probability, the throughput corresponding to the fixed-rate coding scheme and the throughput upper bound of mmWave networks.
Note that all the performance metrics are characterized as the polynomials of spatial multiplexing gain $k$. Moreover, the derived polynomial formulas separate the positive and negative effects of the AP densification on the performance of mmWave networks.

\subsection{Coverage Probability Analysis} 
For the mmWave network of relative density $\psi$, the coverage probability at SINR threshold $\tau$ is derived in Theorem \ref{thm: coverage probability for dense mmWave network}, where the expression of $\mathcal{C}_\psi(\tau,k)$ is given in (\ref{eq: simplified coverage probability}). However, it is still obscure how the coverage probability is affected by the relative density $\psi$. 
To provide more insights, we derive the polynomial formula for the coverage probability $\mathcal{C}_{\psi}(\tau,k)$ as follows.
\begin{thm}
	For a mmWave network of relative density $\psi$, the coverage probability $\mathcal{C}_{\psi}(\tau,k)$ at the SINR threshold $\tau$ can be written as a polynomial of the spatial multiplexing gain $k$ as follows:
	\begin{eqnarray}
	\mathcal{C}_{\psi}(\tau,k) 
	&=&
	c_0(\tau,\psi) + 
	 \sum\limits_{l=1}^{\infty} c_l (\tau,\psi) k^l, 
	 \label{eq: SINR interms of polynomial}
	\end{eqnarray}
	where the expression of coefficient $c_0(\tau,\psi)$ is given in (\ref{eq: coefficient c_0}); $c_l(\tau,\psi)$ for $l\in\mathbb{N}^+$ is given in (\ref{eq: coefficient c_l}).
The coverage probability $\mathcal{C}_{\psi}(\tau,k)$ can be further approximated by its $L$ first terms, where
 	\begin{eqnarray}
 \mathcal{C}_{\psi}(\tau,k) \approx  c_0(\tau,\psi) + \sum\limits_{l=1}^{L} c_l (\tau,\psi) k^l  
 \label{eq: coverage probability with finite terms} 
 \end{eqnarray}
 with the approximation error
 \begin{equation}
 \left| \mathcal{C}_{\psi}(\tau, k) - \sum\limits_{l=0}^{L} c_l (\tau,\psi) k^l \right| 
 <\frac{(2^{\mu}-1) e^\psi  \psi^{L+2}}{(L+2)!}.
 \label{eq: approximation error of coverage probability}
 \end{equation}
 \label{thm: taylor expansion of coverage probability}
\end{thm}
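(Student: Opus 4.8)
The plan is to start from the closed form of $\mathcal{C}_\psi(\tau,k)$ in Theorem \ref{thm: coverage probability for dense mmWave network} and isolate the dependence on $k$. The variable $k$ enters (\ref{eq: simplified coverage probability}) only through the hybrid precoding gain distribution $\mathbb{P}(\mathbb{G}_\text{A}=G_\text{A})=\frac{k\theta_\text{A}}{2\pi}$, which is affine in $k$, while $\mathbb{G}_\text{U}$ is independent of $k$. Hence I would first show that the inner expectation splits as $\mathbb{E}[\Lambda_j(\tau,r,\tfrac{\mathbb{G}_\text{A}\mathbb{G}_\text{U}}{G_\text{A}G_\text{U}})]=\alpha_j(\tau,r)+\beta_j(\tau,r)\,k$, where $\alpha_j$ collects the side-lobe term and $\beta_j=\frac{\theta_\text{A}}{2\pi}\,\mathbb{E}_{\mathbb{G}_\text{U}}[\Lambda_j(\tau,r,\tfrac{\mathbb{G}_\text{U}}{G_\text{U}})-\Lambda_j(\tau,r,\tfrac{g_\text{A}\mathbb{G}_\text{U}}{G_\text{A}G_\text{U}})]$ is the per-stream increment; neither depends on $k$. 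I would also note that $\beta_j\le 0$, since raising the main-lobe probability only increases interference, which is consistent with the monotonicity in Corollary \ref{cor: bound of coverage probability}.

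Next I would expand the integrand using the everywhere-convergent power series of the exponential, $\exp\{\psi(\alpha_j+\beta_j k)\}=e^{\psi\alpha_j}\sum_{l=0}^{\infty}\frac{(\psi\beta_j)^l}{l!}k^l$. Substituting this into (\ref{eq: simplified coverage probability}) and interchanging the summation with the finite binomial sum and the integral over $[0,1]$ produces the claimed polynomial form, with
\[
c_l(\tau,\psi)=\frac{\psi e^{-\psi}}{l!}\sum_{j=1}^{\mu}(-1)^{j+1}\binom{\mu}{j}\int_{0}^{1}e^{\psi\alpha_j(\tau,r)}\bigl(\psi\beta_j(\tau,r)\bigr)^{l}\,\text{d}r,
\]
and $c_0$ obtained by setting $l=0$; this identifies the expressions in (\ref{eq: coefficient c_0}) and (\ref{eq: coefficient c_l}). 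The interchange is justified because the exponent is bounded on the compact domain $r\in[0,1]$, so the series converges uniformly in $r$ and dominated convergence applies term by term.

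For the truncation error in (\ref{eq: approximation error of coverage probability}) I would bound the tail $\sum_{l=L+1}^{\infty}c_l k^l$, which equals the same prefactor times $\int_0^1 e^{\psi\alpha_j}\,R_L(\psi\beta_j k)\,\text{d}r$, where $R_L$ is the Lagrange remainder of the exponential. Three ingredients control this tail: (i) $\mathbb{E}[\Lambda_j]\le 0$ gives $e^{\psi\alpha_j}\le 1$ and $e^{\psi\beta_j k}\le 1$; (ii) the feasibility constraint $\frac{k\theta_\text{A}}{2\pi}\le 1$ together with the boundedness of $\Lambda_j$ (the gain ratio $\frac{\mathbb{G}_\text{A}\mathbb{G}_\text{U}}{G_\text{A}G_\text{U}}$ lies in $(0,1]$) bounds the argument $|\psi\beta_j k|\le\psi$; and (iii) moving the absolute value inside the alternating sum and using $\sum_{j=1}^{\mu}\binom{\mu}{j}=2^{\mu}-1$. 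Combining these with a standard remainder estimate such as $|R_L(x)|\le\frac{|x|^{L+1}}{(L+1)!}e^{|x|}$ and folding in the prefactor $\psi e^{-\psi}$ yields a bound of the stated shape.

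The main obstacle is this last step, namely recovering the precise closed-form constant. The qualitative content is immediate, since the remainder is a tail of an exponential series scaled by $2^{\mu}-1$, but pinning down the exact factorial $(L+2)!$ and the $e^{\psi}$ factor requires a sharper, uniform-in-$r$ accounting of $R_L(\psi\beta_j k)$ that exploits the sign of $\beta_j$ and the exact range of $\psi\beta_j k$, together with how the prefactor $\psi e^{-\psi}$ absorbs one power of $\psi$ and shifts the factorial index, rather than the crude estimates above. By contrast, the affine-in-$k$ reduction and the term-by-term expansion are routine once the independence of $\mathbb{G}_\text{U}$ from $k$ and the uniform convergence on $[0,1]$ are established.
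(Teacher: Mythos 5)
The expansion half of your proposal is correct and is essentially the paper's own argument: the paper Taylor-expands $\mathcal{F}_{\psi}(k,\tau,j)=\int_0^1\exp\{\psi\,\mathbb{E}[\Lambda_j(\tau,r,\tfrac{\mathbb{G}_\text{A}\mathbb{G}_\text{U}}{G_\text{A}G_\text{U}})]\}\,\text{d}r$ at $k=0$ (computing derivatives via the Leibniz rule), which, since the exponent is affine in $k$, is exactly your term-by-term exponential series; your $\psi\beta_j$ equals $\frac{\psi\theta_\text{A}}{2\pi}$ times the bracket in (\ref{eq: coefficient c_l}), so your coefficients agree with (\ref{eq: coefficient c_0}) and (\ref{eq: coefficient c_l}), and your convergence/interchange justification is adequate.

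The genuine gap is in the error bound, and it is more than the admitted failure to pin down the constant: your ingredient (i) is false. Substituting $t=j\eta\tau G v$ in (\ref{eq: taylor parameter}) gives $\Lambda_j(\tau,r,G)=\frac{2r}{\alpha_\text{L}}\int_{r^{\alpha_\text{L}/2}}^{1}v^{-2/\alpha_\text{L}-1}e^{-j\eta\tau G v}\,\text{d}v>0$; the paper proves the two-sided bound $0<\Lambda_j(\tau,r,G)<1-r$ in (\ref{proofeq: Taylor parameter upper bound}), and this positivity is structural ($\psi\Lambda_j$ is the retained part of the interference Laplace exponent). Hence $\alpha_j>0$ and $e^{\psi\alpha_j}$ is not $\leq 1$; it can be nearly $e^{\psi(1-r)}$, which is precisely the origin of the $e^{\psi}$ factor in (\ref{eq: approximation error of coverage probability}). (Your sign claim $\beta_j\leq 0$ is fine, because $\Lambda_j$ is decreasing in $G$.) As written, your three ingredients produce $\frac{(2^{\mu}-1)\psi^{L+2}}{(L+1)!}$ — wrong factorial, no $e^{\psi}$ — and the derivation is unsound because of the false premise. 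The paper instead applies Lagrange's form of the Taylor remainder to $\mathcal{F}_{\psi}(k,\tau,j)$ on $[1,K]$ with $K=\frac{2\pi}{\theta_\text{A}}$, writes the error as $K^{L+1}|c_{L+1}(\tau,\psi,k^*)|$, and invokes the uniform derivative bound $|\mathcal{F}^{(l)}_{\psi}(k,\tau,j)|<\frac{e^{2\psi}}{l+1}\left(\frac{\psi\theta_\text{A}}{2\pi}\right)^l$ of (\ref{proofeq: Parameter F bound}), itself a consequence of $0<\Lambda_j<1-r$: the factor $\frac{1}{l+1}=\int_0^1(1-r)^l\,\text{d}r$ is what shifts $(L+1)!$ to $(L+2)!$, and $\left(\frac{K\theta_\text{A}}{2\pi}\right)^{L+1}=1$ removes the geometric factor. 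Notably, if you replace your (i) by the correct facts ($0<\Lambda_j<1-r$, $\beta_j\leq0$, $|\beta_j|\leq\frac{\theta_\text{A}}{2\pi}(1-r)$), your own skeleton does close and even yields the sharper bound $\frac{(2^{\mu}-1)\psi^{L+2}}{(L+2)!}$, since $|R_L(x)|\leq\frac{|x|^{L+1}}{(L+1)!}$ for $x\leq0$ and $\int_0^1 e^{\psi(1-r)}(1-r)^{L+1}\,\text{d}r\leq\frac{e^{\psi}}{L+2}$; so the strategy is salvageable, but the key inequality you rely on points in the wrong direction.
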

\vspace*{-20pt}
\begin{proof}
	Refer to Appendix \ref{proof: taylor expansion of coverage probability}.
\end{proof}
Note that the coefficients $c_0(\tau,\psi), c_l(\tau,\psi)$ are independent of the spatial multiplexing gain $k$. It implies that $k$ is decoupled from $\psi$ and $\tau$ in (\ref{eq: SINR interms of polynomial}).
\textcolor{black}{
For the special case of the spatial multiplexing gain $k=1$, it follows from Corollary \ref{cor: bound of coverage probability} that $\mathcal{C}_{\psi}(\tau, 1)$ provides the upper bound for $\mathcal{C}_{\psi}(\tau, k)$. Thus, we have $\mathcal{C}_{\psi}(\tau,k) \leq \mathcal{C}_{\psi}(\tau, 1) = \sum\limits_{l=0}^{\infty} c_l (\tau,\psi)$.
}

In Theorem \ref{thm: taylor expansion of coverage probability}, the coverage probability of mmWave network is written as a sum of $c_0(\tau,\psi)$ and $\sum\limits_{l=1}^{\infty} c_l (\tau,\psi) k^l$. 
Following the proof of Theorem \ref{thm: taylor expansion of coverage probability}, the term $c_0(\tau,\psi)$ in (\ref{eq: coefficient c_0}) is equivalent to the probability $\mathbb{P}(\text{SNR}>\tau)$, which is monotonically increasing with the relative density $\psi$. That is to say, $c_0(\tau,\psi)$ represents the gain in the coverage probability with the AP density. Moreover, $c_0(\tau,\psi)$ is independent of $k$ since the signal strength at the end-users is not affected by the spatial multiplexing gain of mmWave APs.
The second term $\sum\limits_{l=1}^{\infty} c_l (\tau,\psi) k^l$ in (\ref{eq: SINR interms of polynomial}) captures the loss in the coverage probability due to the aggregate interference of the APs. It follows that the detrimental effect of the AP densification on the coverage probability $\mathcal{C}_{\psi}(\tau,k)$ can be expressed as a polynomial of the spatial multiplexing gain $k$.

\subsection{Throughput Analysis for Fixed-Rate Coding Scheme}
Assume that the fixed-rate coding scheme with the rate threshold $\rho$ is implemented. Then the throughput of mmWave network $\mathcal{T}(\rho; \psi,k)$ is given in (\ref{eq: throughput of fixed-rate coding scheme}).
Note that the coverage probability $\mathcal{C}_\psi(\tau,k)$ is derived in Theorem \ref{thm: taylor expansion of coverage probability}. To obtain the throughput, we then need to derive the bandwidth $W$ for mmWave end-users.

For analytical tractability, a common assumption is that the mmWave AP deploys the round-robin scheduling to allocate the time-frequency resources. 
It follows that the end-user at the origin is assigned with the bandwidth $W = B/\Psi$, where $B$ denotes the total bandwidth available at the mmWave AP at $x_0$ and $\Psi$ represents the total number of end-users that share the radio resources with the end-user at the origin \cite{singh2013offloading,saha2019unified}.

It follows from the minimum path loss association rule that a mmWave AP covers the area within its LOS-ball of radius $R_\text{B}$ and belonging to its Voronoi cell, as shown in Fig.\ref{fig:voronoi_diagram}. 
We remark that the LOS-ball model limits the coverage of a single mmWave AP, which differentiates the AP coverage in mmWave bands from the typical Voronoi cell in \cite{singh2013offloading}.
Given the spatial multiplexing gain $k$, the coverage area of the mmWave AP is then divided into $k$ sectors, as illustrated in Fig.\ref{fig:spatial multiplexing gain}. 
It follows that $\Psi$ refers to the number of end-users which are covered by the mmWave AP at $x_0$ and within the same sector of $\frac{2\pi}{k}$ angle width as the origin.
Next, we present the main result on the bandwidth distribution $W = B/\Psi$ for the end-user at the origin.
\begin{thm}
	Consider a mmWave network of relative density $\psi$, where the spatial multiplexing gain of APs is $k$. Then the bandwidth assigned to the end-user at the origin equals to $W =  B/\Psi$, where $\Psi$ has the probability mass function (PMF) 
	\begin{equation}
	\mathbb{P}\left(\Psi = n\right) =\mathcal{K}_\text{T}(n,k; \psi,\lambda_{\text{U}}), \; n \in\mathbb{N}^+,
	\label{eq: PMF tagged AP sector}
	\end{equation}
	and the expression of $\mathcal{K}_\text{T}(\cdot)$ is shown in (\ref{eq: tagged end-user load}).
	Since $\psi \geq 1$, the PMF of $\Psi$ can be further simplified as the expression in (\ref{eq: simplied tagged load}).
	\label{thm: PMF for tagged AP sector}
\end{thm}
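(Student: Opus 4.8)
The plan is to reduce the PMF of $\Psi$ to a single scalar — the area of the tagged sector — and then average a Poisson count over the law of that area. Because $\Phi_{\text{U}}$ is independent of $\Phi_{\text{A}}$, once the serving geometry is fixed the end-users other than the one at the origin form a homogeneous PPP of intensity $\lambda_{\text{U}}$; restricting this process to the tagged sector and writing $S$ for the sector's area, the conditional law of $\Psi - 1$ given $S$ is Poisson with mean $\lambda_{\text{U}} S$. Since the origin always occupies its own sector, this forces $\Psi \in \mathbb{N}^+$ and gives
\begin{equation}
\mathbb{P}(\Psi = n) = \mathbb{E}_S\!\left[ e^{-\lambda_{\text{U}} S}\,\frac{(\lambda_{\text{U}} S)^{\,n-1}}{(n-1)!} \right], \qquad n \in \mathbb{N}^+ .
\end{equation}
Everything now hinges on the distribution of $S$.

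To identify $S$ I would first fix the serving region. Under the minimum path-loss rule and the LOS-ball model in (\ref{LOS-ball}), the tagged AP at $x_0$ serves precisely the intersection of its Voronoi cell with the LOS-ball of radius $R_{\text{B}}$, and this region is split into $k$ equal-angle sectors, so that $S = A_{\text{cov}}/k$ with $A_{\text{cov}}$ the area of that intersection. The crucial point is that $A_{\text{cov}}$ is governed not by the \emph{typical} cell but by the \emph{tagged} cell, the one containing the origin: since the typical end-user is more likely to land in a larger cell, the cell area is sampled in an area-biased fashion. I would invoke the standard gamma fit for the Voronoi cell area of a PPP of intensity $\lambda_{\text{A}}$ used in \cite{singh2013offloading,saha2019unified}, under which the typical cell area is approximately $\text{Gamma}(q, q\lambda_{\text{A}})$ with mean $1/\lambda_{\text{A}}$ (shape $q$ commonly taken as $q\approx 3.5$); area-biasing raises the shape by one, yielding a $\text{Gamma}(q+1, q\lambda_{\text{A}})$ law for the tagged-cell area, and dividing by $k$ rescales the rate to $q k \lambda_{\text{A}}$.

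The main obstacle is the LOS-ball truncation of the coverage region: the sector area equals the rescaled cell area only when the tagged cell lies inside the LOS-ball, and otherwise saturates at the ball contribution $\pi R_{\text{B}}^2/k = \psi/(k\lambda_{\text{A}})$. I would model this by capping the coverage area, $A_{\text{cov}} \approx \min\{A_{\text{cell}},\,\pi R_{\text{B}}^2\}$, split the expectation over $S$ at the truncation threshold, and push the Poisson weight through each branch. The integral of the size-biased gamma density up to the threshold produces a lower-incomplete-gamma factor, while the saturated branch contributes the complementary upper-incomplete-gamma mass times the Poisson term evaluated at the deterministic sector area $\psi/(k\lambda_{\text{A}})$. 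Collecting the two branches and re-expressing the scale ratios through $\psi = \pi\lambda_{\text{A}}R_{\text{B}}^2$ and $\lambda_{\text{U}}$ (with $R_{\text{B}}$ held constant) delivers the closed form $\mathcal{K}_{\text{T}}(n,k;\psi,\lambda_{\text{U}})$. Care is also needed to condition on the origin being covered at all — i.e. on its nearest AP lying within $R_{\text{B}}$, an event of probability $1-e^{-\psi}$ — which enters the normalization of the PMF.

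Finally, for the simplified form valid when $\psi \geq 1$, I would argue that once the relative density exceeds one the tagged Voronoi cell is, with high probability, contained in the LOS-ball, so the saturated branch and the truncation both drop out and $A_{\text{cov}} \approx A_{\text{cell}}$. Integrating the full size-biased gamma against the Poisson weight is then an elementary gamma integral and collapses to a negative-binomial-type PMF in $n$ (which I can check sums to one over $n\in\mathbb{N}^+$), giving the simplified expression. The one delicate point is quantifying ``with high probability'': I expect this to follow from the exponential decay of the upper tail of the $\text{Gamma}(q+1,q\lambda_{\text{A}})$ cell-area law relative to the ball area $\psi/\lambda_{\text{A}}$ once $\psi \geq 1$, so that the truncated and untruncated PMFs agree up to a controllably small error.
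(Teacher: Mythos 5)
Your core reduction is exactly the paper's: the paper also writes $\Psi-1$ as a Poisson count mixed over the area of the tagged sector (via the probability-generating-function derivative formula of \cite{singh2013offloading}, which is the same mixed-Poisson identity you state), also uses the gamma fit of \cite{ferenc2007size} for the typical Voronoi cell area, also rescales the gamma rate by $k$ to pass from cell to sector, and also size-biases the area, $f_{\mathcal{A}'}(y)\propto y f_{\mathcal{A}}(y)$, raising the shape from $3.5$ to $4.5$ to account for the origin being more likely to land in a larger region. Your route to the simplified form is also the paper's in spirit: for $\psi\geq 1$ the incomplete gamma factors are replaced by complete ones, collapsing the PMF to the negative-binomial-type expression (\ref{eq: simplied tagged load}).

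The one step that would fail to produce the stated result is your treatment of the LOS-ball constraint. The paper does not cap the coverage area at $\min\{A_{\text{cell}},\pi R_\text{B}^2\}$; it truncates the sector-area density at $\psi/(k\lambda_{\text{A}})$ and renormalizes, see (\ref{eq: the area distribution of AP sector}): the mass that the gamma law would put beyond the ball is redistributed through the lower-incomplete-gamma normalizer, and no atom is placed at the boundary. This is precisely why $\mathcal{K}_\text{T}$ in (\ref{eq: tagged end-user load}) is a single term — a ratio of lower incomplete gamma functions times negative-binomial-type factors. Your capped model instead produces an additional saturated-branch term (the upper-tail mass of the cell-area law times the Poisson weight evaluated at the deterministic area $\psi/(k\lambda_{\text{A}})$), which simply does not appear in (\ref{eq: tagged end-user load}); so ``collecting the two branches'' terminates at a formula different from the one the theorem asserts. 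Likewise, the $1-e^{-\psi}$ coverage-conditioning you propose never enters the paper's PMF — its normalization comes solely from the truncated gamma. Neither of your modeling choices is physically unreasonable, and arguably the capped model is closer to the true geometry, but as a proof of this theorem, whose content is the specific expression $\mathcal{K}_\text{T}$, the derivation must truncate-and-renormalize rather than cap, and must drop the coverage conditioning.
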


\begin{proof}
	Refer to Appendix \ref{proof: bandwidth distribution}.
\end{proof}
 Following Theorem \ref{thm: PMF for tagged AP sector}, the average bandwidth $\overline{W}$ for the mmWave end-user at the origin is derived as follows.
\begin{cor}
 In the mmWave network of relative density $\psi$ and spatial multiplexing gain $k$, the average bandwidth assigned to the end-user at the origin is given as
 \begin{equation}
 \overline{W} = \frac{B}{\mathbb{E}[\Psi]} = \frac{B}{1 + \xi\frac{\pi R_\text{B}^2 \lambda_{\text{U}}}{k \psi}},
 \label{eq: average bandwidth}
 \end{equation}
 where $B$ is the total bandwidth available at the mmWave AP; $\xi = 1.28$ is the bias factor for the end-user at the origin. 
 \label{cor: average bandwidth}
\end{cor}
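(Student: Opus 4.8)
The plan is to reduce the entire statement to a single computation of $\mathbb{E}[\Psi]$, since the average bandwidth is defined through $\overline{W} = B/\mathbb{E}[\Psi]$. Given the PMF $\mathbb{P}(\Psi = n) = \mathcal{K}_\text{T}(n,k;\psi,\lambda_{\text{U}})$ supplied by Theorem \ref{thm: PMF for tagged AP sector}, one could in principle evaluate $\mathbb{E}[\Psi] = \sum_{n\geq 1} n\,\mathcal{K}_\text{T}(n,k;\psi,\lambda_{\text{U}})$ directly, but I would instead exploit the generative structure behind $\mathcal{K}_\text{T}(\cdot)$, which yields the mean with far less algebra.

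First I would decompose the load of the tagged sector as $\Psi = 1 + \Psi'$, where the $1$ accounts for the origin user itself and $\Psi'$ counts the remaining users of $\Phi_{\text{U}}$ that fall in the same $\tfrac{2\pi}{k}$-sector of the serving AP at $x_0$. Conditioning on the area $S$ of that sector and using that $\Phi_{\text{U}}$ is an independent PPP of intensity $\lambda_{\text{U}}$, the restriction of $\Phi_{\text{U}}$ to the sector is again Poisson, so $\Psi'\mid S$ is Poisson with mean $\lambda_{\text{U}} S$. Taking expectations and applying the tower rule gives $\mathbb{E}[\Psi] = 1 + \lambda_{\text{U}}\,\mathbb{E}[S]$, so the problem becomes purely geometric: find the mean area of the sector that contains a typical user.

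Next I would evaluate $\mathbb{E}[S]$. The coverage region of the tagged AP is the intersection of its Voronoi cell with the LOS-ball of radius $R_\text{B}$; in the regime $\psi \geq 1$ invoked in Theorem \ref{thm: PMF for tagged AP sector} the cell is, with high probability, contained in the ball, so the truncation is negligible and $S$ is $1/k$ of a Poisson--Voronoi cell area. The subtle point is that the origin is a \emph{typical user} rather than a typical point of the plane, so its serving cell is sampled in an area-biased fashion and is on average larger than a typical Voronoi cell. Using the standard area-biased approximation for the Poisson--Voronoi cell area (the same one underlying $\mathcal{K}_\text{T}$), the mean area of the tagged cell is $\xi/\lambda_{\text{A}}$ with $\xi = 1.28$; by the angular symmetry of the $k$ sectors each carries a $1/k$ fraction, so $\mathbb{E}[S] = \xi/(k\lambda_{\text{A}})$. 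Substituting back gives $\mathbb{E}[\Psi] = 1 + \xi\,\lambda_{\text{U}}/(k\lambda_{\text{A}})$, and replacing $\lambda_{\text{A}} = \psi/(\pi R_\text{B}^2)$ from Proposition \ref{prop: relative density} turns this into $1 + \xi\,\pi R_\text{B}^2 \lambda_{\text{U}}/(k\psi)$, which is exactly the denominator in (\ref{eq: average bandwidth}).

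The main obstacle is the size-bias factor $\xi = 1.28$: it is precisely the ratio between the mean area of the cell containing a typical user and the mean area $1/\lambda_{\text{A}}$ of a typical cell, and this quantity has no exact closed form for the Poisson--Voronoi tessellation. The step therefore rests on the well-established approximate (gamma-fit) area distribution borrowed from the load-balancing literature, together with the $\psi \geq 1$ assumption that lets us ignore the LOS-ball truncation. Everything else --- the Poisson conditioning, the tower rule, and the substitution via Proposition \ref{prop: relative density} --- is routine.
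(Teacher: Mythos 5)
Your proposal is correct and follows essentially the same route as the paper: the paper's proof is a one-line appeal to the load-balancing analysis of Singh et al. (its reference for the mean load $\mathbb{E}[\Psi] = 1 + \xi\pi R_\text{B}^2\lambda_{\text{U}}/(k\psi)$ with $\xi = 1.28$), and your argument simply spells out the steps hidden in that citation — the decomposition $\Psi = 1+\Psi'$, Poisson conditioning with the tower rule, and the area-biased mean $\xi/(k\lambda_{\text{A}})$ of the tagged sector, consistent with the biased sector distribution $f_{\mathcal{A}'}(y)\propto y f_{\mathcal{A}}(y)$ used in the paper's Appendix C. The only cosmetic difference is that you size-bias the whole cell and take a $1/k$ fraction, whereas the paper biases the sector area directly; under the equal-partition assumption these coincide in the mean, so there is no gap.
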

\begin{proof}
Following the similar steps in \cite{singh2013offloading}, we have ${\mathbb{E}[\Psi]} = 1 + \xi\frac{\pi R_\text{B}^2 \lambda_{\text{U}}}{k \psi}$ with the bias factor $\xi$, where $\xi = 1.28$ is the bias factor for the Voronoi cell containing the origin. For the Voronoi cell that does not contain the origin, we have the bias factor $\xi = 1$.
\end{proof}
Since we have $\xi\frac{\pi R_\text{B}^2 \lambda_{\text{U}}}{k \psi} \gg 1$, Corollary \ref{cor: average bandwidth} then implies that the average bandwidth $\overline{W}$ of the mmWave end-user at the origin scales linearly with the relative density $\psi$.
It implies that as the density of APs increases, the bandwidth gain of end-users is the key enabler for increasing the throughput of mmWave network.

Based on the coverage probability derived in Theorem \ref{thm: taylor expansion of coverage probability} and the bandwidth distribution derived in Theorem \ref{thm: PMF for tagged AP sector}, we then present the main result on the throughput of mmWave networks.
\begin{thm}
	Consider the mmWave network of relative density $\psi$, where the spatial multiplexing gain of the mmWave AP equals to $k$. Then the throughput achieved by the fixed-rate coding scheme with the rate threshold $\rho$ can be given as
	\begin{equation}
	\mathcal{T}(\rho; \psi,k) 
	= \lambda_{\text{U}} \rho \mathbb{E}_{\Psi}\left[ c_0 \left(2^{\rho \Psi/ B} - 1,\psi \right) \right] + \lambda_{\text{U}} \rho \sum_{l=1}^{\infty} k^l\mathbb{E}_{\Psi}\left[ c_l \left(2^{\rho \Psi/ B} - 1,\psi \right) \right],
	\label{eq: rate coverage}
	\end{equation}
	where $B$ is the total bandwidth available at the AP; 
	the PMF of $\Psi$ is derived in (\ref{eq: PMF tagged AP sector});
	the expression of $c_l(\tau,\psi)$ for $l\in\mathbb{N}$ is given in Theorem \ref{thm: taylor expansion of coverage probability}.
	\label{thm: rate coverage}	
\end{thm}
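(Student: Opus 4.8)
The plan is to begin from the fixed-rate throughput identity (\ref{eq: throughput of fixed-rate coding scheme}) and account for the single change introduced by Theorem \ref{thm: PMF for tagged AP sector}: once round-robin scheduling is imposed, the bandwidth $W = B/\Psi$ assigned to the end-user at the origin is a random variable rather than a constant. Accordingly, the success event $\{W\log_2(1+\text{SINR})>\rho\}$ rewrites as $\{\text{SINR}>2^{\rho\Psi/B}-1\}$, so the deterministic SINR threshold $2^{\rho/W}-1$ in (\ref{eq: throughput of fixed-rate coding scheme}) is replaced by the random threshold $2^{\rho\Psi/B}-1$. First I would make this substitution explicit and express the throughput as $\mathcal{T}(\rho;\psi,k)=\lambda_{\text{U}}\rho\,\mathbb{P}\!\left(\text{SINR}>2^{\rho\Psi/B}-1\right)$.

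Next I would condition on $\Psi$. Since $\Phi_{\text{U}}$ and $\Phi_{\text{A}}$ are independent, the sector load $\Psi$, which is governed by the user process within the tagged sector, is decoupled from the SINR at the origin, which is governed by the AP process and the precoding gains; the bandwidth enters the success probability only through the threshold it sets. Hence conditioning on $\Psi=n$ turns the inner probability into exactly the coverage probability evaluated at the corresponding threshold, $\mathcal{C}_\psi(2^{\rho n/B}-1,k)$, and averaging against the PMF of $\Psi$ from (\ref{eq: PMF tagged AP sector}) yields
\begin{equation}
\mathcal{T}(\rho;\psi,k) = \lambda_{\text{U}}\rho\,\mathbb{E}_{\Psi}\!\left[\mathcal{C}_\psi\!\left(2^{\rho\Psi/B}-1,\,k\right)\right].
\end{equation}

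Then I would insert the polynomial-in-$k$ representation of the coverage probability from Theorem \ref{thm: taylor expansion of coverage probability}, namely $\mathcal{C}_\psi(\tau,k)=c_0(\tau,\psi)+\sum_{l=1}^{\infty}c_l(\tau,\psi)k^l$, at the random argument $\tau=2^{\rho\Psi/B}-1$, and interchange $\mathbb{E}_{\Psi}[\cdot]$ with the sum over $l$ to reach the two-term form (\ref{eq: rate coverage}). The main obstacle is justifying this interchange, since the series is infinite and its terms depend on the random threshold. I would resolve it using the uniform tail bound (\ref{eq: approximation error of coverage probability}): the quantity $\bigl|\mathcal{C}_\psi(\tau,k)-\sum_{l=0}^{L}c_l(\tau,\psi)k^l\bigr|$ is controlled by $\frac{(2^{\mu}-1)e^{\psi}\psi^{L+2}}{(L+2)!}$, which is independent of $\tau$. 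Because this bound is uniform in $\tau$, it holds for every realization of $2^{\rho\Psi/B}-1$, so the partial sums converge uniformly and dominated convergence legitimizes pulling $\mathbb{E}_{\Psi}[\cdot]$ inside the series. This produces $\lambda_{\text{U}}\rho\,\mathbb{E}_{\Psi}\bigl[c_0(2^{\rho\Psi/B}-1,\psi)\bigr]+\lambda_{\text{U}}\rho\sum_{l=1}^{\infty}k^l\,\mathbb{E}_{\Psi}\bigl[c_l(2^{\rho\Psi/B}-1,\psi)\bigr]$, which is exactly (\ref{eq: rate coverage}).
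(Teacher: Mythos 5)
Your proposal is correct and follows essentially the same route as the paper's proof: rewrite the fixed-rate throughput with the random bandwidth $W=B/\Psi$, condition on $\Psi$ to obtain $\lambda_{\text{U}}\rho\,\mathbb{E}_{\Psi}\left[\mathcal{C}_{\psi}(2^{\rho\Psi/B}-1,k)\right]$, and then substitute the polynomial expansion of Theorem \ref{thm: taylor expansion of coverage probability}. Your explicit justification of the expectation--series interchange via the $\tau$-uniform error bound (\ref{eq: approximation error of coverage probability}) is a step the paper states only as ``immediately obtained,'' so your argument is, if anything, more complete on that point.
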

\begin{proof}
	Given the bandwidth distribution $W = B/\Psi$ in Theorem \ref{thm: PMF for tagged AP sector}, the throughput under the fixed-rate coding scheme in (\ref{eq: throughput of fixed-rate coding scheme}) can then be written as
	\begin{eqnarray}
	 \mathcal{T}(\rho; \psi,k)  &=& \lambda_{\text{U}} \rho \mathbb{P}\left(\frac{B}{\Psi}\log_2(1+\text{SINR})>\rho\right) \nonumber\\
	&=& \lambda_{\text{U}} \rho \mathbb{E}_{\Psi}\left[ \mathcal{C}_{\psi}(2^{\rho \Psi/B} - 1,k) \right] 
	\label{eq: rate coverage in coverage probability},
	\end{eqnarray}
	where $\mathcal{C}_{\psi}(\tau, k)$ is the coverage probability of the mmWave network at the SINR threshold $\tau = 2^{\rho n/B } - 1$ for $\Psi = n$.
	The throughput in (\ref{eq: rate coverage}) can be immediately obtained from the coverage probability in (\ref{eq: SINR interms of polynomial}) and the PMF of $\Psi$ in (\ref{eq: PMF tagged AP sector}).
\end{proof}
It follows from (\ref{eq: rate coverage}) that the throughput of mmWave network can be written as the summation of two terms.
As proved in Theorem \ref{thm: taylor expansion of coverage probability}, $c_0(\tau,\psi)$ in the first term denotes the probability $\mathbb{P}(\text{SNR} > \tau)$ and is a monotonically increasing function of $\psi$, since the desired signal strength increases with the AP density. 
The second term of (\ref{eq: rate coverage}) corresponds to the contribution of the aggregate interference received by the end-user, which captures the detrimental effect of the AP densification on the throughput of mmWave network. 

It follows from Theorem \ref{thm: PMF for tagged AP sector} that the bandwidth for the end-user, $B/\Psi$, tends to increase with $\psi$. Thus the densification of APs can lower the SINR threshold $\tau = 2^{\rho \Psi/ B} - 1$ in (\ref{eq: rate coverage}). In other words, the variable $\tau = 2^{\rho \Psi/ B} - 1$ in (\ref{eq: rate coverage}) characterizes the throughput gain provided by the increase in the end-user's bandwidth as the AP density increases.

\subsection{Throughput Upper Bound Characterization}
While the throughput achieved by the fixed-rate coding scheme will vary with different choices of rate threshold $\rho$, the throughput upper bound is irrelevant to $\rho$. Given the instantaneous SINR, the maximum achievable rate for the end-user is set by the Shannon bound, which equals to $W\log_2(1 +\text{SINR})$. It follows that the upper bound of the network throughput is achieved if each end-user can reach its instantaneously maximum achievable rate \cite{alammouri2018sinr}. In the following Theorem, we derive the upper bound of the throughput for the mmWave network.
\begin{thm}
	Consider the mmWave network of relative density $\psi$, where the spatial multiplexing gain of APs equals to $k$. By assuming that the interference is treated as the noise, the upper bound of the throughput is given as
	\begin{equation}
	\overline{\mathcal{T}}\left(\psi,k \right)
	=
	\lambda_{\text{U}}\mathbb{E}_{\Psi}\left[\frac{B}{\Psi}\right]
	\log_2(e)
	\left(
	\int_{0}^{\infty} \frac{c_0 (\tau,\psi)}{1+\tau} \text{d}\,\tau
	+
	\sum\limits_{l=1}^{\infty}  k^{l} 
	\int_{0}^{\infty} \frac{c_l (\tau,\psi)}{1+\tau} \text{d}\,\tau
	\right) .
	\label{eq: bounds on throughput}
	\end{equation}
	Here,
	the PMF of $\Psi$ is derived in Theorem \ref{thm: PMF for tagged AP sector}; $c_l(\tau,\psi)$ is given in Theorem \ref{thm: taylor expansion of coverage probability}.
	\label{thm: network throughput}	
\end{thm}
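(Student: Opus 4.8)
The plan is to begin from the definition of the throughput upper bound in (\ref{eq: throughput upper bound definition}), namely $\overline{\mathcal{T}}(\psi,k) = \lambda_{\text{U}}\mathbb{E}[W\log_2(1+\text{SINR})]$, and to insert the bandwidth allocation $W = B/\Psi$ established in Theorem \ref{thm: PMF for tagged AP sector}. Since the load $\Psi$ is determined by the user process $\Phi_{\text{U}}$ and the serving-cell geometry while the SINR tail $\mathcal{C}_\psi(\cdot,k)$ is governed by the interfering AP process — exactly the conditional decoupling already invoked in passing from the first to the second line of (\ref{eq: rate coverage in coverage probability}) — I would treat $\Psi$ and $\text{SINR}$ as independent, so that the expectation factors as $\overline{\mathcal{T}}(\psi,k) = \lambda_{\text{U}}\,\mathbb{E}_{\Psi}[B/\Psi]\,\mathbb{E}[\log_2(1+\text{SINR})]$. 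This isolates the remaining task: expressing $\mathbb{E}[\log_2(1+\text{SINR})]$ as a functional of the coverage probability.

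For that step I would invoke the standard layer-cake identity that for any nonnegative random variable $\mathbb{E}[\log_2(1+\text{SINR})] = \log_2(e)\int_0^\infty \frac{\mathbb{P}(\text{SINR}>\tau)}{1+\tau}\,\text{d}\tau$. This follows by writing $\ln(1+\text{SINR}) = \int_0^{\text{SINR}} (1+\tau)^{-1}\,\text{d}\tau = \int_0^\infty \mathbbm{1}\{\tau < \text{SINR}\}(1+\tau)^{-1}\,\text{d}\tau$, taking expectations, and swapping expectation and integration by Tonelli's theorem, which is legitimate because the integrand is nonnegative. Recognizing $\mathbb{P}(\text{SINR}>\tau) = \mathcal{C}_\psi(\tau,k)$ from (\ref{def: P_cov}) then gives $\mathbb{E}[\log_2(1+\text{SINR})] = \log_2(e)\int_0^\infty \frac{\mathcal{C}_\psi(\tau,k)}{1+\tau}\,\text{d}\tau$.

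I would then substitute the polynomial expansion $\mathcal{C}_\psi(\tau,k) = c_0(\tau,\psi) + \sum_{l\geq 1} c_l(\tau,\psi)k^l$ from Theorem \ref{thm: taylor expansion of coverage probability} into this integral and interchange the summation over $l$ with the integral over $\tau$, producing $\int_0^\infty \frac{c_0(\tau,\psi)}{1+\tau}\,\text{d}\tau + \sum_{l\geq 1} k^l \int_0^\infty \frac{c_l(\tau,\psi)}{1+\tau}\,\text{d}\tau$. Multiplying through by the prefactor $\lambda_{\text{U}}\mathbb{E}_{\Psi}[B/\Psi]\log_2(e)$ reproduces (\ref{eq: bounds on throughput}) verbatim.

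The main obstacle is justifying the term-by-term integration over the unbounded range $\tau\in(0,\infty)$. I would handle this in two stages. First, the convergence of the total integral $\int_0^\infty \mathcal{C}_\psi(\tau,k)(1+\tau)^{-1}\,\text{d}\tau$ is guaranteed \emph{a posteriori} by the finiteness of $\mathbb{E}[\log_2(1+\text{SINR})]$, which the identity above equates to it. Second, to swap $\sum$ and $\int$ I would appeal to the approximation-error bound (\ref{eq: approximation error of coverage probability}): it shows that the truncation tail $\big|\mathcal{C}_\psi(\tau,k) - \sum_{l=0}^{L} c_l(\tau,\psi)k^l\big|$ is bounded by $(2^\mu-1)e^\psi\psi^{L+2}/(L+2)!$ \emph{uniformly in} $\tau$, which furnishes a dominating, integrable-against-$(1+\tau)^{-1}$ majorant and licenses dominated convergence once convergence of the total integral is in hand. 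A secondary point deserving an explicit remark is the factorization of $\mathbb{E}[W\log_2(1+\text{SINR})]$, which rests on the independence of $\Psi$ and $\text{SINR}$; I would flag this as the same modeling assumption already implicit in Theorem \ref{thm: rate coverage} rather than introducing it as a new hypothesis.
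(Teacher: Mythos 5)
Your first three steps coincide with the paper's proof: the factorization $\overline{\mathcal{T}}(\psi,k)=\lambda_{\text{U}}\,\mathbb{E}[B/\Psi]\,\mathbb{E}[\log_2(1+\text{SINR})]$ by independence of the load and the SINR, the layer-cake identity $\mathbb{E}[\log_2(1+\text{SINR})]=\log_2(e)\int_0^\infty \mathcal{C}_\psi(\tau,k)(1+\tau)^{-1}\,\text{d}\tau$, and the substitution of the expansion from Theorem \ref{thm: taylor expansion of coverage probability} are exactly (\ref{eq: upper bound proof 2}) and steps (a)--(b) of (\ref{proofeq: upper bound with P_cov}). The genuine gap is in your justification of the term-by-term integration, which is precisely where the paper's proof does its analytic work. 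First, the ``a posteriori'' appeal to the finiteness of $\mathbb{E}[\log_2(1+\text{SINR})]$ is circular: nothing in the model bounds the SINR (thermal noise is ignored, and with probability $\psi e^{-\psi}>0$ the LOS ball contains no interferer, so the interference $\mathcal{I}$ in (\ref{eq: interference of LOS}) vanishes), so finiteness of this expectation --- equivalently of $\int_0^\infty \mathcal{C}_\psi(\tau,k)(1+\tau)^{-1}\text{d}\tau$ --- is exactly what must be proved, not assumed. Second, and decisively, the majorant you extract from (\ref{eq: approximation error of coverage probability}) is a \emph{constant} in $\tau$, and a constant is not integrable against $(1+\tau)^{-1}\text{d}\tau$ on $(0,\infty)$: $\int_0^\infty C(1+\tau)^{-1}\text{d}\tau=\infty$. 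A uniform-in-$\tau$ bound on the truncation error yields uniform convergence of the partial sums, which lets you exchange limit and integral on any finite interval $[0,T]$ but gives no control of the tail $\int_T^\infty$; dominated convergence cannot be ``licensed'' this way on a domain of infinite measure.

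What the paper supplies, and what your proposal is missing, is \emph{decay in $\tau$} of the coefficients themselves. It sharpens (\ref{proofeq: Taylor parameter upper bound}) to the $\tau$-dependent estimate $0<\Lambda_j(\tau,r,G)<e^{-j\eta\tau G r^{\alpha_{\text{L}}/2}}(1-r)$ in (\ref{proofeq: tighter bound on Taylor parameter}), then splits the $r$-integral defining $\mathcal{F}_{\psi}^{(l)}(0,\tau,j)$ at $r=(1+\tau)^{-1/\alpha_{\text{L}}}$ to obtain $|\mathcal{F}_{\psi}^{(l)}(0,\tau,j)|<e^\psi\left(\frac{\psi\theta_{\text{A}}}{2\pi}\right)^l\left((1+\tau)^{-1/\alpha_{\text{L}}}+e^{-j\eta G l\tau/\sqrt{1+\tau}}\right)$, which \emph{is} integrable against $(1+\tau)^{-1}\text{d}\tau$; this gives $\int_0^\infty |c_l(\tau,\psi)|(1+\tau)^{-1}\text{d}\tau<\infty$ and hence a meaning for each term of (\ref{eq: bounds on throughput}). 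Without an estimate of this kind, your interchange of $\sum_l$ and $\int_0^\infty$ is unsupported. As a further caution that soft arguments cannot close this gap: even the paper's bound degenerates at $l=0$ (the exponential factor becomes $1$, and indeed $c_0(\tau,\psi)\to\psi e^{-\psi}>0$ as $\tau\to\infty$), so the convergence question hinges entirely on such explicit $\tau$-asymptotics, which your proposal never engages.
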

\begin{proof}
	See Appendix \ref{proof: throughput upper bound}.
\end{proof}

We remark that the coverage probability in (\ref{eq: SINR interms of polynomial}), the throughput in (\ref{eq: rate coverage}), and the throughput upper bound in (\ref{eq: bounds on throughput}) decouple the improvement and degradation of the network performance as the mmWave AP density increases. Before delving into the asymptotic trends in the dense AP deployments, we want to emphasize that the full buffer is assumed for all the performance analysis in this section. It implicitly means that to apply the results in (\ref{eq: SINR interms of polynomial}), (\ref{eq: rate coverage}) and (\ref{eq: bounds on throughput}), the density of mmWave end-users $\lambda_{\text{U}}$ needs to be sufficiently large, which can guarantee that no AP sectors are idle.

\section{Effect of AP Densification on mmWave Network} 
We now investigate how the throughput of a mmWave network scales with the AP density, where the implementation of the fixed-rate coding scheme is assumed throughout this section.
Note that the scaling law of throughput is different in the power-limited and interference-limited scenarios.
Therefore, we first derive the range of relative density threshold $\psi^*$, below which the mmWave network is power-limited and beyond which is interference-limited.
Then, we introduce the concept of densification gain, which captures the throughput gain as the density of APs increases.
Our goal here is to illustrate the presence of the densification plateau for the mmWave network without spatial multiplexing. Moreover, we demonstrate that such a densification plateau can be overcome by employing the spatial multiplexing at the mmWave APs.

\subsection{Relative Density Threshold for mmWave Network} 
For the mmWave network, let $\psi^*(\tau,k)$ denote the relative density that maximizes the coverage probability i.e. $\psi^*(\tau,k) = \text{arg}\max\limits_{\psi\in\mathbb{R}^+} \mathcal{C}_\psi(\tau, k)$, where $\psi^*(\tau,k)$ is defined as the relative density threshold.
Intuitively, the mmWave network with the relative density $\psi < \psi^*(\tau,k) $ is in the power-limited regime, where the coverage probability is limited by the signal strength due to the lack of AP coverage. It follows that the coverage probability will benefit from increasing the AP density, or equivalently, increasing $\psi$.
However, as $\psi$ increases, the mmWave network will transit from the power-limited regime into the interference-limited regime, where the coverage probability starts to decay with $\psi$.
In \cite{bai2015coverage}, $\psi^* = 4$ is shown to be the empirically relative density threshold. 
Next, we analytically derive the range of the relative density threshold.

\begin{lem}
	For a mmWave network, the range of relative density threshold $\psi^*(\tau,k)$ satisfies $2 \leq \psi^*(\tau,k) \leq 4$ for all the SINR threshold $\tau \leq 15$dB and spatial multiplexing gain $k\leq \frac{2\pi}{\theta_{\text{A}}}$.
	\label{lem: optimal relative density}
\end{lem}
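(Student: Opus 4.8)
The plan is to treat $\psi^*(\tau,k)$ as an interior maximizer of $\psi\mapsto\mathcal{C}_\psi(\tau,k)$ and to locate it by the sign of $\partial_\psi\mathcal{C}_\psi(\tau,k)$. First I would record the boundary behaviour: from the closed form in Theorem~\ref{thm: coverage probability for dense mmWave network}, $\mathcal{C}_\psi(\tau,k)\to0$ as $\psi\to0^+$ (no serving AP) and as $\psi\to\infty$ (interference-limited collapse), while $\mathcal{C}_\psi>0$ on $(0,\infty)$; hence a maximizer exists in the interior and satisfies $\partial_\psi\mathcal{C}_\psi=0$. Proving $2\le\psi^*(\tau,k)\le4$ then reduces to the two one-sided statements: $\partial_\psi\mathcal{C}_\psi>0$ whenever $\psi\le2$, and $\partial_\psi\mathcal{C}_\psi<0$ whenever $\psi\ge4$, uniformly over $\tau\le15\,\text{dB}$ and $1\le k\le\frac{2\pi}{\theta_{\text{A}}}$.

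Next I would differentiate the expression (\ref{eq: simplified coverage probability}). Writing $\beta_j(\tau,u):=1-\mathbb{E}[\Lambda_j(\tau,u,\cdot)]$, the prefactor $\psi e^{-\psi}$ combines with the integrand to give $\mathcal{C}_\psi=\psi\sum_{j=1}^{\mu}(-1)^{j+1}\binom{\mu}{j}\int_0^1 e^{-\psi\beta_j(\tau,u)}\,du$, so that $\partial_\psi\mathcal{C}_\psi=\sum_{j}(-1)^{j+1}\binom{\mu}{j}\int_0^1 e^{-\psi\beta_j}(1-\psi\beta_j)\,du$ and, at any critical point, $\psi^*=1/\langle\beta\rangle$, where $\langle\beta\rangle$ is the average of the decay rates $\beta_j(\tau,u)$ against the weights $(-1)^{j+1}\binom{\mu}{j}e^{-\psi\beta_j}$. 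The whole problem thus becomes: show $\langle\beta\rangle\in[\tfrac14,\tfrac12]$. I would first establish the structural bounds $\beta_j(\tau,u)\in[u,1]$ (the lower end is the nearest-AP void term; the upper end $\beta_j\le1$ holds since the normalized interference integral over the ring $[u,1]$ is at most its measure $1-u$) together with the monotonicity of $\beta_j$ in $j$, in $\tau$, and in $k$ (more main-lobe interference enlarges the Laplace exponent). Because $\mathcal{C}_\psi$ decreases in $k$ (Corollary~\ref{cor: bound of coverage probability}) and the decay rate increases in $\tau$ and $k$, the value $\psi^*=1/\langle\beta\rangle$ is nonincreasing in both parameters, which lets me reduce the uniform claim to the corners of the parameter box: the lower bound $\psi^*\ge2$ is binding at $\tau=15\,\text{dB},\,k=\frac{2\pi}{\theta_{\text{A}}}$ (most interference), and the upper bound $\psi^*\le4$ at $k=1$ and the smallest operative $\tau$.

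The delicate point is the alternating binomial sum: the weights defining $\langle\beta\rangle$ are not all positive, so $\langle\beta\rangle$ is not automatically sandwiched by $\min_j\beta_j$ and $\max_j\beta_j$. I would resolve this using the origin of the sum in Alzer's Lemma: for each fixed geometry the signed sum $\sum_j(-1)^{j+1}\binom{\mu}{j}e^{-\psi\beta_j(\tau,u)}$ equals $\mathbb{E}_{\mathcal{I}}[1-(1-e^{-X})^{\mu}]$ with $X\propto\tau\mathcal{I}$, hence is a genuine probability — positive and monotone in $\psi$ — and similarly its $\psi$-derivative is sign-controlled. This restores the needed positivity and monotonicity of the numerator and denominator of $\langle\beta\rangle$ despite the oscillating coefficients, and reduces $\langle\beta\rangle\in[\tfrac14,\tfrac12]$ to bounding the effective decay, i.e. $\mathbb{E}[\Lambda_j(\tau,u,\cdot)]\in[\tfrac12,\tfrac34]$, against the fixed constants of the model.

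I expect the main obstacle to be precisely this quantitative step: turning the structural bounds $\beta_j\in[u,1]$ into the sharp window $\langle\beta\rangle\in[\tfrac14,\tfrac12]$ that yields the constants $2$ and $4$. This requires estimating the normalized interference integral $\mathbb{E}[\Lambda_j]$ explicitly with $\alpha_{\text{L}}=2$, the beam-gain law $\mathbb{P}(\mathbb{G}_{\text{A}}=G_{\text{A}})=\frac{k\theta_{\text{A}}}{2\pi}$ and $\mathbb{P}(\mathbb{G}_{\text{U}}=G_{\text{U}})=\frac{\theta_{\text{U}}}{2\pi}$, and $\mu=10$, and then evaluating it at the two corner configurations. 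The role of the hypothesis $\tau\le15\,\text{dB}$ is exactly to keep $\langle\beta\rangle\le\tfrac12$ so that $\psi^*\ge2$, while $k\le\frac{2\pi}{\theta_{\text{A}}}$ keeps the main-lobe interference probability $\frac{k\theta_{\text{A}}}{2\pi}\le1$ well-defined and bounds it from above, controlling $\langle\beta\rangle$ from the interference side.
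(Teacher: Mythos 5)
Your skeleton (locate $\psi^*$ by the sign of $\partial_\psi\mathcal{C}_\psi$, rewrite (\ref{eq: simplified coverage probability}) as $\mathcal{C}_\psi=\psi\sum_{j}(-1)^{j+1}\binom{\mu}{j}\int_0^1 e^{-\psi\beta_j(\tau,u)}\,\mathrm{d}u$ with $\beta_j=1-\mathbb{E}[\Lambda_j]$, and characterize the critical point by $\psi^*\langle\beta\rangle(\psi^*)=1$) is algebraically sound and is a genuinely different route from the paper: the paper instead uses its polynomial decomposition from Theorem \ref{thm: taylor expansion of coverage probability}, $\mathcal{C}_\psi(\tau,k)=c_0(\tau,\psi)+\sum_{l\ge1}c_l(\tau,\psi)k^l$ with $c_0=\mathbb{P}(\mathrm{SNR}>\tau)$, and then invokes two numerically verified facts --- $\bigl|\sum_{l\ge1}c_l(\tau,\psi)k^l\bigr|<0.1$ for all $\psi<2$, $\tau\le15$dB, $k\le 2\pi/\theta_{\text{A}}$ (so the coverage is essentially the increasing SNR term), and $c_0(\tau,\psi)>0.95$ for $\psi>4$ (so the SNR gain is exhausted and the interference decay takes over). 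However, your proposal has a genuine gap: it never establishes the one quantitative statement to which it reduces everything, namely that the weighted average $\langle\beta\rangle$ lies in $[\tfrac14,\tfrac12]$ uniformly over the parameter box. You explicitly defer this ("the main obstacle") to an unperformed evaluation of $\mathbb{E}[\Lambda_j]$ at corner configurations, and the structural bounds you do supply cannot close it: $\beta_j(\tau,u)\in[u,1]$ only yields $\psi^*\in(1,\infty)$, and the pointwise window $\mathbb{E}[\Lambda_j(\tau,u,\cdot)]\in[\tfrac12,\tfrac34]$ you aim for is false outright, since by your own estimate $\Lambda_j<1-u$, so $\mathbb{E}[\Lambda_j]\to0$ as $u\to1$; the window can only be meant for the $e^{-\psi\beta_j}$-weighted average, which is precisely the hard part.

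Two further steps are asserted rather than proved. First, the positivity/monotonicity needed for the signed sums: the Alzer representation indeed makes the numerator $A(\psi)=\mathcal{C}_\psi/\psi$ a (scaled) probability, but your claim that "its $\psi$-derivative is sign-controlled" begs the question --- the sign of $\partial_\psi\mathcal{C}_\psi$ is exactly what the lemma asks for, and the denominator $B(\psi)=-\partial_\psi A(\psi)$ carries the same alternating coefficients you were trying to avoid. Second, the monotonicity of $\psi^*$ in $(\tau,k)$ used to pass to the corners of the box is nontrivial because the weights $e^{-\psi\beta_j}$ depend jointly on $\psi$, $\tau$ and $k$; Corollary \ref{cor: bound of coverage probability} gives monotonicity of $\mathcal{C}_\psi$ in $k$, not of its maximizer. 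More fundamentally, the constants $2$ and $4$ are model-specific (they depend on $G_{\text{A}},g_{\text{A}},G_{\text{U}},g_{\text{U}},\theta_{\text{A}},\theta_{\text{U}},\mu$), so no purely structural argument can produce them; some numerical verification is unavoidable --- this is what the paper's proof does via its $0.1$ and $0.95$ thresholds --- and your plan stops exactly where that verification would have to begin. A final caveat: your remark that the upper bound binds "at the smallest operative $\tau$" hides a real issue, since as $\tau\to0$ the coverage tends to $1-e^{-\psi}$, which is increasing in $\psi$, so $\psi^*\le4$ can only hold with $\tau$ bounded away from zero; any completed version of your argument would need to make that restriction explicit.
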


\begin{proof}
	To prove the range of $\psi^*(\tau,k)$, we need to show that $\mathcal{C}_{\psi}(\tau,k)$ increases with the relative density $\psi$ for $\psi<2$, while decreases with $\psi$ for $\psi>4$.
	For the mmWave network with the relative density $\psi < 2$, the fact $|\sum\limits_{l=1}^{\infty} c_l (\tau,\psi) k^l |<0.1, \forall \tau\leq15\text{dB, }\, \forall k \leq \frac{2\pi}{\theta_{\text{A}}},$ demonstrates that the interference has a negligible impact on the coverage probability $\mathcal{C}_{\psi}(\tau,k)$ in (\ref{eq: SINR interms of polynomial}) within that range of relative density. It also indicates SINR $\approx$ SNR when the mmWave network has a relative density $\psi < 2$. As the desired signal strength increases with the AP density, $\mathcal{C}_{\psi}(\tau,k)$ increases with $\psi$ if $\psi<2$. 
	However, if $\psi > 4$, we have $c_0(\tau,\psi) > 0.95,  \forall \tau\leq15\text{dB}, \forall  k \leq \frac{2\pi}{\theta_{\text{A}}}$, which implies that only marginal gain in SNR can be obtained beyond the relative density $\psi = 4$ and thus the interference determines the coverage probability. It follows that the mmWave network is in the interference-limited region and $\mathcal{C}_{\psi}(\tau,k)$ decreases with $\psi$ if $\psi > 4$.
\end{proof}
\begin{figure}[!t]
	\vspace{-0.1cm}
	\centering
	\includegraphics[width=0.5\textwidth]{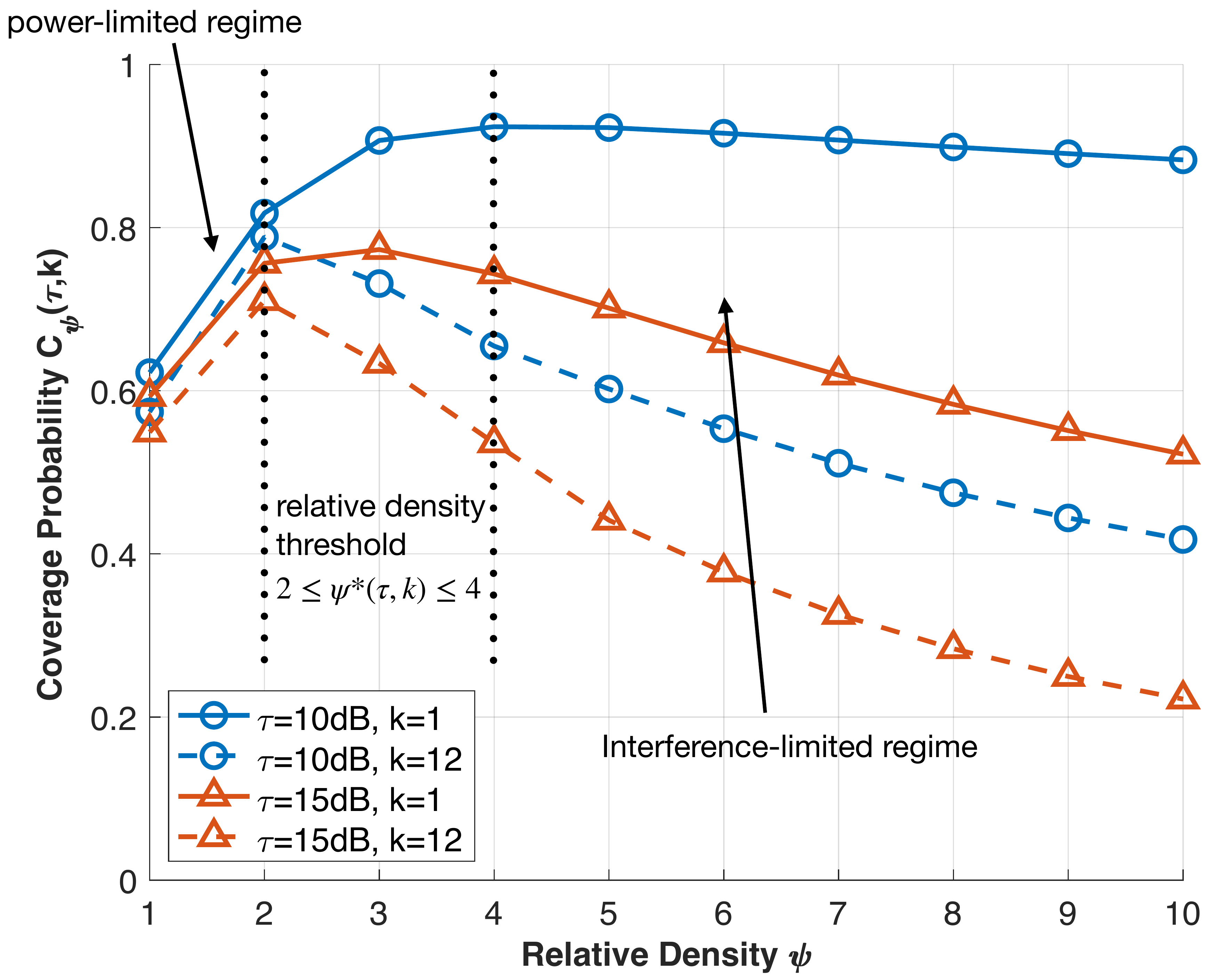}
	\caption{The variation of coverage probability with relative density $\psi$ for mmWave network. For illustration purpose, the SINR threshold is selected as $\tau=10$dB and $\tau = 15$dB, which satisfies the SINR threshold condition i.e. $\tau \leq 15$dB in Lemma \ref{lem: optimal relative density}. It can be observed that $\mathcal{C}_\psi(\tau,k)$ increases with $\psi$ in the power-limited regime i.e. $\psi < 2$ and decay with $\psi$ in the interference-limited regime i.e. $\psi > 4$.}
	\label{fig:optimal_density}
	\vspace{-20pt}
\end{figure}
For a mmWave network, the relative density threshold $\psi^*(\tau,k)$ corresponds to the AP density that is sufficiently large in terms of the desired signal strength, yet not over-densifying in terms of the aggregate interference.
In Fig.\ref{fig:optimal_density}, we numerically demonstrate the range of $\psi^*(\tau,k)$ by plotting the variation of coverage probability $\mathcal{C}_{\psi}(\tau,k)$ with the relative density $\psi$. Here, the SINR thresholds are chosen as $\tau = 10$dB and $\tau = 15$dB for illustration purposes.
It is observed that for the relative density $\psi < 2$, the coverage probability will always benefit from increasing $\psi$. This is because the coverage probability $\mathcal{C}_\psi(\tau,k)$ with the relative density $\psi < 2$ is limited by the desired signal strength for all the SINR threshold $\tau < 15$dB. It follows that the desired signal strength increases with the AP density, which results in a significant improvement in the coverage probability.
However, for the mmWave network of relative density $\psi\geq4$, the interference dominates the coverage probability. Therefore, increasing $\psi$ will harm the coverage probability due to the increase in the interfering APs.
Fig.\ref{fig:optimal_density} also numerically demonstrates that with the SINR threshold $\tau \leq 15\text{dB}$, the relative density threshold $\psi^*(\tau,k)$ ranges from 2 to 4, which agrees with the statement in Lemma \ref{lem: optimal relative density}.

\subsection{Densification Gain}
Next, we introduce the densification gain to study the effect of AP densification on the throughput of mmWave networks. 
For a mmWave network of relative density $\psi$, the densification gain is defined as the following ratio,
\begin{eqnarray}
	\gamma(\psi, k) \triangleq  \frac{\mathcal{T}(\rho(\psi); \psi, k)}{\mathcal{T}(\rho_0; 1, k)}
	= \frac{\mathcal{T}(\psi\rho_0; \psi, k)}{\mathcal{T}(\rho_0; 1, k)},
	\label{def: densification gain}
\end{eqnarray}
where $\rho_0$ denotes the rate threshold when the mmWave network has the relative density $\psi = 1$; the throughput $\mathcal{T}(\rho(\psi); \psi, k)$ with the rate threshold $\rho(\psi)$ is given in (\ref{eq: rate coverage}). Note that we choose the rate threshold $\rho(\psi) = \psi\rho_0$. Correspondingly, by assuming that the interference can be completely eliminated, the throughput of mmWave network $\lambda_{\text{U}}\rho(\psi)$ can scale linearly with $\psi$. 

Note that the value of densification gain $\gamma(\psi, k)$ characterizes the relationship between the AP density and the throughput of mmWave network as follows. If the densification gain $\gamma(\psi,k)$ equals to $\psi$, then it follows from (\ref{def: densification gain}) that $\mathcal{T}(\psi\rho_0; \psi, k) = \psi \mathcal{T}(\rho_0; 1, k)$, which means that the throughput scales linearly with the relative density $\psi$. For the case that $\gamma(\psi,k)$ is upped bounded by a finite constant denoted by $a$, the throughput $\mathcal{T}(\psi\rho_0; \psi, k)$ converges to a finite number $a \mathcal{T}(\rho_0; 1, k)$, which illustrates the existence of densification plateau for mmWave networks.

Next, we derive the densification gain $\gamma(\psi, k)$ for different relative densities $\psi$. Following Lemma \ref{lem: optimal relative density}, the relative density $\psi < \psi^*(\tau,k)$ and $\psi > \psi^*(\tau,k)$ correspond to the power-limited regime and the interference-limited regime, respectively. We start to derive the densification gain $\gamma(\psi,k)$ for the power-limited mmWave network i.e. $\psi < 2$.
\begin{lem}
	For the mmWave network of the relative density $\psi < 2$, the densification gain $\gamma(\psi,k)$ satisfies the inequality $\gamma(\psi,k) \geq \psi$.
	\label{lem: densification gain - power-limited regime}
\end{lem}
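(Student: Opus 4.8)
The plan is to reduce the stated inequality to a comparison of two success probabilities, and then to show that in the power-limited regime the linear growth of the rate threshold $\rho(\psi)=\psi\rho_0$ is exactly offset by the linear growth of the per-user bandwidth, leaving only the favourable $\psi$-dependence of the desired-signal coverage.

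First I would substitute the throughput expression (\ref{eq: rate coverage in coverage probability}) into the definition (\ref{def: densification gain}). Denoting by $\Psi$ and $\Psi'$ the sector loads at relative densities $\psi$ and $1$ (with PMFs given in Theorem \ref{thm: PMF for tagged AP sector}), the common factor $\lambda_{\text{U}}\rho_0$ cancels and
\[
\gamma(\psi,k)=\psi\cdot\frac{\mathbb{E}_{\Psi}\!\left[\mathcal{C}_\psi\!\left(2^{\psi\rho_0\Psi/B}-1,\,k\right)\right]}{\mathbb{E}_{\Psi'}\!\left[\mathcal{C}_1\!\left(2^{\rho_0\Psi'/B}-1,\,k\right)\right]}.
\]
Thus $\gamma(\psi,k)\geq\psi$ is equivalent to showing that the ratio of success probabilities is at least one, i.e.\ that raising the relative density from $1$ to $\psi$ while scaling the rate threshold by $\psi$ does not reduce the probability of a successful transmission.

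Next I would invoke the power-limited characterization. For $1\leq\psi<2$, Lemma \ref{lem: optimal relative density} guarantees that the interference contribution $\sum_{l\geq1}c_l(\tau,\psi)k^l$ is negligible, so that $\mathcal{C}_\psi(\tau,k)\approx c_0(\tau,\psi)=\mathbb{P}(\mathrm{SNR}>\tau)$; by Theorem \ref{thm: taylor expansion of coverage probability} and the accompanying discussion this quantity is independent of $k$ and monotonically increasing in $\psi$. Both numerator and denominator therefore collapse to expectations of the SNR coverage $c_0$, and the only remaining issue is to compare the effective SINR thresholds $2^{\psi\rho_0\Psi/B}-1$ and $2^{\rho_0\Psi'/B}-1$.

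The crucial---and hardest---step is to show that these thresholds essentially coincide, so that the $\psi$-monotonicity of $c_0$ alone settles the inequality. This follows from the bandwidth analysis: by Corollary \ref{cor: average bandwidth}, $\mathbb{E}[\Psi]=1+\xi\frac{\pi R_\text{B}^2\lambda_{\text{U}}}{k\psi}$, and since $\xi\frac{\pi R_\text{B}^2\lambda_{\text{U}}}{k\psi}\gg1$ the load scales as $\mathbb{E}[\Psi]\propto 1/\psi$, whence $\psi\,\mathbb{E}[\Psi]=\psi+\xi\frac{\pi R_\text{B}^2\lambda_{\text{U}}}{k}$ differs from $\mathbb{E}[\Psi']=1+\xi\frac{\pi R_\text{B}^2\lambda_{\text{U}}}{k}$ only by $\psi-1<1$, a negligible amount against the dominant bias term. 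Hence the numerator exponent $\psi\rho_0\Psi/B$ and the denominator exponent $\rho_0\Psi'/B$ carry the same effective threshold $\tau^\star$: the linear bandwidth growth absorbs the linear growth of $\rho(\psi)=\psi\rho_0$. With the thresholds matched, the monotonicity $c_0(\tau^\star,\psi)\geq c_0(\tau^\star,1)$ makes the ratio of success probabilities at least one, giving $\gamma(\psi,k)\geq\psi$. I expect the delicate point to be making this threshold-matching rigorous, since $\Psi$ is integer-valued and sits inside the expectation; the argument rests on $\mathbb{E}[\Psi]\propto1/\psi$ being tight precisely because the bias term dominates the additive constant $1$ throughout the power-limited regime.
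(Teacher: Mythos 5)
Your proposal is correct and follows essentially the same route as the paper's proof: reduce $\gamma(\psi,k)\geq\psi$ to $\mathcal{T}(\psi\rho_0;\psi,k)\geq\psi\,\mathcal{T}(\rho_0;1,k)$, use the mean-load (average-bandwidth) approximation of Corollary \ref{cor: average bandwidth} so that the linear growth of $\rho(\psi)=\psi\rho_0$ is cancelled by the linear bandwidth growth and both sides face the same effective SINR threshold, then invoke the power-limited monotonicity of the coverage probability in $\psi$ from Lemma \ref{lem: optimal relative density}. The only element the paper makes explicit that you leave implicit is the side condition that this common threshold satisfies $\tau = 2^{\rho_0/kW_0}-1 < 15\,$dB (justified there by the large available bandwidth), which is required for Lemma \ref{lem: optimal relative density}, and hence for the negligible-interference step, to apply.
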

\begin{proof}
	It is equivalently to prove $\mathcal{T}(\psi \rho_0; \psi, k) \geq \psi\mathcal{T}(\rho_0; 1, k)$. By approximating the bandwidth $W$ by the average bandwidth in (\ref{eq: average bandwidth}), it follows from Lemma \ref{lem: optimal relative density} that the throughput $\mathcal{T}(\psi \rho_0; \psi, k) = \lambda_\text{U}\psi\rho_0\mathcal{C}_\psi(2^{\rho_0/kW_0} - 1, k) \geq \psi \lambda_{\text{U}}\rho_0\mathcal{C}_1(2^{\rho_0/kW_0} - 1, k) = \psi\mathcal{T}(\rho_0; 1,k)$ with $W_0 = \frac{\xi\pi R_\text{B}^2\lambda_{\text{U}}}{B}$ for all $\psi < 2$ if the SINR threshold $\tau = 2^{\rho_0/kW_0} - 1 < 2^{\rho_0/W_0} - 1 < 15$dB. Note that for the mmWave network with the large bandwidth available, $\tau < 15$dB is considered to be a valid assumption \cite{bai2015coverage,rappaport2013broadband,rappaport2015wideband}. The Lemma then follows.
\end{proof}

Lemma \ref{lem: densification gain - power-limited regime} implies that for the power-limited mmWave network, the throughput increases at least linearly with the AP density.
Next, we move to the interference-limited mmWave network when the relative density $\psi$ becomes large. We first obtain the asymptotic result on the densification gain for the mmWave network without spatial multiplexing i.e. $k=1$.
\begin{lem}
	For the mmWave network with the spatial multiplexing gain $k=1$, the densification gain $\gamma(\psi,1)$ is upper bounded by a finite constant as the relative density $\psi \rightarrow +\infty$.
	\label{lem: plateau for k=1}
\end{lem}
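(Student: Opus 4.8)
The plan is to reduce the statement to a decay estimate on the coverage probability at an exponentially growing SINR threshold. At $k=1$ the densification gain is $\gamma(\psi,1)=\mathcal{T}(\psi\rho_0;\psi,1)/\mathcal{T}(\rho_0;1,1)$, and since the denominator $\mathcal{T}(\rho_0;1,1)$ is a fixed positive constant, it suffices to bound the numerator uniformly in $\psi$. Using the throughput formula of Theorem \ref{thm: rate coverage}, together with the facts that $\Psi\geq 1$ almost surely and that $\mathcal{C}_\psi(\tau,1)$ is decreasing in $\tau$, I would write
\[
\mathcal{T}(\psi\rho_0;\psi,1)=\lambda_{\text{U}}\psi\rho_0\,\mathbb{E}_{\Psi}\!\left[\mathcal{C}_\psi\!\left(2^{\psi\rho_0\Psi/B}-1,1\right)\right]\leq \lambda_{\text{U}}\psi\rho_0\,\mathcal{C}_\psi\!\left(2^{\psi\rho_0/B}-1,1\right).
\]
This already exposes the mechanism of the plateau: by Corollary \ref{cor: average bandwidth} the per-user bandwidth saturates ($\overline{W}\leq B$, with $\overline{W}\to B$ as $\psi\to\infty$), so sustaining the rate threshold $\rho(\psi)=\psi\rho_0$ drives the effective SINR threshold $\tau_\psi:=2^{\psi\rho_0/B}-1$ to grow exponentially in $\psi$. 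The whole lemma therefore hinges on showing that $\psi\,\mathcal{C}_\psi(\tau_\psi,1)$ remains bounded.

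Next I would establish a density-robust high-threshold tail bound of the form $\mathcal{C}_\psi(\tau,1)\leq (1+\psi)e^{-\psi}+C\,\tau^{-2/\alpha_{\text{L}}}$, with $C$ depending only on the beamforming gains. The first term bounds the probability that the LOS ball contains at most the serving AP (no interferer): since the number of LOS APs in the ball is Poisson with mean $\psi$ by Proposition \ref{prop: relative density}, this is $(1+\psi)e^{-\psi}$. On the complementary event there is a nearest interferer at $|x_1|\leq R_\text{B}$, and retaining only its worst-case (side-lobe) contribution gives $\text{SINR}\leq \frac{G_\text{A}G_\text{U}}{g_\text{A}g_\text{U}}\!\left(|x_1|/|x_0|\right)^{\alpha_\text{L}}$. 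The key point is that under the minimum-path-loss (nearest-AP) association the ratio $|x_1|/|x_0|$ of second-nearest to nearest distances in a PPP has a \emph{density-independent} law, with $\mathbb{P}(|x_1|/|x_0|>t)=t^{-2}$ for $t\geq 1$, which yields the term $C\,\tau^{-2/\alpha_\text{L}}$ with $C=(G_\text{A}G_\text{U}/g_\text{A}g_\text{U})^{2/\alpha_\text{L}}$ free of $\psi$.

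Finally I would substitute $\tau=\tau_\psi$ into this bound, obtaining
\[
\psi\,\mathcal{C}_\psi(\tau_\psi,1)\leq \psi(1+\psi)e^{-\psi}+C\,\psi\,\bigl(2^{\psi\rho_0/B}-1\bigr)^{-2/\alpha_\text{L}}\longrightarrow 0,
\]
because in both terms an exponential factor in $-\psi$ dominates the polynomial prefactor. Consequently $\mathcal{T}(\psi\rho_0;\psi,1)\to 0$, and hence $\gamma(\psi,1)$ is upper bounded by a finite constant (indeed it vanishes), which is exactly the assertion.

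The main obstacle is the middle step: making the tail bound genuinely uniform in $\psi$, in particular controlling the LOS-ball truncation (the edge event that the ring $\mathcal{B}(|x_0|,R_\text{B})$ is empty even though $|x_0|\leq R_\text{B}$) and confirming that the nearest/second-nearest distance ratio remains density-free after conditioning on association inside the ball. As a fallback more in the spirit of the paper's machinery, I would instead extract the same $\tau\to\infty$ decay directly from the integral representation in Theorem \ref{thm: coverage probability for dense mmWave network}, by a Laplace-type estimate of $\int_0^1\exp\{\psi\,\mathbb{E}[\Lambda_j(\tau,r,\cdot)]\}\,\mathrm{d}r$ that tracks how the exponent degrades as $\tau$ grows; this avoids the geometric argument at the cost of heavier asymptotics on $\Lambda_j$.
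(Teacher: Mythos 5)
There is a genuine gap, and it sits exactly where you flagged your own uncertainty: the regime in which you evaluate the bandwidth. Your argument hinges on reading Corollary~\ref{cor: average bandwidth} with $\lambda_{\text{U}}$ held fixed, so that $\overline{W}\to B$ and the effective SINR threshold $\tau_\psi = 2^{\psi\rho_0/B}-1$ blows up exponentially. But the paper's densification analysis is carried out under the opposite standing assumption, stated right after Corollary~\ref{cor: average bandwidth} ($\xi\pi R_\text{B}^2\lambda_{\text{U}}/(k\psi)\gg 1$, i.e.\ full buffers, reiterated at the end of Section III), under which the per-user bandwidth scales \emph{linearly} with $\psi$. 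In that regime the rate threshold $\rho(\psi)=\psi\rho_0$ and the bandwidth growth cancel, so the effective SINR threshold is a \emph{constant} $2^{\rho_0/W_0}-1$ (this is precisely step (a) in Appendix~\ref{proof: lem - densification plateau}), not an exponentially growing one. Your mechanism then collapses: your tail bound $\mathcal{C}_\psi(\tau,1)\leq(1+\psi)e^{-\psi}+C\tau^{-2/\alpha_\text{L}}$, evaluated at a fixed $\tau$, gives only $O(1)$, whereas the lemma needs $\mathcal{C}_\psi(\tau,1)=O(1/\psi)$ to offset the factor $\psi\rho_0$. A single-interferer bound built on the density-free law $\mathbb{P}(|x_1|/|x_0|>t)=t^{-2}$ can never produce decay in $\psi$ at fixed $\tau$, because that ratio law is independent of density; the $1/\psi$ decay comes from the \emph{aggregate} interference growing like $\lambda_{\text{A}}$, which is what the paper captures via the concentration result $\lambda_{\text{A}}\,\text{SINR}\to 1/(2\pi\sigma)$ in (\ref{proofeq: limitation of SIR - upper}) followed by a Markov-inequality step in (\ref{proofeq: limitation of densification gain with k=1}).

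A further sign that your regime is not the intended one: in it, the throughput tends to zero for \emph{every} $k$ (the threshold $2^{\psi\rho_0\Psi/B}-1$ explodes regardless of $k$ since $\Psi\geq 1$ and the bandwidth is capped at $B$), which would make Lemma~\ref{lem: densification gain with increased k} false and would describe total outage rather than the plateau at a positive level shown in Fig.~4. To be fair, your internal steps are individually sound: the reduction via $\Psi\geq1$ and monotonicity of $\mathcal{C}_\psi$ in $\tau$ is valid, the Poisson void bound $(1+\psi)e^{-\psi}$ follows from Proposition~\ref{prop: relative density}, and the nearest/second-nearest ratio bound survives the LOS-ball truncation by a simple union bound, so you have correctly proved a statement about the fixed-$\lambda_{\text{U}}$ regime. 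But that is not the statement the paper proves; closing the actual gap requires controlling the shot-noise interference from \emph{all} in-ball APs at a constant threshold, i.e.\ essentially the argument of Appendix~\ref{proof: lem - densification plateau}, which your proposal relegates to a vague fallback (and even that fallback, as described, targets $\tau\to\infty$ decay rather than the needed fixed-$\tau$, large-$\psi$ decay).
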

\begin{proof}
See Appendix \ref{proof: lem - densification plateau}.
\end{proof}
Since the densification gain $\gamma(\psi,1)$ is saturated to a finite constant as $\psi$ increases, the existence of densification plateau for $k=1$ is then proved in Lemma \ref{lem: plateau for k=1}.
 In the following, we continue to show that the densification gain $\gamma(\psi,k)$ is proportional to $\psi$ if the spatial multiplexing gain $k$ increases linearly with $\psi$.
\begin{lem}
	For the interference-limited mmWave network, the densification gain $\gamma(\psi,k)$ is proportional to the relative density $\psi$ if the spatial multiplexing gain $k$ increases linearly with $\psi$ as $\psi \rightarrow +\infty$.
	\label{lem: densification gain with increased k}
\end{lem}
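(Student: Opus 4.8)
The plan is to collapse the densification gain into $\psi$ times a ratio of two coverage probabilities evaluated at one common, $\psi$-shrinking SINR threshold, and then to show that this ratio converges to a positive finite constant, so that $\gamma(\psi,k)=\Theta(\psi)$. Following the proof of Lemma \ref{lem: densification gain - power-limited regime}, I would replace the bandwidth by the average bandwidth of Corollary \ref{cor: average bandwidth}. In the full-buffer regime used throughout Section III (with $\lambda_\text{U}$ large enough that $\xi\frac{\pi R_\text{B}^2\lambda_\text{U}}{k\psi}\gg 1$), equation (\ref{eq: average bandwidth}) gives $\overline{W}(\psi,k)\approx k\psi/a$ with $a=\xi\pi R_\text{B}^2\lambda_\text{U}/B$ a constant. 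With the chosen rate threshold $\rho(\psi)=\psi\rho_0$, the numerator threshold $2^{\rho(\psi)/\overline{W}(\psi,k)}-1$ and the denominator threshold $2^{\rho_0/\overline{W}(1,k)}-1$ both collapse to the common value $\tau=2^{a\rho_0/k}-1$. Since $\mathcal{T}(\psi\rho_0;\psi,k)=\lambda_\text{U}\psi\rho_0\,\mathcal{C}_\psi(\tau,k)$ by (\ref{eq: rate coverage}), the definition (\ref{def: densification gain}) becomes $\gamma(\psi,k)=\psi\,\mathcal{C}_\psi(\tau,k)/\mathcal{C}_1(\tau,k)$; writing $k=\beta\psi$ for the linear scaling, $\tau=2^{a\rho_0/(\beta\psi)}-1\sim C/\psi\to 0$, so everything reduces to controlling the two coverage probabilities as $\tau\to 0$.

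For the denominator I claim $\mathcal{C}_1(\tau,k)\to 1-e^{-1}$. At relative density $1$ the interference $\mathcal{I}$ of (\ref{eq: interference of LOS}) is a sum over a Poisson number of APs (mean at most one) of terms $\mathbb{G}_\text{A}\mathbb{G}_\text{U}|x|^{-\alpha_\text{L}}$, hence $\mathcal{I}\le G_\text{A}G_\text{U}\sum|x|^{-\alpha_\text{L}}<\infty$ almost surely, and this dominating bound does not depend on $k$ (raising $k$ only moves mass of $\mathbb{G}_\text{A}$ from $g_\text{A}$ to $G_\text{A}$). Thus $\tau\mathcal{I}\to 0$ a.s., the coverage event $\{G_\text{A}G_\text{U}|x_0|^{-\alpha_\text{L}}>\tau\mathcal{I}\}$ of (\ref{eq: LOS coverage probability}) converges to $\{|x_0|^{-\alpha_\text{L}}>0\}$, and dominated convergence yields $\mathcal{C}_1(\tau,k)\to\mathbb{P}(\text{a serving AP exists})=1-e^{-1}$.

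The numerator $\mathcal{C}_\psi(\tau,k)$ is the crux, since here $\psi\to\infty$, $k=\beta\psi\to\infty$ and $\tau\to 0$ pull in opposite directions at once. I would show $\mathcal{C}_\psi(\tau,k)\to 1$. Rescaling distances by $R_\text{B}/\sqrt{\psi}$ turns $\Phi_\text{A}$ into a Poisson process of intensity $1/\pi$ independent of $\psi$ and makes the ratio $G_\text{A}G_\text{U}|x_0|^{-\alpha_\text{L}}/\mathcal{I}$ scale-invariant; the only surviving $\psi$-dependence is the LOS-ball cutoff, which in rescaled coordinates sits at radius $\sqrt{\psi}$. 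Under this cutoff the rescaled interference grows only sub-linearly in $\psi$ (logarithmically when $\alpha_\text{L}=2$) while the desired term stays of order one, so the typical $\text{SINR}$ decays at most like $1/\ln\psi$, whereas $\tau\sim C/\psi$ decays far faster; hence $\text{SINR}/\tau\to\infty$ and $\mathcal{C}_\psi(\tau,k)=\mathbb{P}(\text{SINR}>\tau)\to\mathbb{P}(\text{signal}>0)=1-e^{-\psi}\to 1$. Equivalently, in the polynomial form (\ref{eq: SINR interms of polynomial}) the leading coefficient $c_0(\tau,\psi)=\mathbb{P}(\text{SNR}>\tau)\to 1$ while the interference correction $\sum_{l\ge 1}c_l(\tau,\psi)k^l$, which lies in $[-c_0,0]$, must vanish; the delicate point, and where I expect the real work, is bounding each $c_l(\tau,\psi)$ as $\tau\to 0$ tightly enough that its decay dominates the growth of $k^l=(\beta\psi)^l$.

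Combining the three steps gives $\gamma(\psi,k)=\psi\,\mathcal{C}_\psi(\tau,k)/\mathcal{C}_1(\tau,k)\to \frac{e}{e-1}\,\psi$, so $\gamma$ is proportional to $\psi$, in sharp contrast to the saturation at fixed $k=1$ of Lemma \ref{lem: plateau for k=1}. If only the qualitative scaling $\gamma(\psi,k)=\Theta(\psi)$ is required, it is enough to note that the numerator is at most $1$ and bounded below by a positive constant while the denominator is bounded away from $0$, which already pins $\gamma$ between two positive multiples of $\psi$ and hence rules out a densification plateau.
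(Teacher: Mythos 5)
Your proposal is correct in substance and shares its skeleton with the paper's proof: both replace the random bandwidth by the average bandwidth of Corollary \ref{cor: average bandwidth}, observe that with $\rho(\psi)=\psi\rho_0$ and $k\propto\psi$ the effective SINR threshold collapses to $\tau = 2^{a\rho_0/k}-1 = \Theta(1/\psi)$, and then argue that the coverage probability evaluated at a threshold shrinking this fast cannot vanish. Where you genuinely depart from the paper is in how that last fact is established, and in your treatment of the denominator. The paper recycles the machinery of Appendix \ref{proof: lem - densification plateau}: it invokes the scaling bound $\lim_{\lambda_\text{A}\to\infty}\lambda_\text{A}\,\text{SINR} \geq \frac{1}{2\pi\sigma R_\text{B}^{\alpha_\text{L}}}$, picks a comparison threshold $\tau_0=\Theta(1/\psi)$ sitting below this limit, concludes that $\mathcal{C}_\psi(\tau,k)$ stays non-trivial, and leaves the denominator $\mathcal{T}(\rho_0;1,k)$ implicit. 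You instead derive the SINR scaling yourself by rescaling the PPP, obtaining the sharper statement that the SIR decays only like $1/\ln\psi$ when $\alpha_\text{L}=2$ (sub-polynomially in general), strictly slower than $\tau$, so the numerator coverage tends to $1$ rather than merely staying bounded away from $0$; and you pin the denominator down explicitly, $\mathcal{C}_1(\tau,k)\to 1-e^{-1}$ uniformly in $k$ via the $k$-independent domination $\mathcal{I}\leq G_\text{A}G_\text{U}\sum|x|^{-\alpha_\text{L}}$, which yields the explicit limit $\gamma(\psi,k)\to\frac{e}{e-1}\,\psi$. Both routes prove the lemma; yours is self-contained (the paper's scaling bound leans on an external lemma and on a path-loss cap $\ell(|x_0|)=1$ for $|x_0|<1$ introduced only inside the appendix) and quantitatively stronger, while the paper's is shorter because it reuses the apparatus already built for Lemma \ref{lem: plateau for k=1}.

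One step you should make explicit: passing from ``the rescaled mean interference grows like $\ln\psi$'' to ``$\text{SINR}/\tau\to\infty$ in probability'' requires a concentration argument, since coverage is a probability, not a mean. A one-line Markov bound closes it: conditioning on the order-one rescaled serving distance $r$, $\mathbb{P}(\text{SINR}\leq\tau)\leq \mathbb{P}\left(\mathcal{I}\geq r^{-\alpha_\text{L}}/\tau\right)\leq \tau\, r^{\alpha_\text{L}}\,\mathbb{E}[\mathcal{I}] = O\left(\ln\psi/\psi\right)\to 0$, and then uncondition over $r$. With that line added, your probabilistic route is complete, and the ``real work'' you anticipate — bounding each coefficient $c_l(\tau,\psi)$ of (\ref{eq: SINR interms of polynomial}) against the growth of $k^l=(\beta\psi)^l$ — is unnecessary; working directly with the SINR event, as both you and the paper ultimately do, avoids the polynomial expansion altogether.
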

\begin{proof}
Following the similar steps in (\ref{proofeq: limitation of SIR - upper}) of Appendix \ref{proof: lem - densification plateau}, we then have
\begin{eqnarray}
\lim\limits_{\lambda_{\text{A}} \rightarrow +\infty} \lambda_{\text{A}} \text{SINR} 
&\geq& 
\frac{ 1 }{2 \pi  \sigma R_\text{B}^{\alpha_{\text{L}}}},
\label{proofeq: limitation of SIR - lower}
\end{eqnarray}
where $\sigma = \int_{0}^{R_\text{B}} r^{-\alpha_\text{L}+ 1}  \text{d}r$ is a non-trivial finite constant. 

It follows from Lemma \ref{lem: densification gain - power-limited regime} that $\mathcal{T}(\psi\rho_0; \psi, k) \approx \lambda_{\text{U}}\psi\rho_0\mathcal{C}_\psi(\tau,k)$ at $\tau = 2^{\frac{1}{k}\cdot \frac{\rho_0\xi\pi R_\text{B}^2\lambda_{\text{U}}}{B}}-1$. By letting $\tau_0 = \frac{1}{2\lambda_{\text{A}}}   \cdot \frac{1}{2 \pi  \sigma R_\text{B}^{\alpha_{\text{L}}}}$, we can conclude from (\ref{proofeq: limitation of SIR - lower}) that $\lim\limits_{\lambda_{\text{A}} \rightarrow +\infty} \text{SINR} \geq  \lim\limits_{\lambda_{\text{A}} \rightarrow +\infty} \frac{1}{\lambda_{\text{A}}} \cdot \frac{1}{2 \pi  \sigma R_\text{B}^{\alpha_{\text{L}}}} > \tau_0$. Note that $ \psi = \pi R_\text{B}^2 \lambda_\text{A}$ and $R_\text{B}$ is unchanged, thus $\lim\limits_{\psi \rightarrow +\infty}\text{SINR} = \lim\limits_{\lambda_{\text{A}} \rightarrow +\infty}\text{SINR}> \tau_0$ a.s.. It means that $\mathcal{C}_\psi(\tau_0,k) = \mathbb{P}(\text{SINR} > \tau_0)$ is a finite non-trial number if $\psi$ is sufficiently large. It follows that if $\lim\limits_{\psi \rightarrow +\infty}\tau= \tau_0$, then we have $\lim\limits_{\psi \rightarrow +\infty} \mathcal{C}_\psi(\tau,k) = \mathcal{C}_\psi(\tau_0,k)$. Note that $\lim\limits_{\psi \rightarrow +\infty}\tau= \lim\limits_{\psi \rightarrow +\infty} \; 2^{\frac{1}{k}\cdot \frac{\rho_0\xi\pi R_\text{B}^2\lambda_{\text{U}}}{B}}-1$ and $\tau_0 = \frac{1}{\psi} \cdot \frac{R_\text{B}^{-\alpha_{\text{L}} + 2}}{4\sigma}$. By using the fact that $\log_2(1 + 1/\psi) \approx 1/\psi$ as $\psi \rightarrow +\infty$, the proof then completes.
\end{proof}
Lemma \ref{lem: densification gain with increased k} implies that given the rate threshold $\rho$, the increased spatial multiplexing gain $k$ improves the coverage probability $\mathcal{C}_\psi(\tau,k)$ by lowering the SINR threshold $\tau$, which consequently drives the gain in the throughput $\mathcal{T}(\rho; \psi, k)$.
Note that an implicit assumption on the mmWave AP is that the spatial multiplexing gain satisfies $k\leq \frac{2\pi}{\theta_{\text{A}}}$, where $\theta_{\text{A}}$ is the main lobe width of the AP. Therefore, as $k$ increases with $\psi$, $\theta_{\text{A}}$ also needs to decrease with $\psi$.
\begin{figure}[!t]
	\vspace{-0.1cm}
	\begin{minipage}[t]{0.47\textwidth}
		\centering
		\includegraphics[width=3in]{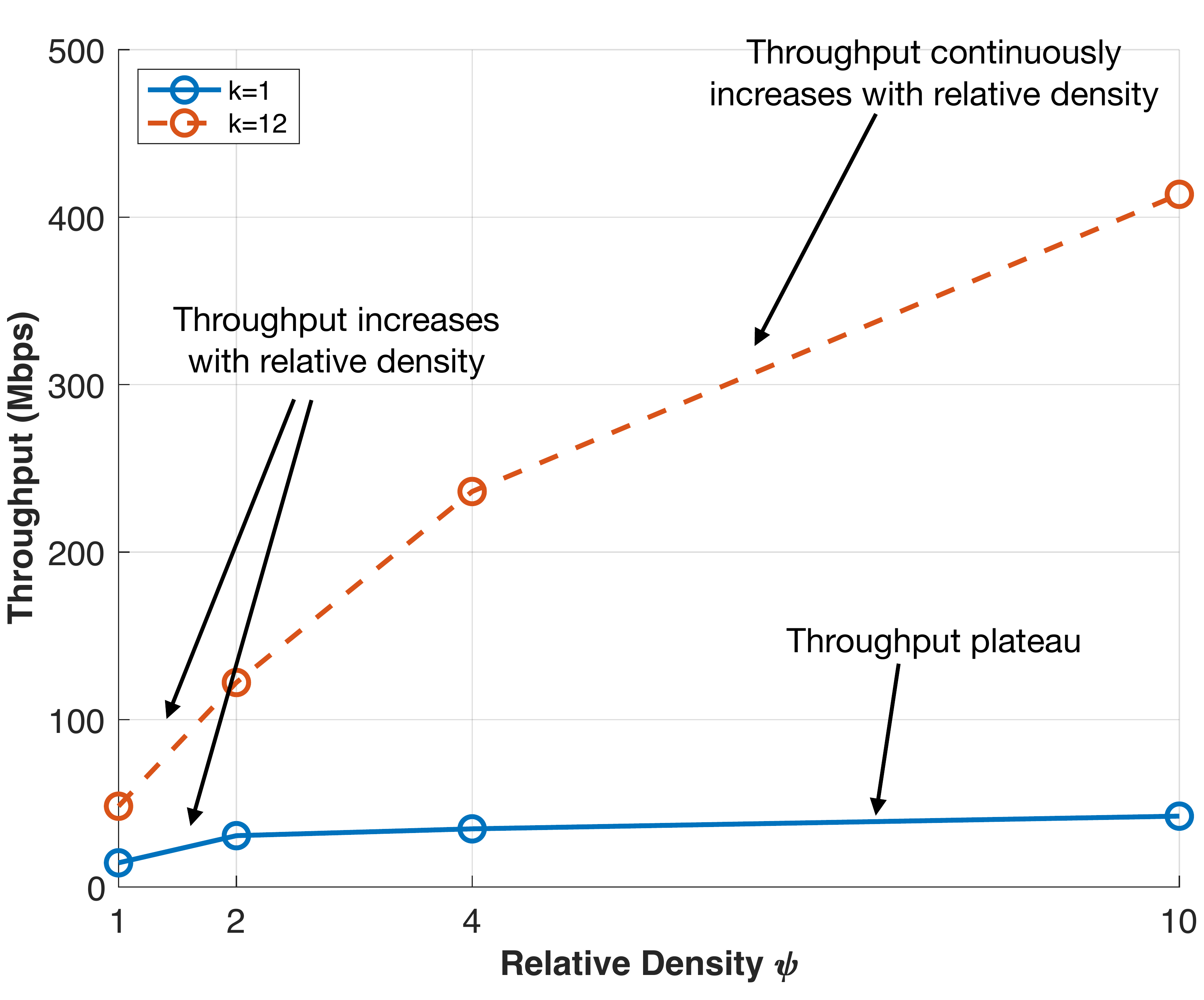}
		\caption{The throughput of mmWave network (normalized by $\lambda_{\text{U}}$), i.e., $\mathcal{T}(\rho(\psi); \psi,k)/\lambda_{\text{U}}$, with spatial multiplexing gain $k=1$ and $k=12$. The rate threshold of fixed-rate coding scheme is chosen as $\rho(\psi) = \psi\rho_0$ with $\rho_0 = 80$Mbps.}
		\label{fig:achievable_throughput}
	\end{minipage} %
	\hfill
	\begin{minipage}[t]{0.47\textwidth}
		\centering
		\includegraphics[width=3in]{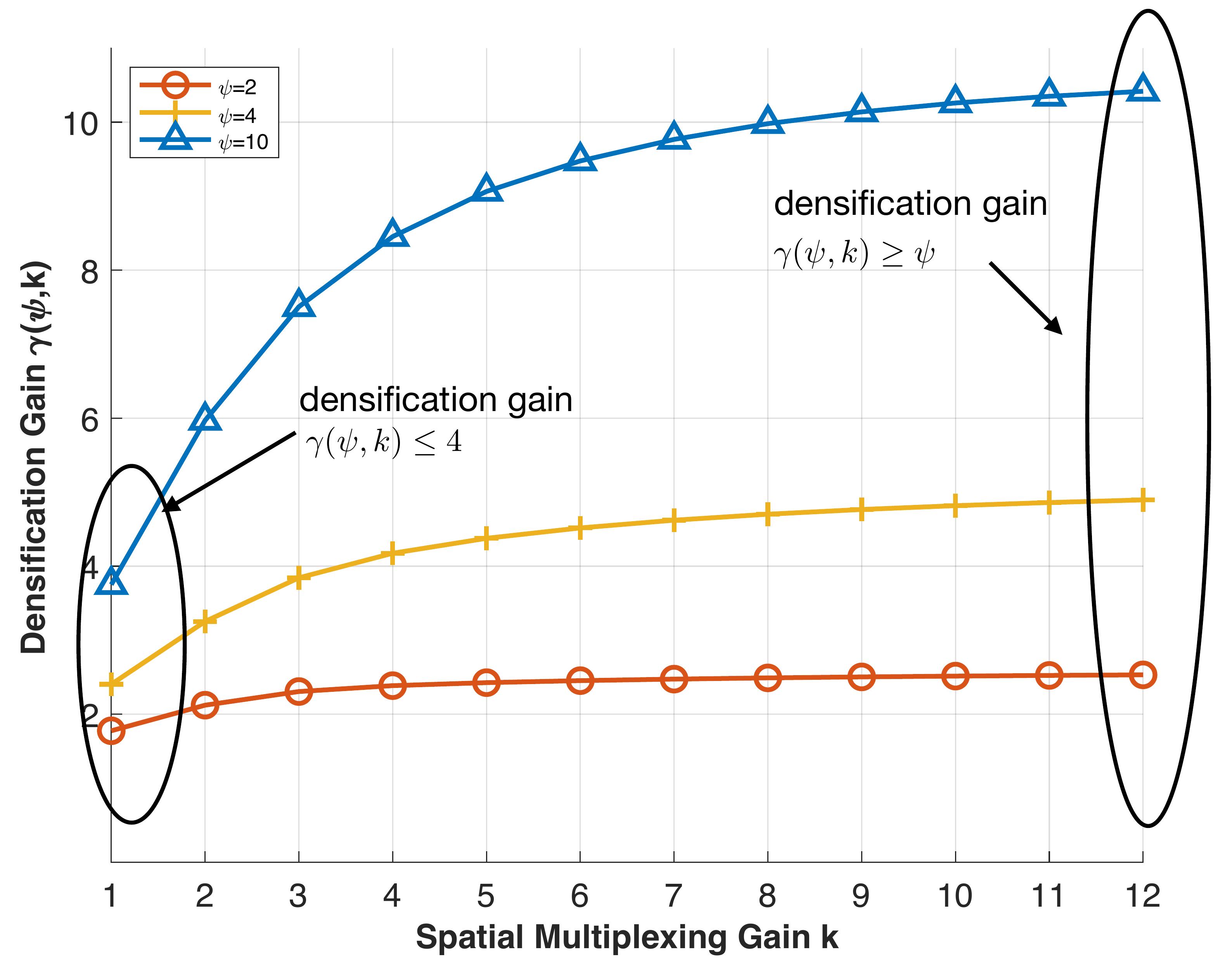}
		\caption{Densification gain $\gamma(\psi,k)$ vs. spatial multiplexing gains $k$ ($\rho_0 = $80Mz). Without spatial multiplexing i.e. $k=1$, the mmWave network reaches the densification plateau at $\psi = 4$ since $\gamma(\psi,1) \leq 4$. With $k\geq 8$, $\gamma(\psi,k) \geq \psi$ means the throughput scales at least linearly with $\psi$.}
		\label{fig:densification_gain}
	\end{minipage}
	\vspace{-20pt}
\end{figure}

In Fig.\ref{fig:achievable_throughput}, we compare the throughput normalized by $\lambda_{\text{U}}$, i.e., $\mathcal{T}(\rho(\psi); \psi, k)/\lambda_{\text{U}}$ for the mmWave network with different relative densities $\psi$. 
It can be seen from Fig.\ref{fig:achievable_throughput} that for the spatial multiplexing gain $k=1$, the AP densification beyond $\psi = 4$ does not provide any further improvement for the throughput of mmWave network, which illustrates that the mmWave network with $k=1$ reaches the densification plateau at $\psi = 4$. 
In sharp contrast to $k=1$, the throughput of the mmWave network with $k=12$ can be improved by increasing the AP density for $\psi \leq 10$, which implies that the deployment of spatial multiplexing at mmWave APs overcomes the densification plateau for the mmWave network.

In Fig.\ref{fig:densification_gain} we show the variation of densification gain $\gamma(\psi,k)$ with $k$ for different relative densities $\psi$.
From Fig.\ref{fig:densification_gain}, it is clear that for the relative density $\psi = 2$, the densification gain $\gamma(\psi,k) \geq \psi$ holds for all $k$, as stated in Lemma \ref{lem: densification gain - power-limited regime}.
As the relative density $\psi$ increases from $2$ to $4$, the mmWave network transits from the power-limited region to the interference region.
Then the densification gain $\gamma(\psi,k)$ for $k>1$ diverges from $\gamma(\psi,1)$.

Fig. \ref{fig:densification_gain} numerically demonstrates Lemma \ref{lem: plateau for k=1} and \ref{lem: densification gain with increased k}.
It can be observed that with the spatial multiplexing gain $k=1$, the densification gain $\gamma(\psi,1)$ for different relative densities $\psi$ are clustered to a constant $a\leq 4$. It indicates that with $k=1$,  only marginal throughput gain can be observed beyond $\psi = 4$. In other words, the mmWave network with $k=1$ reaches the densification plateau at the relative density $\psi = 4$.
In contrast to $k=1$, for $k>1$, the densification gain $\gamma(\psi,k)$ scales with $\psi$, which illustrates that the throughput can consistently increase with the AP density. It is equivalent to say that the densification plateau of the mmWave network can be avoided by deploying the spatial multiplexing gain $k\propto \psi$ at the APs. 

\section{Numerical Results on mmWave Network Throughput}
This section shows the numerical results on throughput gains as the spatial multiplexing gain at mmWave APs increases.
Note that by choosing the optimal rate threshold $\rho^*$, the fixed-rate coding scheme provides the tightest lower bound for the throughput of mmWave networks, while the throughput upper bound is derived in (\ref{eq: bounds on throughput}). 
Moreover, we compare the multi-rate coding scheme with the throughput upper and lower bounds, which motivates a performance-complexity trade-off for coding schemes in the mmWave network.

\subsection{Throughput Gain via Spatial Multiplexing}
We start to provide an intuitive explanation on how the spatial multiplexing gain affects the throughput of mmWave network. 
Corollary \ref{cor: average bandwidth} shows that the average bandwidth of end-user scales linearly with $k$.
Consequently, increasing $k$ equivalently lowers the SINR threshold $\tau = 2^{\rho \Psi/B} - 1$ for the coverage probability $\mathcal{C}_\psi(\tau,k)$ in (\ref{eq: rate coverage in coverage probability}) and thus improves the network throughput $\mathcal{T}(\rho;\psi,k)$.
For completeness, the minor drawback of increasing $k$ is shown in Corollary \ref{cor: bound of coverage probability} as the coverage probability $\mathcal{C}_\psi(\tau,k)$ decreasing with $k$. 

From the analytical viewpoint, by fixing the relative density $\psi$, the throughput in (\ref{eq: rate coverage}) is a polynomial of the spatial multiplexing gain $k$, while the polynomial's coefficient $\mathbb{E}_{\Psi}\left[ c_l \left(2^{\rho \Psi/ B} - 1,\psi \right)\right]$ and $k$ are also correlated due to the dependency of $\Psi$ on $k$, as shown in Theorem \ref{thm: PMF for tagged AP sector}.
To numerically demonstrate the effect of spatial multiplexing on the throughput of mmWave networks, $\mathcal{T}(\rho; \psi,k)/\lambda_{\text{U}}$ in (\ref{eq: rate coverage}) is plotted against $\rho$ for different spatial multiplexing gains $k$ in Fig.\ref{fig: throughput vs. rate -fixed rate coding scheme}. Here, the relative density of mmWave network is chosen as $\psi = 4$ and $\psi = 10$.
From Fig.\ref{fig: throughput vs. rate -fixed rate coding scheme}, it can be clearly seen that the throughput $\mathcal{T}(\rho; \psi, k)$ always increases with the spatial multiplexing gain $k$ regardless of the chosen $\psi$ and $\rho$. Hence, we can conclude that the deployment of spatial multiplexing at mmWave APs is always beneficial for the throughput of mmWave networks.
\begin{figure}[!t]
	\vspace{-0.1cm}
	\centering
	\includegraphics[width=0.5\textwidth]{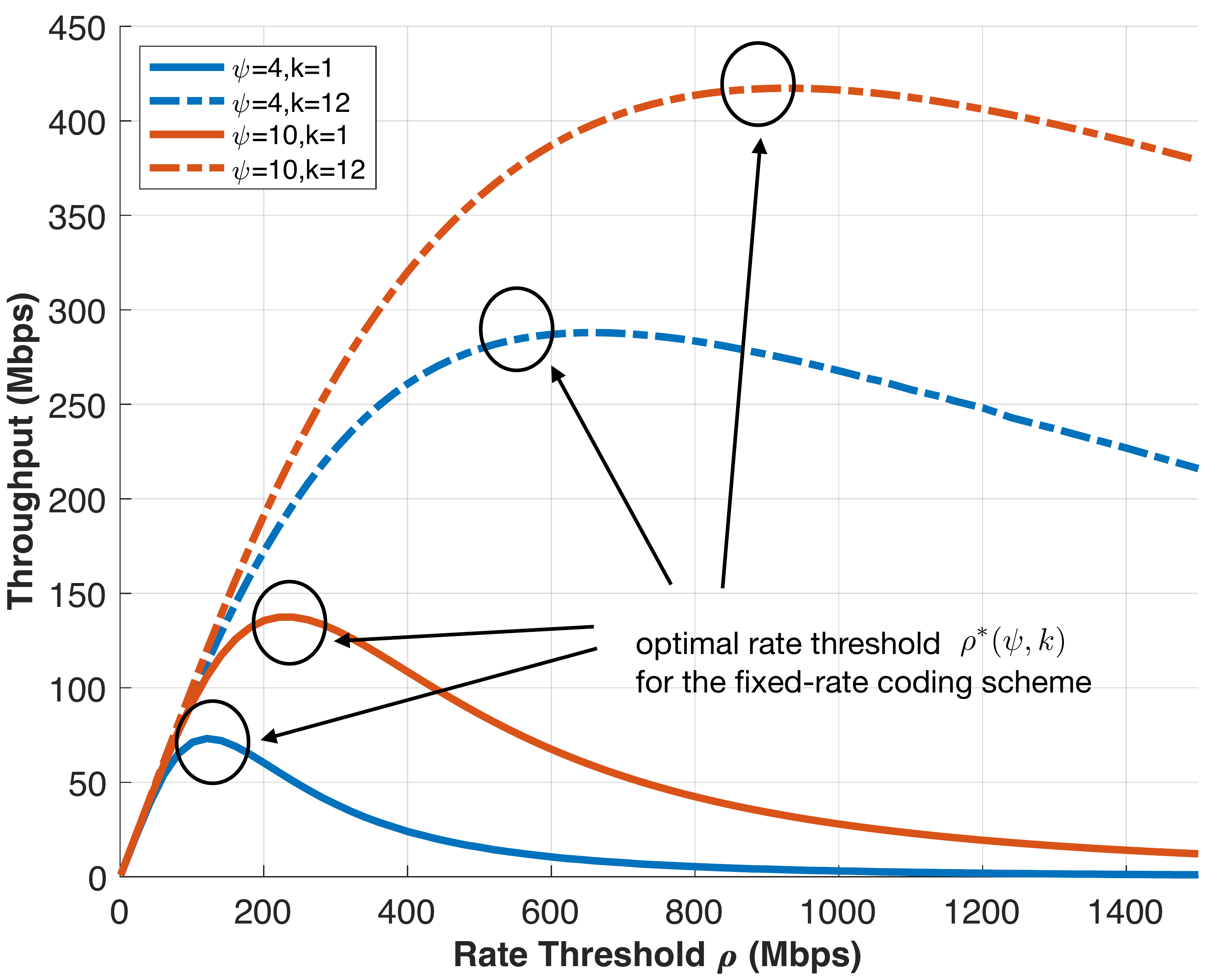}
	\caption{Comparison of the throughput $\mathcal{T}(\rho; \psi,k)/\lambda_{\text{U}}$ for the spatial multiplexing gain $k=1$ and $k=12$. Here, relative density $\psi = 4$ (in red) and $\psi = 10$ (in blue) represent the mmWave network with dense APs and ultra-dense APs, respectively. By deploying the spatial multiplexing gain $k=12$ at mmWave APs, the improvement in the throughput is clearly seen. Note that the fixed-rate coding scheme with the optimal $\rho^*(\psi,k)$ is used to provide the lower bound for mmWave network throughput.}
	\label{fig: throughput vs. rate -fixed rate coding scheme}
	\vspace{-20pt}
\end{figure}

\subsection{Throughput Upper and Lower Bounds}
By deploying the fixed-rate coding scheme, the throughput $\mathcal{T}(\rho; \psi, k)$ of the mmWave network achieves its maximum value when the rate threshold $\rho$ is chosen such that $\rho^*(\psi,k) = \text{arg}\max\limits_{\rho\in\mathbb{R}^+}\,\mathcal{T}(\rho; \psi, k)$. 
In Fig.\ref{fig: throughput vs. rate -fixed rate coding scheme}, we show the variation of throughput normalized by $\lambda_{\text{U}}$, i.e., $\mathcal{T}(\rho; \psi, k)/\lambda_{\text{U}}$ with the rate threshold $\rho$. It can be seen that with the same relative density $\psi$, the optimal rate threshold $\rho^*(\psi,k)$ increases as the spatial multiplexing gain $k$ increases. If $k$ is fixed, then the mmWave network with higher relative density $\psi$ requires a larger rate threshold $\rho^*(\psi,k)$ to maximize the throughput $\mathcal{T}(\rho; \psi, k)$.
By choosing the optimal rate threshold $\rho^*(\psi,k)$, the fixed-rate coding scheme then provides the tightest lower bound for the throughput of mmWave network.

 In Fig.\ref{fig:NT_vs_k}, we plot the throughput upper bound $\overline{\mathcal{T}}(\psi, k)/\lambda_\text{U}$ in (\ref{eq: bounds on throughput}) and the tightest lower bound $\mathcal{T}(\rho^*; \psi, k)/\lambda_{\text{U}}$ for the mmWave network with different spatial multiplexing gains. Here, we choose the relative densities as $\psi = 4$ and $\psi = 10$, which correspond to the dense and ultra-dense deployments of mmWave APs, respectively.
It can be seen in Fig.\ref{fig:NT_vs_k} that both the upper bound and the tightest lower bound of the network throughput increases as the spatial multiplexing gain $k$ increases. It illustrates that the overall performance of the mmWave network is improved by increasing $k$.
By comparing the upper bound and the tightest lower bound in Fig.\ref{fig:NT_vs_k}, it can be observed that the gap between the two bounds grows when the spatial multiplexing gain $k$ increases. Consequently, we can conclude that for the mmWave network, the performance of the fixed-rate coding scheme becomes less satisfactory as the spatial multiplexing gain $k$ increases.

\begin{figure}
	\vspace*{-0pt}
	\centering
	\subfloat[{\scriptsize Relative density $\psi = 4$ represents the dense deployments of mmWave APs.} \label{fig:NT_vs_k_psi4}]{%
		\includegraphics[width=0.48\linewidth]{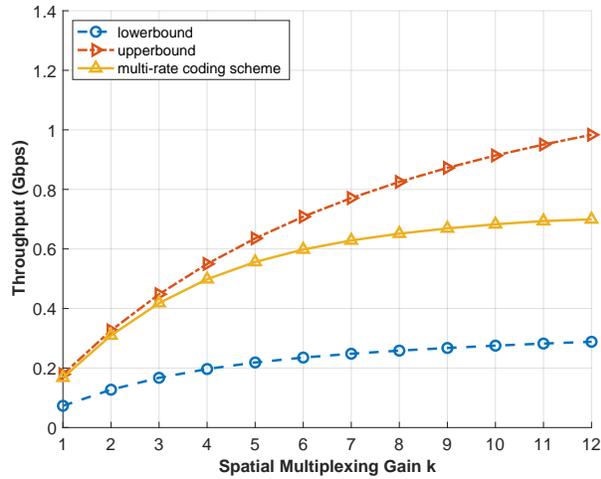}}
	\hfill
	\subfloat[{\scriptsize Relative density $\psi = 10$ represents the ultra-dense deployments of mmWave APs.}\label{fig:NT_vs_k_psi10}]{%
		\includegraphics[width=0.48\linewidth]{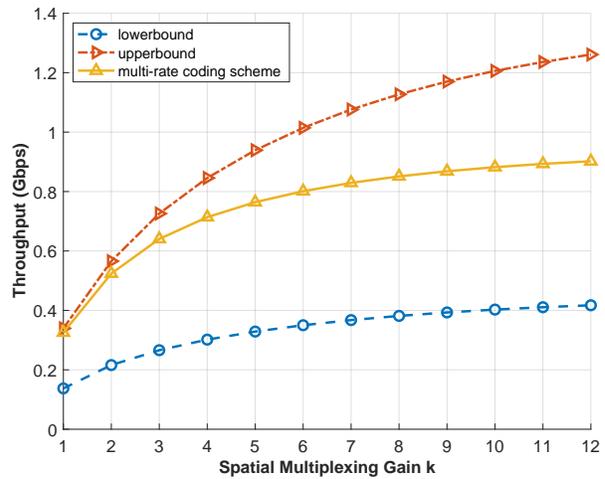}}
	\caption{\textcolor{black}{Comparison of the throughput upper and lower bounds for the mmWave network, where the throughput achieved by the multi-rate coding scheme lies in between.}}
	\label{fig:NT_vs_k} 
	\vspace*{-20pt}
\end{figure}

\subsection{Throughput of Multi-Rate Coding Scheme}
To achieve the throughput upper bound $\overline{\mathcal{T}}(\psi, k)$, each mmWave AP is required to deploy the coding scheme that can be instantaneously updated according to the SINR value of each end-user. However, the operational complexity of such a strategy is prohibitively high, which renders it impractical to be implemented in the mmWave network. 
In sharp contrast to the upper bound, the implementation of the fixed-rate coding scheme is simple. 
However, by fixing the rate threshold, any received rate higher than $\rho$ cannot be explored. Moreover, the received rate smaller than $\rho$ is always considered as an outage for the mmWave end-user. Therefore, the fixed-rate coding scheme results in a sub-optimal network throughput $\mathcal{T}(\rho; \psi, k)$.

Note that the multi-rate coding scheme defined in (\ref{eq: data rate of end-user - multi-rate coding scheme}) is a compromise between the operational complexity and the throughput performance \cite{alammouri2018unified}. It follows from (\ref{eq: throughput of multi-rate coding scheme}) that the multi-rate coding scheme is expected to achieve a throughput in between the upper and lower bounds. While the comprehensive study of a multi-rate coding scheme is out of the scope, in Fig.\ref{fig:NT_vs_k} we numerically show the throughput corresponding to the rate thresholds $\{\rho_i\}_{i=1}^{10}$ with  $\rho_i = 1+30(i-1)$ in units of MHz.
As can be seen, by employing such a multi-rate coding scheme, the mmWave network can achieve almost the throughput upper bound when the spatial multiplexing gain is small. Moreover, the throughputs provided by the multi-rate coding scheme increase by more than two times compared to the tightest lower bounds. Hence, we can conclude that the multi-rate coding scheme is necessary to guarantee the performance of mmWave network, especially when the spatial multiplexing gain at APs is large.

\section{Conclusion}
We proposed to model the downlink mmWave network incorporating the hybrid precoding gain, where both the analog beamforming gain and the spatial multiplexing gain $k$ were considered.
We then derived the coverage probability, the throughput of the fixed-rate coding scheme, and the throughput upper bound. With these analytical expressions, we separated the positive and negative effects of AP densification on mmWave networks.
Our analysis then proved that without the spatial multiplexing, over-densifying APs leads to the densification plateau for the network throughput. To overcome such a performance bottleneck, the mmWave APs need to deploy the spatial multiplexing, primarily because the spatial multiplexing gain improves the throughput by lowering the SINR threshold for mmWave end-users. 

We optimized the fixed-rate coding scheme to provide the throughput lower bound, while the throughput achieved by the multi-rate coding scheme was also evaluated for different $k$. For comparison, we also quantified the throughput upper bound. Our numerical results verified that all these throughputs increase as the spatial multiplexing gain $k$ increases. However, the gap between the throughput upper and lower bounds increases rapidly with $k$, which implied that the fixed-rate coding scheme is inefficient in exploiting the potential throughput gain for mmWave networks with large $k$. Further, the multi-rate coding scheme showed a satisfactory performance when $k$ grows, which triggered a potential study on the complexity-performance trade-off for the coding-scheme of mmWave networks.



%

\appendices
\section{\label{proof: coverage probability of LOS ball}}
With the LOS-ball model in (\ref{LOS-ball}), the coverage probability in (\ref{eq: LOS coverage probability}) can be written as (\ref{eq: simplified coverage probability}) with
\begin{eqnarray}
\Lambda_j(\tau,r,G) 
&\triangleq& 
\frac{2r}{\alpha_{\text{L}}} \left( j \eta \tau G \right)^{\frac{2}{\alpha_{\text{L}}}} 
\Gamma\left( -\frac{2}{ \alpha_{\text{L}}}; j \eta \tau G , j \eta \tau G r^{\frac{\alpha_{\text{L}}}{2}}\right)
\label{eq: SINR parameter definition}\\
&=&	
\frac{2r}{\alpha_{\text{L}}} \left( j \eta \tau G \right)^{\frac{2}{\alpha_{\text{L}}}} 
\int
_{j \eta \tau G r^{\frac{\alpha_{\text{L}}}{2}}}
^{j \eta \tau G}
t^{-\frac{2}{\alpha_{\text{L}}} - 1} e^{-t} 
\;\text{d}t.
\label{eq: taylor parameter}
\end{eqnarray}
Note that $ \Lambda_j(\tau,r,G)$ is in the form of incomplete gamma function.  

Here, we show the sketch of deriving the coverage probability of mmWave network with LOS-ball of radius $R_\text{B}$.
We start to derive the PDF of the distance from a end-user to its nearest AP in $\Phi_{\text{A}}$, which is denoted by $r$. Without loss of generality, consider the end-user at the origin. Following the network model, the APs which are LOS to the user form an inhomogeneous PPP with density $\lambda_{\text{A}}$ for $r\leq R_\text{B}$.  
It follows from \cite{andrews2011tractable} that the PDF and CDF of $r$ can be derived as
\begin{eqnarray}
f_{\text{L}}(r)  = 2 \pi r \lambda_\text{A} e^{-2 \pi \lambda_\text{A} \int_{0}^{r}xdx}, \nonumber\\
F_{\text{L}}(r)  = 1-e^{-2\pi \lambda_{\text{A}} \int_{0}^{r} {x \text{d}x} }.
\label{eq: f_l and F_l}
\end{eqnarray}
for $r\leq R_\text{B}$.
For the end-user at the origin, the coverage probability can be written as,
\begin{equation}
\mathcal{C}_\psi(\tau,k) \triangleq \mathbb{P}\left( \text{SINR}>\tau \left| \mathbb{P}(\mathbb{G}_\text{A} = G_\text{A}) = k\cdot\frac{\theta_{\text{A}}}{2\pi} \right. \right) = \mathbb{E}_{r}\left[\mathcal{C}_\psi (\tau,k,r)\right], \nonumber
\end{equation}
where $\mathcal{C}_\psi (\tau,k,r) \triangleq \mathbb{P}( \text{SINR}>\tau \;|\; |x_0| = r, \mathbb{P}(\mathbb{G}_\text{A} = G_\text{A}) = k\cdot\frac{\theta_\text{A}}{2\pi}   )$ is the coverage probability conditioned on the distance $r$ from the origin to the AP at $x_0$. 
To derive $\mathcal{C}_\psi (\tau,k,r)$, we need to first derive the Laplace transform of the interference $\mathcal{I}$ in (\ref{eq: interference of LOS}) given that $r = |x_0|$, as shown in \cite[]{andrews2011tractable}.
It follows from \cite[Theorem 4.9]{haenggi2012stochastic} that the Laplace transform of $\mathcal{I}$ evaluated at $s$, denoted by $\mathcal{L}_{\mathcal{I}}(s)$, can be given as
\begin{eqnarray}
\mathcal{L}_{\mathcal{I}}(s) \triangleq \mathbb{E} \left[ e^{-s \mathcal{I}} \right]
= \int_{r}^{R_\text{B}} \left(
1-\mathbb{E}
\left[
e^{- s h \mathbb{G}_\text{A} \mathbb{G}_\text{U}  x^{-\alpha_{\text{L}}}} 
\right]
\right)  x \; \text{d}x ,
\label{eq: Laplace Transform derivation}
\end{eqnarray}
where $h = 1$ means no fading in deterministic channel. As shown in \cite[Appendix D]{bai2015coverage}, $h$ can be considered as a gamma random variable with the shape parameter goes to infinity. 
By using Alzer's Lemma in \cite{alzer1997some}, the conditional coverage probability can be written as follows:
\begin{eqnarray}
\mathcal{C}_{\psi} (\tau,k,r)
&\approx&  
{\sum_{j=1}^{\mu}(-1)^{j+1}{\mu\choose j}}
\mathcal{L}_{\mathcal{I}}
\left( \tau \Omega j \right) ,
\label{eq: conditional SINR Nakagami LOS}
\end{eqnarray}
where $\mu$ is the shape parameter of channel fading $h$ and $\mu \geq 10$ is a good estimation on the deterministic channel \cite{bai2015coverage}; and $\Omega = \frac{\eta r^{\alpha_\text{L}} }{G_{\text{A}}G_{\text{U}}}$ with $\eta = \mu (\mu !)^{-\frac{1}{\mu}}$.
The rest of proof follows the same steps in \cite[Theorem 3]{bai2015coverage}.

\section{\label{proof: taylor expansion of coverage probability}}
The coefficients $c_0(\tau)$ and $c_l(\tau,\psi)$ in (\ref{eq: SINR interms of polynomial}) are defined as
\begin{eqnarray}
c_0 (\tau,\psi) = \psi e^{-\psi}
{\sum_{j=1}^{\mu}(-1)^{j+1}{\mu\choose j}}
\int_{0}^{1} 
\exp
\left\lbrace
\psi
\frac{\theta_{\text{U}}}{2\pi}
\Lambda_j\left( \tau,r,\frac{g_\text{A}}{G_\text{A}} \right) 
\right.\nonumber\\
+
\left.
\psi
\left(
1-\frac{\theta_{\text{U}}}{2\pi}
\right)
\Lambda_j\left(\tau,r,\frac{g_\text{A}g_\text{U}}{G_\text{A}G_\text{U}}\right)
\right\rbrace
\text{d} r,
\label{eq: coefficient c_0}
\end{eqnarray}
\begin{eqnarray}
c_l (\tau,\psi) &=& \frac{\psi e^{-\psi}}{l!}
{\sum_{j=1}^{\mu}(-1)^{j+1}{\mu\choose j}}
\int_{0}^{1}
e^
{
	\frac{\psi\theta_{\text{U}}}{2\pi}
	\Lambda_j\left( \tau,r,\frac{g_\text{A}}{G_\text{A}} \right)
	+
	\psi\left(1-	\frac{\theta_\text{U}}{2\pi} \right)
	\Lambda_j\left(\tau,r,\frac{g_\text{A}g_\text{U}}{G_\text{A}G_\text{U}}\right)
}
\nonumber\\
&&
\times
\left(
\frac{\psi\theta_{\text{A}}}{2\pi} 
\right)^l
\left[
\frac{\theta_{\text{U}}}{2\pi}
\Lambda_j\left( \tau,r,1\right)
\right.
\left.
+
\left(
1-\frac{\theta_{\text{U}}}{2\pi}
\right)
\Lambda_j\left(\tau,r,\frac{g_\text{U}}{G_\text{U}}\right)
\right.
\nonumber\\
&&
\left.
-
\frac{\theta_{\text{U}}}{2\pi}
\Lambda_j\left( \tau,r, \frac{g_\text{A}}{G_\text{A}}\right)
-
\left(
1-\frac{\theta_{\text{U}}}{2\pi}
\right)
\Lambda_j\left(\tau,r,\frac{g_\text{A}g_\text{U}}{G_\text{A}G_\text{U}}\right)
\right]^l
\text{d} r, \;\;\; l\in\mathbb{N}^+.
\label{eq: coefficient c_l}
\end{eqnarray}
Here, $\Lambda_j(\tau,r,G)$ is given in (\ref{eq: taylor parameter}). 

We start to prove that $c_0 = \mathbb{P}(\text{SNR} > \tau)$. By ignoring the thermal noise \cite{bai2015coverage}, the background noise is solely resulted by the side lobes of interfering APs. It is equivalent to letting the hybrid precoding gain $\mathbb{G}_\text{A} = g_\text{A}$ for all $x\in \Phi_{\text{A}}\backslash \{x_0\}$. By plugging $\mathbb{G}_\text{A} = g_\text{A}$ into the coverage probability in (\ref{eq: SINR interms of polynomial}), the expression of $c_0(\tau,\psi)$ immediately follows.

Next, we derive the Taylor series expansion for the expression in (\ref{eq: simplified coverage probability}). 	 
By denoting $
\mathcal{F}_{\psi}(k,\tau,j)
=
\int_{0}^{1} \exp 
\left\lbrace
\psi 
\mathbb{E}
\left[
\Lambda_j
\left(\tau,r,\frac{\mathbb{G}_\text{A}\mathbb{G}_\text{U}}{G_\text{A}G_\text{U}} \right)
\right]
\right\rbrace
\text{d} r$,
the coverage probability in (\ref{eq: simplified coverage probability}) can then be written as
\begin{equation}
\mathcal{C}_{\psi}(\tau,k) 
=
\psi e^{-\psi}{\sum_{j =1}^{\mu}(-1)^{j+1}{\mu\choose j}} \mathcal{F}_{\psi}(k,\tau,j).
\label{proofeq: coverage probability with F}
\end{equation}
By plugging in the hybrid precoding gain $\mathbb{G}_\text{A}$ and $\mathbb{G}_\text{U}$, the term $\mathcal{F}_{\psi}(k,\tau,j)$ can be further expanded as follows:
\begin{eqnarray}
\mathcal{F}_{\psi}(k,\tau,j)
&=&
\int_{0}^{1}
\exp
\left\lbrace
\psi
\left[
k
\frac{\theta_{\text{A}}\theta_{\text{U}}}{4\pi^2}
\Lambda_j\left( \tau,r,1\right)
+
k
\frac{\theta_{\text{A}}}{2\pi}
\left(
1-\frac{\theta_{\text{U}}}{2\pi}
\right)
\Lambda_j\left(\tau,r,\frac{g_\text{U}}{G_\text{U}}\right)
\right.
\right.
\nonumber\\
&&
\left.
\left.
+
\left(
1-\frac{k\theta_{\text{A}}}{2\pi}
\right)
\frac{\theta_{\text{U}}}{2\pi}
\Lambda_j\left( \tau,r, \frac{g_\text{A}}{G_\text{A}}\right)
+
\left(
1-\frac{k\theta_{\text{A}}}{2\pi}
\right)
\left(
1-\frac{\theta_{\text{U}}}{2\pi}
\right)
\Lambda_j\left(\tau,r,\frac{g_\text{A}g_\text{U}}{G_\text{A}G_\text{U}}\right)
\right]
\right\rbrace
\text{d} r.
\nonumber
\end{eqnarray}
It follows that the $l^\text{th}$ derivative of $\mathcal{F}_{\psi}(k,\tau,j)$ can be written as
\begin{eqnarray}
\mathcal{F}_{\psi}^{(l)}(k,\tau,j)
&=&
\int_{0}^{1}
\frac{\partial^l}{\partial k^l}
\exp
\left\lbrace
\psi
\left[
k
\frac{\theta_{\text{A}}\theta_{\text{U}}}{4\pi^2}
\Lambda_j\left( \tau,r,1\right)
+
k
\frac{\theta_{\text{A}}}{2\pi}
\left(
1-\frac{\theta_{\text{U}}}{2\pi}
\right)
\Lambda_j\left(\tau,r,\frac{g_\text{U}}{G_\text{U}}\right)
\right.
\right.
\nonumber\\
&&
\left.
\left.
+
\left(
1-\frac{k\theta_{\text{A}}}{2\pi}
\right)
\frac{\theta_{\text{U}}}{2\pi}
\Lambda_j\left( \tau,r, \frac{g_\text{A}}{G_\text{A}}\right)
+
\left(
1-\frac{k\theta_{\text{A}}}{2\pi}
\right)
\left(
1-\frac{\theta_{\text{U}}}{2\pi}
\right)
\Lambda_j\left(\tau,r,\frac{g_\text{A}g_\text{U}}{G_\text{A}G_\text{U}}\right)
\right]
\right\rbrace
\text{d} r,
\nonumber
\end{eqnarray}
which follows from the Leibniz integral rule.
Thus, the Taylor series expansion of $\mathcal{F}_{\psi}(k,\tau, j)$ at $k = 0$ can be written as
\begin{eqnarray}
&&\mathcal{F}_{\psi}(k,\tau, j) = \sum\limits_{l=0}^{\infty} 
\frac{\mathcal{F}_{\psi}^{(l)}(0,\tau,j)}{l!} k^l \nonumber\\
&=& 
\sum_{l=0}^{\infty}
\frac{k^l}{l!}
\int_{0}^{1}
e^
{
	\frac{\psi\theta_{\text{U}}}{2\pi}
	\Lambda_j\left( \tau,r,\frac{g_\text{A}}{G_\text{A}} \right)
	+
	\left(\psi-	\frac{	\psi \theta_\text{U}}{2\pi} \right)
	\Lambda_j\left(\tau,r,\frac{g_\text{A}g_\text{U}}{G_\text{A}G_\text{U}}\right)
}
\left(
\frac{\psi\theta_{\text{A}}}{2\pi} 
\right)^l
\left[
\frac{\theta_{\text{U}}}{2\pi}
\Lambda_j\left( \tau,r,1\right)
+
\left(
1-\frac{\theta_{\text{U}}}{2\pi}
\right)
\Lambda_j\left(\tau,r,\frac{g_\text{U}}{G_\text{U}}\right)
\right.
\nonumber\\
&&
\left.
-
\frac{\theta_{\text{U}}}{2\pi}
\Lambda_j\left( \tau,r, \frac{g_\text{A}}{G_\text{A}}\right)
-
\left(
1-\frac{\theta_{\text{U}}}{2\pi}
\right)
\Lambda_j\left(\tau,r,\frac{g_\text{A}g_\text{U}}{G_\text{A}G_\text{U}}\right)
\right]^l
\text{d} r.
\label{proofeq: Taylor expansion}
\end{eqnarray}
Now we prove that the Taylor series in (\ref{proofeq: Taylor expansion}) converge to  $\mathcal{F}_{\psi}(k,\tau, j)$ in the region $k \in [0,+\infty)$. Following the definition of $\Lambda_j\left( \tau,r,G\right)$ in (\ref{eq: taylor parameter}), we have
\begin{eqnarray}
0 < \Lambda_j\left( \tau,r,G\right)
\stackrel{(a)}{<}\frac{2r}{\alpha_{\text{L}}} \left( j \eta \tau G \right)^{\frac{2}{\alpha_{\text{L}}}} 
\int
_{j \eta \tau G r^{\frac{\alpha_{\text{L}}}{2}}}
^{j \eta \tau G}
t^{-\frac{2}{\alpha_{\text{L}}} - 1} 
\;\text{d}t = 1-r, \quad\forall G>0,\; r\in(0,1],
\label{proofeq: Taylor parameter upper bound}
\end{eqnarray}
where (a) follows from $ e^{-t}< 1$ for $t > 0$.
Then, the absolute value of $\mathcal{F}_{\psi}^{(l)}(0,\tau,j)$ can be upper bounded by 
\begin{eqnarray}
|\mathcal{F}_{\psi}^{(l)}(0,\tau,j)|
< 
\left(
\frac{\psi\theta_{\text{A}}}{2\pi} 
\right)^l
\int_{0}^{1}
e^
{ \psi(1-r) }
(1-r)^l
\text{d} r
< \frac{e^\psi}{l+1}\left(
\frac{\psi\theta_{\text{A}}}{2\pi} 
\right)^l.
\label{proofeq: Taylor expansion upper bound}
\end{eqnarray}
Therefore, $\mathcal{F}_{\psi}(k,\tau, j)$ is real analytic on $k \in [0,+\infty)$. 
Define 
\begin{equation}
c_l (\tau, \psi) \triangleq \frac{\psi e^{-\psi}}{l!}
{\sum_{j=1}^{\mu}(-1)^{j+1}{\mu\choose j}}
\mathcal{F}_{\psi}^{(l)}(0,\tau,j).
\label{proofeq: c_l in F}
\end{equation}
The expression of $c_l(\tau,\psi)$ for $l \in \mathbb{N}^+$ in (\ref{eq: coefficient c_l}) then follows.

Define 
$c_l (\tau, \psi, k) \triangleq \frac{\psi e^{-\psi}}{l!}
{\sum_{j=1}^{\mu}(-1)^{j+1}{\mu\choose j}}
\mathcal{F}_{\psi}^{(l)}(k,\tau,j)$, where $k \in [1,K]$ and $K= \frac{2\pi}{\theta_{\text{A}}}$.
Next, we start to derive the bound for $\mathcal{F}_{\psi}^{(l)}(k,\tau,j)$ i.e. the $l^\text{th}$ derivative of $\mathcal{F}_{\psi}(k,\tau,j)$. By following the similar steps in the derivation of (\ref{proofeq: Taylor expansion upper bound}), we have
 \begin{eqnarray}
 |\mathcal{F}_{\psi}^{(l)}(k,\tau,j)|
 < \frac{e^{2\psi}}{l+1}\left(
 \frac{\psi\theta_{\text{A}}}{2\pi} 
 \right)^l.
 \label{proofeq: Parameter F bound}
 \end{eqnarray}
Given the degree of polynomial $L$, the approximation error is upper bounded by
\begin{eqnarray}
& \max\limits_{k\in[1,K],\tau\in\mathbb{R}^+} &\left| \mathcal{C}_{\psi}(\tau,k) -  \sum\limits_{l=0}^{L} c_l (\tau,\psi) k^l \right| 
 =  \left|  \sum\limits_{l=L+1}^{\infty} c_l (\tau,\psi) K^l \right| 
 \nonumber\\
&\stackrel{(a)}{=}&
K^{L+1}
| c_{L+1}(\tau, \psi, k^*) |
\qquad
\text{for some}\; k^* \in [1,K] \nonumber \\
&\stackrel{(b)}{<}&
\frac{ e^\psi \psi^{L+2} }{(L+2)!}
{\sum_{j=1}^{ \mu}
{\mu\choose j }}
\left(
\frac{K \theta_{\text{A}}}{2\pi} 
\right)^{L+1} \nonumber
\stackrel{(c)}{=}
\frac{(2^{\mu}-1)  e^\psi \psi^{L+2}}{(L+2)!},
\nonumber
\end{eqnarray}
where (a) follows the Lagrange’s formula; (b) follows the inequality (\ref{proofeq: Parameter F bound}); (c) follows $K = \frac{2\pi}{\theta_{\text{A}}}$.

\section{\label{proof: bandwidth distribution}}
Consider the mmWave network, where the channel follows the path loss model $r^{-\alpha_{\text{L}}}$ with no fading. According to the minimum path loss association rule, the serving area of each AP can be modeled by a Voronoi cell \cite{andrews2011tractable}. For $\Phi_{\text{A}}$ of intensity $\lambda_{\text{A}}$, the size distribution of a Voronoi cell is derived in \cite{ferenc2007size}, 
with the PDF 
\begin{equation}
f(y) \propto y^{2.5}\exp{(-3.5{\lambda_{\text{A}}}y)}, \;\;\forall\; y>0.
\label{proofeq: Voronoi cell} 
\end{equation}
Based on the PDF of Voronoi cell in (\ref{proofeq: Voronoi cell}), we then derive the size distribution of an AP sector. Let a randomly selected AP be the origin. It follows from the isotropic property of the PPP $\Phi_\text{U}$ that the end-users are isotropically distributed around the origin. Thus, the serving region of the selected AP is isotropically distributed around the origin. The area distribution of an AP sector can then be written as
\begin{equation}
f_{\mathcal{A}}(y) =
\left\lbrace 
\begin{split}
&\frac{3.5^{3.5}{k\lambda_{\text{A}}}}{\Gamma(3.5, \frac{\psi}{3.5k^2\lambda_{\text{A}}^2})}
({k\lambda_{\text{A}}}y)^{2.5}\exp{(-3.5{k\lambda_{\text{A}}}y)}, 
\; \text{for}\; y \leq \frac{\psi}{k\lambda_{\text{A}}}\\
&0, \; \text{for} \; y > \frac{\psi}{k\lambda_{\text{A}}}
\end{split}.
\right.
\label{eq: the area distribution of AP sector}
\end{equation}
Here, $\Gamma(s,x) = \int_{0}^{x} t^{s-1}e^{-t} \text{d}t$ is the incomplete gamma function.
We remark that the serving area of a mmWave AP is restricted to the LOS-ball with radius $R_\text{B}$. Therefore, for the mmWave AP sector, $f_{\mathcal{A}}(y) = 0$ when $y>\frac{1}{k} \pi R_\text{B}^2 = \frac{\psi}{k\lambda_{\text{A}}}$.

The end-user located at the origin is served by the tagged AP. We use $\mathcal{A}'$ to denote the serving area within a sector of the tagged AP. It can be proved by a minor modification of \cite[lemma2]{yu2013downlink} that the distributions of $\mathcal{A}'$ and $\mathcal{A}$ in (\ref{eq: the area distribution of AP sector}) are related as follows:
\begin{eqnarray}
f_{\mathcal{A}'}(y)& \propto& y f_{\mathcal{A}}(y) \label{biased sector} \nonumber\\
&=& \frac{3.5^{4.5}{k\lambda_\text{A}}}{\Gamma(4.5, \frac{\psi}{3.5k^2\lambda_{\text{A}}^2})}({k\lambda_\text{A}}y)^{3.5}\exp{(-3.5{k\lambda_\text{A}}y)}, y \leq \frac{\psi}{\lambda_{\text{A}}},
\label{proofeq: biased sector 2}
\end{eqnarray}
where $\Gamma(s,x)$ is the incomplete gamma function $\Gamma(s,x) = \int_{0}^{x} t^{s-1}e^{-t} \text{d}t$. It follows from (\ref{proofeq: biased sector 2}) that the area distribution of the tagged AP sector is biased since the end-user at the origin is more likely to lie in the larger area. 

Note that $\Psi$ represents the number of end-users that share the same pool of radio resources as the origin. 
By following \cite{singh2013offloading}, the PMF of $\Psi$ can be obtained as follows: 
\begin{eqnarray}
\mathbb{P}({\Psi}= n) 
&=& \mathcal{K}_\text{T}(n,k; \psi,\lambda_{\text{U}}) 
=  \frac{\left. \left(\int_{0}^{\psi/\lambda_{\text{A}}} {\exp{(\lambda_\text{U}y(z-1))}} f_{\mathcal{A}'}(y) \, \text{d}y\right)^{(n-1)} \right|_{z=0}}{(n-1)!} \nonumber\\
&=&
\frac{3.5^{4.5} \; \Gamma\left(n+3.5, \frac{\psi}{k\lambda_{\text{A}}(\lambda_{\text{U}}+3.5k\lambda_\text{A})}\right)}
{(n-1)! \; \Gamma\left(4.5,\frac{\psi}{3.5k^2\lambda_{\text{A}}^2}\right)}
\left(\frac{\lambda_\text{U}}{k\lambda_\text{A}}\right)^{n-1}
\left(3.5+\frac{\lambda_\text{U}}{k\lambda_\text{A}}\right)^{-n-3.5}, n\geq 1,
\label{eq: tagged end-user load}
\end{eqnarray} 
where $\lambda_{\text{A}} = \psi/\pi R_\text{B}^2$ with $R_\text{B}$ being the radius of LOS ball; $\Gamma(s,x)$ is the incomplete gamma function $\Gamma(s,x) = \int_{0}^{x} t^{s-1}e^{-t} \text{d}t$.
Note that $n \geq 1$ since $\Psi$ always contains the end-user at the origin.
By assuming the relative density $\psi \geq 1$, the end-user can observe at least one LOS AP \cite{bai2015coverage}, which results in 
$\Gamma\left(n+3.5, \frac{\psi}{\lambda_{\text{A}}(\lambda_{\text{U}}+3.5k\lambda_\text{A})}\right) \approx \Gamma(n+ 3.5)$ and
$\Gamma\left(4.5,\frac{\psi}{3.5k\lambda_{\text{A}}^2}\right) \approx \Gamma(4.5)$. It follows that
\begin{equation}
\mathbb{P}\left(\Psi= n\right) =\frac{3.5^{3.5} \; \Gamma\left( n+3.5 \right)}
{(n-1)! \; \Gamma\left( 3.5 \right)}
\left(\frac{\pi R_\text{B}^2 \lambda_\text{U}}{k \psi}\right)^{n-1}
\left(3.5+\frac{\pi R_\text{B}^2 \lambda_\text{U}}{k \psi}\right)^{-n-3.5}, n\geq 1,
\label{eq: simplied tagged load}
\end{equation}
where $\Gamma(s) = \int_{0}^{+\infty} t^{s-1}e^{-t} \text{d}t$ is the gamma function; $R_\text{B}$ is the radius of the LOS-ball.

\section{\label{proof: throughput upper bound}}
To derive the analytical expression of the throughput upper bound, we notice that
\begin{eqnarray}
 \mathbb{E} [ \log_2 (1 + \text{SINR})]
&\stackrel{(a)}{=}& \log_2(e) \int_{0}^{\infty} {\mathbb{P}(\ln (1 + \text{SINR}) > t)} \; \text{d}t \nonumber \\
&\stackrel{(b)}{=}& \log_2(e) \int_{0}^{\infty} {\frac{\mathbb{P}(\text{SINR} > \tau)}{1 + \tau}} \; \text{d}\tau 
= \log_2(e) \int_{0}^{\infty} {\frac{\mathcal{C}_{\psi}(\tau,k)}{1 + \tau}} \; \text{d}\tau,
\label{proofeq: upper bound with P_cov}
\end{eqnarray}
where (a) follows from that for a positive random variable $X$, $\mathbb{E}(X) = \int_{t>0} \; \mathbb{P}(X > t) \; \text{d}t$; (b) is derived by the change of random variable $t = \ln(\tau + 1)$. It follows from Theorem \ref{thm: taylor expansion of coverage probability} that
\begin{eqnarray}
\mathbb{E} [ \log_2 (1 + \text{SINR})]= \log_2(e) 
\sum\limits_{l=0}^{\infty}  k^l  
\int_{0}^{\infty} \frac{c_l (\tau,\psi)}{1+\tau} \text{d}\,\tau , 
\label{eq: spectral efficiency upper bound} 
\end{eqnarray}
We then complete the proof of (\ref{eq: spectral efficiency upper bound}) by showing $\int_{0}^{\infty} \frac{c_l (\tau,\psi)}{1+\tau} \text{d}\,\tau < +\infty$. To that end, we first derive the bound for $\Lambda_j\left( \tau,r,G\right)$ in (\ref{eq: taylor parameter}), where
\begin{eqnarray}
0 < \Lambda_j\left( \tau,r,G\right)
< e^{-j\eta\tau G r^{\alpha_{\text{L}}/2}}(1-r), \quad\forall G>0,\; r\in(0,1].
\label{proofeq: tighter bound on Taylor parameter}
\end{eqnarray}
It follows from (\ref{proofeq: Taylor parameter upper bound}) that 
$\mathcal{F}_{\psi}^{(l)}(0,\tau,j)$ in Appendix \ref{proof: taylor expansion of coverage probability} can be upper bounded by
\begin{eqnarray}
|\mathcal{F}_{\psi}^{(l)}(0,\tau,j)|
&<& 
\left(
\frac{\psi\theta_{\text{A}}}{2\pi} 
\right)^l
\left[
\int_{0}^{\frac{1}{(1+\tau)^{1/\alpha_{\text{L}}}}}
e^
{ \psi(1-r) }
(1-r)^l
\text{d} r
+
\int_{\frac{1}{(1+\tau)^{1/\alpha_{\text{L}}}}}^{1}
e^
{ \psi(1-r)-j\eta G l \tau r^{\alpha_\text{L}/2} }
(1-r)^l
\text{d} r
\right] \nonumber\\
&<&
e^\psi 
\left(
\frac{\psi\theta_{\text{A}}}{2\pi} 
\right)^l
\left(
\frac{1}{(1+\tau)^{\frac{1}{\alpha_{\text{L}}}}} + e^{-j\eta G l\frac{\tau}{\sqrt{1+\tau}}}
\right),
\end{eqnarray}
which illustrates $\int_{0}^{\infty} \frac{|\mathcal{F}_{\psi}^{(l)}(0,\tau,j)|}{1+\tau} \text{d}\,\tau < +\infty$. Further, $c_l(\tau,\psi)$ in (\ref{proofeq: c_l in F}) is a finite sum of $\mathcal{F}_{\psi}^{(l)}(0,\tau,j)$ and thus $\int_{0}^{\infty} \frac{c_l (\tau,\psi)}{1+\tau} \text{d}\,\tau < +\infty$.
The throughput upper bound in (\ref{eq: throughput upper bound definition}) can then be given as
\begin{eqnarray}
\overline{\mathcal{T}}(\psi, k) 
= 
\lambda_\text{U} \mathbb{E}\left[\frac{B}{\Psi}\log_2(1 + \text{SINR})\right] 
&\stackrel{(a)}{=}& \lambda_\text{U} \mathbb{E}\left[\frac{B}{\Psi}\right] \mathbb{E}\left[ \log_2(1 + \text{SINR}) \right],
\label{eq: upper bound proof 2}
\end{eqnarray}
where (a) follows from that the SINR distribution is independent of the bandwidth distribution. 
By plugging (\ref{eq: spectral efficiency upper bound}) into (\ref{eq: upper bound proof 2}), the result immediately follows.

\section{\label{proof: lem - densification plateau}}
First, we need to derive the asymptotic SINR distribution of mmWave network as the relative density $\psi \rightarrow +\infty$. 
Following (\ref{eq: interference of LOS}) and (\ref{eq: LOS coverage probability}), the coverage probability at SINR threshold $\tau$ is lower and upper bounded by assuming all interfering hybrid precoding gain in (\ref{eq: interference of LOS}) as $\mathbb{G}_\text{A}\mathbb{G}_\text{U} = G_\text{A}G_\text{U}$ and $\mathbb{G}_\text{A}\mathbb{G}_\text{U} = g_\text{A}g_\text{U}$, respectively. It follows that the coverage probability in (\ref{eq: LOS coverage probability}) is bounded by
\begin{equation}
\mathbb{P} \left( 
\frac{|x_0|^{-\alpha_{\text{L}}} }{\sum_{x \in \Phi_{\text{A}} \backslash \mathcal{B}(0,|x_0|)}  |x|^{-\alpha_{\text{L}}}}> \tau \right) 
<
\mathcal{C}_\psi(\tau, k)
<
\mathbb{P} \left( 
\frac{|x_0|^{-\alpha_{\text{L}}} }{\sum_{x \in \Phi_{\text{A}} \backslash \mathcal{B}(0,|x_0|)}  |x|^{-\alpha_{\text{L}}}}> \hat{\tau} \right) ,
\label{proofeq: bounds of SIR in the asymptotic region}
\end{equation}
where $x_0$ is the location of the AP serving the origin; $\hat{\tau} = \tau \frac{ g_\text{A} g_\text{U} }{ G_\text{A} G_\text{U} }$; the thermal noise is ignored since the signal from the side lobe of interfering APs is the major component of background noise when $\psi$ is large. 

Note that we assume that $R_\text{B}$ is fixed for the mmWave network and thus the relative density $\psi$ increases with the AP density $\lambda_{\text{A}}$. For a given $x_0 \in \mathbb{R}^2$ and assume $\ell(|x_0|) = 1$ for $|x_0| < 1$, 
\begin{eqnarray}
\lim\limits_{\lambda_{\text{A}} \rightarrow +\infty} \lambda_{\text{A}} \text{SINR} 
&=& 
\lim\limits_{\lambda_{\text{A}} \rightarrow +\infty} \lambda_{\text{A}} \frac{|x_0|^{-\alpha_{\text{L}}} }{\sum_{x \in \Phi_{\text{A}} \backslash \mathcal{B}(0,|x_0|)}  |x|^{-\alpha_{\text{L}}}} \nonumber\\
&\stackrel{(a)}{=}& \lim\limits_{\lambda_{\text{A}} \rightarrow +\infty} \frac{\lambda_{\text{A}} |x_0|^{-\alpha_\text{L}} }{2 \pi \lambda_{\text{A}} \int_{0}^{R_\text{B}} r^{-\alpha_\text{L}+ 1}  \text{d}r} 
\stackrel{(b)}{=}
\frac{ 1}{2 \pi  \sigma },
\label{proofeq: limitation of SIR - upper}
\end{eqnarray}
where $\sigma = \int_{0}^{R_\text{B}} r^{-\alpha_\text{L}+ 1} \text{d}r$ is a finite number; (a) follows from \cite[Lemma 1]{alammouri2018unified}; (b) follows that $\ell(|x_0|) \geq 1$.
When the spatial multiplexing gain $k=1$ and $\psi$ increases, we then have the densification gain $\gamma(\psi,1)$ in (\ref{def: densification gain}) as
\begin{eqnarray}
	\lim\limits_{\psi \rightarrow +\infty} \gamma(\psi,1) 
	&=& \frac{1}{\mathcal{T}(\rho_0;1,1)} \lim\limits_{\psi \rightarrow +\infty}   \mathcal{T}(\psi\rho_0; \psi, 1) \nonumber \\
	&\stackrel{(a)}{=}& \frac{\lambda_\text{U}\rho_0}{\mathcal{T}(\rho_0;1,1)} \lim\limits_{\psi \rightarrow +\infty}   \psi\mathcal{C}_\psi(2^{\rho_0/W_0}-1,1) \nonumber\\
	&\stackrel{(b)}{\leq}& \frac{\lambda_\text{U}\rho_0}{\pi R_\text{B}^2 \mathcal{T}(\rho_0;1,1)} \lim\limits_{\lambda_\text{A} \rightarrow +\infty}   \lambda_{\text{A}} \mathbb{P}\left(\text{SINR} > \hat{\tau}\right) \nonumber\\
	&=&  \frac{\lambda_\text{U}\rho_0}{\pi R_\text{B}^2 \mathcal{T}(\rho_0;1,1)} \lim\limits_{\lambda_\text{A} \rightarrow +\infty}   \lambda_{\text{A}} \mathbb{E}\left( \mathbbm{1} (\text{SINR} > \hat{\tau}) \right) \nonumber \\
	& = & \frac{\lambda_\text{U}\rho_0}{\pi R_\text{B}^2 \mathcal{T}(\rho_0;1,1)} \mathbb{E}\left( \lim\limits_{\lambda_\text{A} \rightarrow +\infty}\lambda_{\text{A}}\mathbbm{1} \left(\frac{\text{SINR}}{\hat{\tau}} >1 \right) \right) \nonumber\\
	&\leq & \frac{\lambda_\text{U}\rho_0}{\pi R_\text{B}^2 \mathcal{T}(\rho_0;1,1)} \mathbb{E}\left( 
	\lim\limits_{\lambda_\text{A} \rightarrow +\infty}\lambda_{\text{A}}\text{SINR} / \hat{\tau} \right) \nonumber\\
	&\stackrel{(c)}{\leq}& \frac{\lambda_\text{U}\rho_0}{2\pi^2 R_\text{B}^2 \sigma\hat{\tau} \mathcal{T}(\rho_0;1,1)},
	\label{proofeq: limitation of densification gain with k=1}
\end{eqnarray}
where $\hat{\tau} = (2^{\rho_0/W_0}-1) \frac{ g_\text{A} g_\text{U} }{ G_\text{A} G_\text{U} }$; (a) is obtained by plugging (\ref{eq: average bandwidth}) into (\ref{eq: throughput of fixed-rate coding scheme}); (b) follows from (\ref{proofeq: bounds of SIR in the asymptotic region}); (c) follows from (\ref{proofeq: limitation of SIR - upper}). It follows from (\ref{proofeq: limitation of densification gain with k=1}) that $\gamma(\psi,1)$ is upper bounded by a finite constant.



\ifCLASSOPTIONcaptionsoff
  \newpage
\fi

\small
\bibliographystyle{IEEEtran}
\bibliography{mmWavepapers}

\begin{thebibliography}{10}
\providecommand{\url}[1]{#1}
\csname url@samestyle\endcsname
\providecommand{\newblock}{\relax}
\providecommand{\bibinfo}[2]{#2}
\providecommand{\BIBentrySTDinterwordspacing}{\spaceskip=0pt\relax}
\providecommand{\BIBentryALTinterwordstretchfactor}{4}
\providecommand{\BIBentryALTinterwordspacing}{\spaceskip=\fontdimen2\font plus
\BIBentryALTinterwordstretchfactor\fontdimen3\font minus
  \fontdimen4\font\relax}
\providecommand{\BIBforeignlanguage}[2]{{%
\expandafter\ifx\csname l@#1\endcsname\relax
\typeout{** WARNING: IEEEtran.bst: No hyphenation pattern has been}%
\typeout{** loaded for the language `#1'. Using the pattern for}%
\typeout{** the default language instead.}%
\else
\language=\csname l@#1\endcsname
\fi
#2}}
\providecommand{\BIBdecl}{\relax}
\BIBdecl

\bibitem{gupta2000capacity}
P.~Gupta and P.~R. Kumar, ``The capacity of wireless networks,'' \emph{IEEE
  Trans. on Inf. Theory}, 2000.

\bibitem{tse2005fundamentals}
D.~Tse and P.~Viswanath, \emph{Fundamentals of wireless communication}.\hskip
  1em plus 0.5em minus 0.4em\relax Cambridge university press, 2005.

\bibitem{andrews2017modeling}
J.~G. Andrews, T.~Bai, M.~N. Kulkarni, A.~Alkhateeb, A.~K. Gupta, and R.~W.
  Heath, ``Modeling and analyzing millimeter wave cellular systems,''
  \emph{IEEE Trans. Commun.}, vol.~65, no.~1, pp. 403--430, 2017.

\bibitem{bai2015coverage}
T.~Bai and R.~W. Heath, ``Coverage and rate analysis for millimeter-wave
  cellular networks,'' \emph{IEEE Trans. Wireless Commun.}, vol.~14, no.~2, pp.
  1100--1114, 2015.

\bibitem{jia2016impact}
S.~Jia, D.~Ramirez, L.~Huang, Y.~Wang, and B.~Aazhang, ``On the impact of
  blockage on the throughput of multi-tier millimeter-wave networks,'' in
  \emph{2016 50th ASILOMAR}, 2016, pp. 1419--1423.

\bibitem{bai2014analysis}
T.~Bai, R.~Vaze, and R.~Heath, ``Analysis of blockage effects on urban cellular
  networks,'' \emph{IEEE Trans. Wireless Commun.}, vol.~13, no.~9, pp.
  5070--5083, Sept 2014.

\bibitem{saha2018integrated}
C.~Saha, M.~Afshang, and H.~S. Dhillon, ``Integrated mmwave access and backhaul
  in 5g: Bandwidth partitioning and downlink analysis,'' in \emph{2018 IEEE
  International Conference on Communications (ICC)}.\hskip 1em plus 0.5em minus
  0.4em\relax IEEE, 2018, pp. 1--6.

\bibitem{alammouri2018unified}
A.~AlAmmouri, J.~G. Andrews, and F.~Baccelli, ``A unified asymptotic analysis
  of area spectral efficiency in ultradense cellular networks,'' \emph{IEEE
  Transactions on Information Theory}, vol.~65, no.~2, pp. 1236--1248, 2018.

\bibitem{alammouri2018sinr}
------, ``{SINR} and throughput of dense cellular networks with stretched
  exponential path loss,'' \emph{IEEE Trans. Wireless Commun.}, vol.~17, no.~2,
  pp. 1147--1160, 2018.

\bibitem{rappaport2013broadband}
T.~S. Rappaport, F.~Gutierrez, E.~Ben-Dor, J.~N. Murdock, Y.~Qiao, J.~Tamir
  \emph{et~al.}, ``Broadband millimeter-wave propagation measurements and
  models using adaptive-beam antennas for outdoor urban cellular
  communications,'' \emph{Antennas and Propagation, IEEE Transactions on},
  vol.~61, no.~4, pp. 1850--1859, 2013.

\bibitem{rappaport2013millimeter}
T.~S. Rappaport, S.~Sun, R.~Mayzus, H.~Zhao, Y.~Azar, K.~Wang, G.~N. Wong,
  J.~K. Schulz, M.~Samimi, and F.~Gutierrez~Jr, ``Millimeter wave mobile
  communications for {5G} cellular: It will work!'' \emph{IEEE access}, vol.~1,
  no.~1, pp. 335--349, 2013.

\bibitem{rappaport2015wideband}
T.~S. Rappaport, G.~R. MacCartney, M.~K. Samimi, and S.~Sun, ``Wideband
  millimeter-wave propagation measurements and channel models for future
  wireless communication system design,'' \emph{IEEE Trans. Commun.}, vol.~63,
  no.~9, pp. 3029--3056, 2015.

\bibitem{alkhateeb2015limited}
A.~Alkhateeb, G.~Leus, and R.~W. Heath, ``Limited feedback hybrid precoding for
  multi-user millimeter wave systems,'' \emph{IEEE Trans. Wireless Commun.},
  vol.~14, no.~11, pp. 6481--6494, 2015.

\bibitem{sun2014mimo}
S.~Sun, T.~S. Rappaport, R.~W. Heath, A.~Nix, and S.~Rangan, ``Mimo for
  millimeter-wave wireless communications: Beamforming, spatial multiplexing,
  or both?'' \emph{IEEE Communications Magazine}, vol.~52, no.~12, pp.
  110--121, 2014.

\bibitem{alammouri2020escaping}
A.~AlAmmouri, M.~Gupta, F.~Baccelli, and J.~G. Andrews, ``Escaping the
  densification plateau in cellular networks through mmwave beamforming,''
  \emph{arXiv preprint arXiv:2005.10565}, 2020.

\bibitem{andrews2011tractable}
J.~G. Andrews, F.~Baccelli, and R.~K. Ganti, ``A tractable approach to coverage
  and rate in cellular networks,'' \emph{IEEE Trans. Commun.}, vol.~59, no.~11,
  pp. 3122--3134, 2011.

\bibitem{haenggi2012stochastic}
M.~Haenggi, \emph{Stochastic geometry for wireless networks}.\hskip 1em plus
  0.5em minus 0.4em\relax Cambridge University Press, 2012.

\bibitem{di2015stochastic}
M.~Di~Renzo, ``Stochastic geometry modeling and analysis of multi-tier
  millimeter wave cellular networks,'' \emph{IEEE Trans. Wireless Commun.},
  vol.~14, no.~9, pp. 5038--5057, 2015.

\bibitem{singh2015tractable}
S.~Singh, M.~N. Kulkarni, A.~Ghosh, and J.~G. Andrews, ``Tractable model for
  rate in self-backhauled millimeter wave cellular networks,'' \emph{IEEE J.
  Sel. Areas Commun.}, vol.~33, no.~10, pp. 2196--2211, 2015.

\bibitem{kulkarni2017performance}
M.~N. Kulkarni, J.~G. Andrews, and A.~Ghosh, ``Performance of dynamic and
  static tdd in self-backhauled millimeter wave cellular networks,'' \emph{IEEE
  Transactions on Wireless Communications}, vol.~16, no.~10, pp. 6460--6478,
  2017.

\bibitem{yang2000sectorization}
X.~{Yang}, S.~{Ghaheri}, and N.~R. {Tafazolli}, ``Sectorization gain in cdma
  cellular systems,'' in \emph{First International Conference on 3G Mobile
  Communication Technologies}, 2000, pp. 70--75.

\bibitem{alzer1997some}
H.~Alzer, ``On some inequalities for the incomplete gamma function,''
  \emph{Mathematics of Computation}, vol.~66, no. 218, pp. 771--778, 1997.

\bibitem{singh2013offloading}
S.~Singh, H.~S. Dhillon, and J.~G. Andrews, ``Offloading in heterogeneous
  networks: Modeling, analysis, and design insights,'' \emph{IEEE Trans.
  Wireless Commun.}, vol.~12, no.~5, pp. 2484--2497, 2013.

\bibitem{saha2019unified}
C.~Saha, H.~S. Dhillon, N.~Miyoshi, and J.~G. Andrews, ``Unified analysis of
  hetnets using poisson cluster processes under max-power association,''
  \emph{IEEE Transactions on Wireless Communications}, vol.~18, no.~8, pp.
  3797--3812, 2019.

\bibitem{ferenc2007size}
J.-S. Ferenc and Z.~N{\'e}da, ``On the size distribution of poisson voronoi
  cells,'' \emph{Physica A: Statistical Mechanics and its Applications}, vol.
  385, no.~2, pp. 518--526, 2007.

\bibitem{yu2013downlink}
S.~M. Yu and S.-L. Kim, ``Downlink capacity and base station density in
  cellular networks,'' in \emph{Modeling \& Optimization in Mobile, Ad Hoc \&
  Wireless Networks (WiOpt), 2013 11th International Symposium on}.\hskip 1em
  plus 0.5em minus 0.4em\relax IEEE, 2013, pp. 119--124.

\end{thebibliography}
\end{document}